\newif\ifdraft \draftfalse \newif\iffull \fulltrue
\newcommand{\citet}{\cite}
\newcommand{\citep}{\cite}
\newcommand{\myheading}[1]{\paragraph*{\mbox{\textbf{#1}}}}
\newcommand{\SYSTEM}{\mbox{\textsc{Ellora}}\xspace}
\newcommand{\EasyCrypt}{\textsc{EasyCrypt}\xspace}
\newcommand{\Sahl}{\textsc{aHL}\xspace}
\newcommand{\Spgcl}{\textsc{pGCL}\xspace}
\newcommand{\Sppdl}{\textsc{PPDL}\xspace}
\newcommand{\Sli}{\textsc{IL}\xspace}
\newcommand{\gb}[1]{\ifdraft\todo[author=Gilles, inline]{#1}\fi}
\newcommand{\jh}[1]{\ifdraft\todo[author=Justin, inline]{#1}\fi}
\newcommand{\rname}[1]{[\textsc{#1}]}
\let\xinferrule\inferrule
\renewcommand{\inferrule}[3][]{%
  \ifx&#1&
    \xinferrule*{#2}{#3}
  \else
    \xinferrule*[right=\rname{#1}]{#2}{#3}
  \fi}
\newcommand{\NN}{\mathbb{N}}
\newcommand{\NNinf}{\mathbb{N}^\infty}
\newcommand{\FF}{\mathbb{F}}
\newcommand{\Exp}{\mathbb{E}}
\newcommand{\ExpD}{\mathbb{E}}
\newcommand{\dnull}[1][]{{{\mathbb{0}}^{#1}}}
\newcommand{\dunit}[2][]{{{\delta}^{#1}_{#2}}}
\newcommand{\dlet}[3]{\ExpD_{{#1} \sim {#2}} [{#3}]}
\newcommand{\dslet}[2]{\ExpD_{#1} [{#2}]}
\newcommand{\drestr}[2]{{#1}_{| {#2}}}
\newcommand{\binD}{\mathrm{B}}
\newcommand{\detE}{\mathsf{det}}
\newcommand{\carac}[1]{\mathbf{1}_{#1}}
\newcommand{\pWhile}{\textsc{pWhile}\xspace}
\newcommand{\wt}[1]{|#1|}
\DeclareMathOperator{\supp}{supp}
\DeclareMathOperator{\MV}{mod}
\newcommand{\AProc}{\mathcal{A}}
\newcommand{\IProc}{\mathcal{I}}
\newcommand{\Vars}{\ensuremath{\mathrm{FV}}}
\newcommand{\Ops}{\ensuremath{\mathbf{Ops}}}
\newcommand{\Dist}{\ensuremath{\mathbf{SDist}}}
\newcommand{\pr}[2]{\dslet{#2}{#1}}
\newcommand{\ex}[1]{\pr{#1}{}}
\newcommand{\detm}[1]{\square{#1}}
\newcommand{\llass}{\mathcal{L}}
\newcommand{\cond}[2]{{#1}_{| #2}}
\newcommand{\True}{\ensuremath{\mathop{\top}}}
\newcommand{\form}{\pasr}
\newcommand{\lexp}{\tilde{e}}
\newcommand{\lasr}{\phi}
\newcommand{\pexp}{p}
\newcommand{\pasr}{\eta}
\newcommand{\state}{\mu}
\newcommand{\subst}[2]{[#1 := #2]}
\newcommand{\bernD}{\mathbf{Bern}}
\newcommand{\geomD}{\mathbf{Geom}}
\newcommand{\unifD}{\mathbf{Unif}}
\DeclareMathOperator{\indep}{\mbox{\bf \#}}
\newcommand{\Var}{\mathcal{X}}
\newcommand{\AVar}{\mathfrak{s}}
\newcommand{\VarL}[1][]{\mathcal{X}^{\mathfrak{L}}_{#1}}
\newcommand{\Mem}{\mathbf{State}}
\newcommand{\denot}[1]{\llbracket #1 \rrbracket}
\newcommand{\dsem}[2]{\denot{#2}_{#1}}
\newcommand{\gdenot}[3]{\denot{#1}^{#2}_{#3}}
\newcommand{\pdenot}[2][\mu]{\denot{#2}^{\rho}_{#1}}
\newcommand{\mdenot}[2][m]{\denot{#2}^{\rho}_{#1}}
\newcommand\bs[1]{\{0,1\}^{#1}}
\newcommand{\kw}[1]{\mathbf{#1}}
\newcommand{\kif}{\kw{if}}
\newcommand{\kthen}{\kw{then}}
\newcommand{\kelse}{\kw{else}}
\newcommand{\kwhile}{\kw{while}}
\newcommand{\kfor}{\kw{for}}
\newcommand{\kdo}{\kw{do}}
\newcommand\hoare[3]{\{#1\}\;#2\;\{#3\}}
\newcommand\ubhl[4]{\models_{#1} \{#2\}\;{#3}\;\{#4\}}
\newcommand{\indhoare}[3]{\left\{#1\right\}\;#2\;\left\{#3\right\}}
\newcommand{\follows}[2]{#1 \sim #2}
\newcommand{\rnd}{%
  \stackrel{\raisebox{-.15ex}[.25ex]{\tiny %
  $\mathdollar$}}{\raisebox{-.2ex}[.2ex]{$\leftarrow$}}}
\newcommand{\asn}{%
  \stackrel{}{\raisebox{-.1ex}[.2ex]{$\leftarrow$}}}
\newcommand{\ifstmt}[3]{\kif\: #1\: \kthen\: #2\: \kelse\: #3}
\newcommand{\ifte}[3]{\kif\: #1\: \kthen\: #2\: \kelse\: #3}
\newcommand{\ift}[2]{\kif\: #1\: \kthen\: #2}
\newcommand{\while}[2]{\kwhile\: #1 \: \kdo \: #2}
\newcommand{\call}[3]{{#1} \asn {#2}({#3})}
\newcommand{\ind}[1]{\mathbf{1}_{#1}}
\newcommand{\skp}{\kw{skip}}
\newcommand{\abort}{\kw{abort}}
\newcommand{\Samp}[2]{\mathcal{P}_{#1}^{#2}}
\DeclareMathOperator{\prem}{\text{pe}}
\DeclareMathOperator{\wpre}{\text{pc}}
\newcommand{\expr}{\mathcal{E}}
\newcommand{\dexpr}{\mathcal{D}}
\newcommand{\farg}[1]{{#1}_{\kw{arg}}}
\newcommand{\fbody}[1]{{#1}_{\kw{body}}}
\newcommand{\fret}[1]{{#1}_{\kw{res}}}
\newcommand{\aora}[1]{{#1}_{\kw{ocl}}}
\newcommand{\uclosed}[1]{\mathsf{uclosed}(#1)}
\newcommand{\tclosed}[1]{\mathsf{tclosed}(#1)}
\newcommand{\dclosed}[1]{\mathsf{dclosed}(#1)}
\newcommand{\sidecond}[1]{\mathcal{C}_{\mbox{{\small #1}}}}
\newcommand{\kdclosed}{\mbox{$d$-closed}\xspace}
\newcommand{\dsvalid}[2]{#1(#2)}
\newcommand{\eqdef}{\mathrel{\stackrel{\scriptscriptstyle \triangle}{=}}}
\newcommand{\assn}{\mathsf{Assn}}
\newcommand{\aindep}[2]{\mathsf{separated}({#1},{#2})}
\newcommand{\Small}{\fontsize{8.2pt}{8.4pt}\selectfont}
\newcommand*{\LSTfont}{\Small\ttfamily\SetTracking{encoding=*}{-60}\lsstyle}
\newcommand{\lstrnd}{\stackrel{\raisebox{-.15ex}{\ensuremath{\scriptscriptstyle\$}}}{\raisebox{-.2ex}{\ensuremath{\leftarrow}}}}
\newcommand{\lstt}[1]{\mbox{\LSTfont #1}}
\newcommand{\lstiny}{\fontsize{6.6pt}{6.8pt}\selectfont}
\newcommand{\lstts}[1]{\mbox{\lstiny\ttfamily\SetTracking{encoding=*}{-60}\lsstyle #1}}
\newtheorem{thm}{Theorem}
\newtheorem{lem}[thm]{Lemma}
\newtheorem{prop}[thm]{Proposition}
\crefname{section}{\S}{\S}
\Crefname{section}{\S}{\S}
\crefname{prop}{Proposition}{Propositions}
\Crefname{prop}{Proposition}{Propositions}
\crefname{lem}{Lemma}{Lemmas}
\Crefname{lem}{Lemma}{Lemmas}
\crefname{thm}{Theorem}{Theorems}
\Crefname{thm}{Theorem}{Theorems}
\begin{document} 
\title{An Assertion-Based Program Logic \\ for Probabilistic
Programs\thanks{This is the \iffull full \else conference \fi version of the paper.}}

\author{Gilles Barthe \and Thomas Espitau \and Marco Gaboardi \\
  Benjamin Gr\'egoire \and Justin Hsu \and Pierre-Yves Strub}

\institute {IMDEA Software Institute, Spain
  \and Université Paris 6, France
  \and University at Buffalo, SUNY, USA
  \and Inria Sophia Antipolis--Méditerranée, France
  \and University College London, UK
  \and École Polytechnique, France}

\maketitle
\begin{abstract}
  \iffull
Research on deductive verification of probabilistic programs has considered
expectation-based logics, where pre- and post-conditions are
real-valued functions on states, and assertion-based logics, where pre- and
post-conditions are boolean predicates on state distributions. Both approaches
have developed over nearly four decades, but they have different standings
today.  Expectation-based systems have managed to formalize many sophisticated
case studies, while assertion-based systems today have more limited
expressivity and have targeted simpler examples.
\fi

We present \SYSTEM, a sound and relatively complete assertion-based
program logic, and demonstrate its expressivity by verifying several
classical examples of randomized algorithms using an implementation
in the \EasyCrypt proof assistant. \SYSTEM features new proof rules for
loops and adversarial code, and supports richer assertions than existing program logics.
We also show that \SYSTEM allows convenient reasoning about complex
probabilistic concepts by developing a new program logic for probabilistic
independence and distribution law, and then smoothly
embedding it into \SYSTEM.
\iffull
Our work demonstrates that the
assertion-based approach is not fundamentally limited and suggests
that some notions are potentially easier to reason about in
assertion-based systems.
\fi
\end{abstract}

\section{Introduction}
The most mature systems for deductive verification of randomized algorithms are
\emph{expectation-based} techniques; seminal examples include \Sppdl
\citep{Kozen85} and \Spgcl \citep{Morgan96}.  These approaches reason about
\emph{expectations}, functions $E$ from states to real numbers,\footnote{%
  Treating a program as a function from input states $s$ to output distributions
$\mu(s)$, the expected value of $E$ on $\mu(s)$ is an expectation.}
propagating them backwards through a program until they are transformed into a
mathematical function of the input. Expectation-based systems are both
theoretically elegant
\citep{KaminskiKMO16,gretz2014operational,olmedo2016reasoning,kaminski2016inferring}
and practically useful; implementations have verified numerous randomized
algorithms \citet{Hurd03,HurdMM05}.  However, properties involving
multiple probabilities or expected values can be cumbersome to verify---each
expectation must be analyzed separately.

An alternative approach envisioned by Ramshaw \citet{Ramshaw79} is to work with
predicates over distributions. A direct
comparison with expectation-based techniques is difficult, as the approaches are quite
different. In broad strokes, assertion-based systems can verify richer
properties in one shot and have specifications that are arguably more
intuitive, especially for reasoning about loops, while
expectation-based approaches can transform expectations mechanically
and can reason about non-determinism.  However, the comparison is not
very meaningful for an even simpler reason: existing assertion-based
systems such as \citep{ChadhaCMS07,Hartog:thesis,RandZ15} are not as
well developed as their expectation-based counterparts.
\begin{description}
  \item[Restrictive assertions.]
    Existing probabilistic program logics do not support reasoning about
    expected values, only probabilities. As a result, many properties about
    average-case behavior are not even expressible.
  \item[Inconvenient reasoning for loops.]
    The Hoare logic rule for
    deterministic loops does not directly generalize to probabilistic
    programs. Existing assertion-based systems either forbid
    loops, or impose complex semantic side conditions to
    control which assertions can be used as loop invariants. Such side
    conditions are restrictive and difficult to establish.
  \item[No support for external or adversarial code.]
    A strength of expectation-based techniques is reasoning
    about programs combining probabilities and
    \emph{non-determinism}. In contrast, Morgan and
    McIver~\cite{McIverM05} argue that assertion-based techniques
    cannot support compositional reasoning for such a combination. For
    many applications, including cryptography, we would still like to
    reason about a commonly-encountered special case: programs using
    external or adversarial code. Many security properties in
    cryptography boil down to analyzing such programs, but existing
    program logics do not support adversarial code.
  \item[Few concrete implementations.]
    There are by now several
    independent implementations of expectation-based techniques,
    capable of verifying interesting probabilistic programs. In
    contrast, there are only scattered implementations of
    probabilistic program logics.
\end{description}
These limitations raise two points. Compared to expectation-based
approaches:
\begin{enumerate}
\item Can assertion-based approaches achieve similar expressivity?
\item Are there situations where assertion-based approaches are more suitable?
\end{enumerate}
In this paper, we give positive evidence for both of these
points.\footnote{Note that we do not give mathematically precise
  formulations of these points; as we are interested in the practical
  verification of probabilistic programs, a purely theoretical answer
  would not address our concerns.}
Towards the first point, we give a
new assertion-based logic \SYSTEM for probabilistic programs,
overcoming limitations in existing probabilistic program
logics. \SYSTEM supports a rich set of assertions that can express
concepts like expected values and probabilistic independence, and
novel proof rules for verifying loops and adversarial code.  We prove
that \SYSTEM is sound and relatively complete.

Towards the second point, we evaluate \SYSTEM in two ways. First, we
define a new logic for proving probabilistic independence and
distribution law properties---which are difficult to capture with
expectation-based approaches---and then embed it into \SYSTEM. This
sub-logic is more narrowly focused than \SYSTEM, but supports more
concise reasoning for the target assertions. Our embedding
demonstrates that the assertion-based approach can be flexibly
integrated with intuitive, special-purpose reasoning principles.  To
further support this claim, we also provide an embedding of the Union
Bound logic, a program logic for reasoning about accuracy
bounds~\cite{BartheGGHS16-icalp}. Then, we
develop a full-featured implementation of \SYSTEM in the \EasyCrypt
theorem prover and exercise the logic by mechanically verifying a
series of complex randomized algorithms. Our results suggest that the
assertion-based approach can indeed be practically viable.

\myheading{Abstract logic.}
To ease the presentation, we present \SYSTEM in two stages.  First, we
consider an abstract version of the logic where assertions are general
predicates over distributions, with no compact syntax. Our abstract
logic makes two contributions: reasoning for loops, and for
adversarial code.

\paragraph*{Reasoning about Loops.}
Proving a property of a probabilistic loop typically requires establishing a
loop invariant, but the class of loop invariants that can be soundly used
depends on the termination behavior---stronger termination assumptions allows
richer loop invariants. We identify three classes of assertions
that can be used for reasoning about probabilistic loops, and provide a proof
rule for each one:
\begin{itemize}
\item arbitrary assertions for \emph{certainly terminating} loops,
  i.e.\ loops that terminate in a finite amount of iterations;
\item \emph{topologically closed} assertions for \emph{almost surely}
  terminating loops, i.e.\ loops terminating with probability $1$;
\item \emph{downwards closed} assertions for arbitrary loops.
\end{itemize}
The definition of topologically closed assertion is reminiscent of Ramshaw
\citet{Ramshaw79}; the stronger notion of downwards closed assertion appears to
be new.

Besides broadening the class of loops that can be analyzed, our rules
often enable simpler proofs. For instance, if the loop is certainly
terminating, then there is no need to prove semantic side-conditions.
Likewise, there is no need to consider the termination behavior of the
loop when the invariant is downwards and topologically closed. For
example, in many applications in cryptography, the target property is
that a \lq\lq bad\rq\rq\ event has low probability: $\Pr{[E]} \leq
k$. In our framework this assertion is downwards and topologically
closed, so it can be a loop invariant regardless of the termination
behavior.

\paragraph*{Reasoning about Adversaries.}
Existing assertion-based logics cannot reason about probabilistic programs with
\emph{adversarial} code.
\emph{Adversaries} are special probabilistic procedures consisting of an interface
listing the concrete procedures that an adversary can call (\emph{oracles}),
along with restrictions like how many calls an
adversary may make. Adversaries are useful in cryptography, where
security notions are described using experiments in which
adversaries interact with a challenger, and in game theory and
mechanism design, where adversaries can represent strategic
agents. Adversaries can also model inputs to \emph{online} algorithms.

We provide proof rules for reasoning about adversary calls. Our rules
are significantly more general than previously considered rules for
reasoning about adversaries. For instance, the rule for adversary used
by~\citet{BartheGGHS16-icalp} is restricted to adversaries that cannot
make oracle calls.

\paragraph*{Metatheory.}
We show soundness and relative completeness of the core abstract logic, with
mechanized proofs in the \textsc{Coq} proof assistant.\footnote{%
The formalization is available at \url{https://github.com/strub/xhl}.}

\myheading{Concrete logic.}

While the abstract logic is conceptually clean, it is inconvenient
for practical formal verification---the
assertions are too general and the rules involve
semantic side-conditions. To address these issues,
we flesh out a concrete version of \SYSTEM.
Assertions are described by a grammar modeling a two-level assertion
language. The first level contains state predicates---deterministic assertions
about a single memory---while the second layer contains probabilistic predicates
constructed from probabilities and expected values over discrete distributions.
While the concrete assertions are theoretically less expressive than their
counterparts in the abstract logic, they can already encode
common properties and notions from existing proofs, like probabilities,
expected values, distribution laws and probabilistic independence. Our
assertions can express theorems from probability theory, enabling
sophisticated reasoning about probabilistic concepts.

Furthermore, we leverage the concrete syntax to simplify verification.
\begin{itemize}
  \item We develop an automated procedure for generating pre-conditions of
    non-looping commands, inspired by expectation-based systems.
  \item We give syntactic conditions for the closedness and termination
    properties required for soundness of the loop rules.
\end{itemize}

\myheading{Implementation and case studies.}  We implement \SYSTEM on
top of \EasyCrypt, a general-purpose proof assistant for reasoning
about probabilistic programs, and we mechanically verify a diverse
collection of examples including textbook algorithms and a randomized
routing procedure. We develop an \EasyCrypt formalization of
probability theory from the ground up, including tools like
concentration bounds (e.g., the Chernoff bound), Markov's inequality,
and theorems about probabilistic independence.

\myheading{Embeddings.}
We propose a simple program logic for proving \emph{probabilistic independence}.  This
logic is designed to reason about independence in a lightweight way,
as is common in paper proofs. We prove that the logic can be embedded
into \SYSTEM, and is therefore sound.
Furthermore, we prove an embedding of the Union Bound
logic~\cite{BartheGGHS16-icalp}.


\section{Mathematical Preliminaries}

As is standard, we will model randomized computations using \emph{sub-distributions}.

\begin{definition}
  A \emph{sub-distribution} over a set $A$ is defined by a mass function
  $\mu : A \to [0,1]$ that gives the probability of the unitary events $a \in
  A$. This mass function must be s.t.  $\sum_{a \in A} \mu(a)$ is well-defined
  and
  $\wt{\mu} \eqdef \sum_{a\in A} \mu(a) \leq 1$.
  In particular, the \emph{support}
  $\supp(\mu) \eqdef \{ a \in A \mid \mu(a) \neq 0 \}$
  is discrete.\footnote{%
    We work with discrete distributions to keep
    measure-theoretic technicalities to a minimum, though we do not see
  obstacles to generalizing to the continuous setting.}
  The name ``sub-distribution'' emphasizes that the total probability may be
  strictly less than $1$.
  When the \emph{weight} $\wt{\mu}$ is equal to $1$, we call $\mu$ a
  \emph{distribution}.  We let $\Dist(A)$ denote the set of
  sub-distributions over $A$.
  The probability of an event $E(x)$ w.r.t. a sub-distribution $\mu$,
  written $\Pr_{x \sim \mu} [E(x)]$, is defined as
  $\sum_{x \in A \mid E(x)} \mu(x)$.
\end{definition}

Simple examples of sub-distributions include the \emph{null sub-distribution}
$\mathbf{0}$, which maps each element of the underlying space to $0$; and the
\emph{Dirac distribution centered on $x$}, written $\dunit{x}$, which maps $x$
to $1$ and all other elements to $0$. The following standard
construction gives a monadic structure to sub-distributions.

\begin{definition}
  Let $\mu \in \Dist(A)$ and $f : A \to \Dist(B)$. Then
  $\dlet {a} {\mu} {f} \in \Dist(B)$ is defined by
  \[
    \dlet{a} {\mu} {f} (b) \eqdef \sum_{a \in A} \mu(a) \cdot f(a)(b) .
  \]
  We use notation reminiscent of expected values, as the definition is quite
  similar.
\end{definition}

We will need two constructions to model branching statements.

\begin{definition}
Let $\mu_1,\mu_2\in\Dist(A)$ such that $\wt{\mu_1}+\wt{\mu_2}\leq
1$. Then $\mu_1+\mu_2$ is the sub-distribution $\mu$ such that
$\mu(a)=\mu_1(a)+\mu_2(a)$ for every $a\in A$.
\end{definition}
\begin{definition}
Let $E \subseteq A$ and $\mu \in \Dist(A)$. Then the restriction
$\drestr \mu E$ of $\mu$ to $E$ is the sub-distribution
such that $\drestr \mu E (a)= \mu(a)$ if $a\in E$ and 0 otherwise. 
\end{definition}

Sub-distributions are partially ordered under the pointwise order.

\begin{definition}
  Let $\mu_1,\mu_2\in\Dist(A)$. We say $\mu_1\leq \mu_2$ if $\mu_1(a) \leq
  \mu_2(a)$ for every $a\in A$, and we say $\mu_1 = \mu_2$ if
  $\mu_1(a) = \mu_2(a)$ for every $a\in A$.
\end{definition}
We use the following lemma when reasoning about the semantics of loops.
\begin{lemma}\label{lem:less:eq:distr}
  If $\mu_1\leq\mu_2$ and $\wt{\mu_1}=1$, then $\mu_1=\mu_2$ and $\wt{\mu_2}=1$.
\end{lemma}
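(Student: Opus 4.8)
The plan is to exploit monotonicity of countable sums together with the fact that, by definition, every sub-distribution has weight at most $1$. First I would note that $\mu_1 \le \mu_2$ means $\mu_1(a) \le \mu_2(a)$ for every $a \in A$; since all the terms are nonnegative and both series $\sum_a \mu_1(a)$, $\sum_a \mu_2(a)$ are well-defined (part of the definition of a sub-distribution), termwise comparison gives $\wt{\mu_1} = \sum_{a} \mu_1(a) \le \sum_{a} \mu_2(a) = \wt{\mu_2}$. Because $\mu_2 \in \Dist(A)$ we also have $\wt{\mu_2} \le 1$. Chaining these with the hypothesis $\wt{\mu_1} = 1$ yields $1 = \wt{\mu_1} \le \wt{\mu_2} \le 1$, so $\wt{\mu_2} = 1$; this already gives the second conclusion.

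For the first conclusion, I would look at the pointwise difference $\delta(a) \eqdef \mu_2(a) - \mu_1(a)$, which is nonnegative everywhere by $\mu_1 \le \mu_2$. Summing, $\sum_a \delta(a) = \wt{\mu_2} - \wt{\mu_1} = 1 - 1 = 0$ (the split of the sum is legitimate since both series converge). A sum of nonnegative reals that equals $0$ must have every term equal to $0$, hence $\delta(a) = 0$, i.e. $\mu_1(a) = \mu_2(a)$, for all $a \in A$. By the definition of equality of sub-distributions this is exactly $\mu_1 = \mu_2$.

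The only point requiring any care is the manipulation of the (possibly infinite) sums: I need absolute convergence of $\sum_a \mu_1(a)$ and $\sum_a \mu_2(a)$ to justify both the monotone comparison step and the termwise subtraction $\sum_a(\mu_2(a)-\mu_1(a)) = \sum_a \mu_2(a) - \sum_a \mu_1(a)$. This is guaranteed by the sub-distribution definition, which stipulates that $\sum_{a} \mu(a)$ is well-defined; for series of nonnegative terms this is automatic. Once that is observed, every step is routine, so I do not anticipate any real obstacle here — the lemma is essentially a bookkeeping fact about nonnegative series.
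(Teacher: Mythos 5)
Your proof is correct: the paper states this lemma without proof (it is a routine preliminary), and your argument—weight monotonicity forcing $1=\wt{\mu_1}\le\wt{\mu_2}\le 1$, then pointwise vanishing of the nonnegative difference whose total sum is $0$—is exactly the standard bookkeeping the paper implicitly relies on. The care you note about splitting the series is justified precisely as you say, since both are convergent series of nonnegative terms bounded by $1$.
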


Sub-distributions are stable under pointwise-limits.

\begin{definition}
  A sequence $(\mu_n)_{n\in\NN} \in\Dist(A)$ sub-distributions
  \emph{converges} if for every $a \in A$, the sequence
  $(\mu_n(a))_{n\in\NN}$ of real numbers converges. The \emph{limit
  sub-distribution} is defined as
  \[
    \mu_\infty(a) \eqdef \lim_{n \to \infty} \mu_n(a)
  \]
  for every $a \in A$.  We write $\lim_{n\rightarrow\infty} \mu_n$ for $\mu_\infty$.
\end{definition}

\begin{lemma}
  \label{lem:limitProb}
  Let $(\mu_n)_{n\in\NN}$ be a convergent sequence of sub-distributions. Then
  for any event $E(x)$, we have:
  \[
    \forall n\in \NN.\,  \Pr_{x \sim 
    \mu_\infty} [E(x)] = \lim_{n \to \infty} \Pr_{x \sim \mu_n} [E(x)].
  \]
\end{lemma}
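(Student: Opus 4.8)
The plan is to unfold the definition of $\Pr$ so that the statement becomes an interchange of a limit with a countable sum, and then to dispatch the two inequalities separately. By definition, $\Pr_{x\sim\mu}[E(x)] = \sum_{x\in A\mid E(x)}\mu(x)$. Moreover, if $x\notin\supp(\mu_n)$ for every $n$ then $\mu_n(x)=0$ for all $n$, hence $\mu_\infty(x)=0$; so the support of $\mu_\infty$ is contained in the countable set $\bigcup_n\supp(\mu_n)$, and every sum in sight ranges over one fixed countable index set. Writing $f_n(x) := \mu_n(x)$ when $E(x)$ holds and $f_n(x):=0$ otherwise, the claim is exactly $\sum_x \lim_{n\to\infty} f_n(x) = \lim_{n\to\infty}\sum_x f_n(x)$ for this nonnegative, pointwise-convergent family.

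For the inequality $\leq$ I would use a finite-truncation (Fatou) argument, which needs nothing beyond pointwise convergence: for any finite $F\subseteq\{x\mid E(x)\}$, continuity of finite sums gives $\sum_{x\in F}\mu_\infty(x) = \lim_n\sum_{x\in F}\mu_n(x) \leq \liminf_n \Pr_{x\sim\mu_n}[E(x)]$, since $\sum_{x\in F}\mu_n(x)\leq\Pr_{x\sim\mu_n}[E(x)]$ for every $n$. Taking the supremum over all finite $F$, and using that $\Pr_{x\sim\mu_\infty}[E(x)] = \sup_F\sum_{x\in F}\mu_\infty(x)$, yields $\Pr_{x\sim\mu_\infty}[E(x)] \leq \liminf_n\Pr_{x\sim\mu_n}[E(x)]$.

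The reverse inequality, $\limsup_n\Pr_{x\sim\mu_n}[E(x)] \leq \Pr_{x\sim\mu_\infty}[E(x)]$, is the step I expect to be the crux, because it fails for arbitrary convergent sequences — mass can escape, e.g.\ $\mu_n = \dunit{n}$ on $\NN$ converges pointwise to $\mathbf{0}$ while $\Pr_{x\sim\mu_n}[E(x)]=1$ for every $n$ when $E$ is the everywhere-true event. It does hold for the sequences to which the lemma is applied: the finite-unrolling approximants of a loop form an increasing chain, so $\mu_n\leq\mu_\infty$ pointwise (cf.\ \Cref{lem:less:eq:distr}), hence $\Pr_{x\sim\mu_n}[E(x)]\leq\Pr_{x\sim\mu_\infty}[E(x)]$ for every $n$, which gives the bound immediately; equivalently, one can invoke the monotone convergence theorem on the increasing nonnegative sequence $(f_n)$. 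Combining the two inequalities yields the desired equality. (In fact, if one only needs the lemma under the monotonicity that the loop application supplies, the Fatou step can be dropped and monotone convergence used directly.)
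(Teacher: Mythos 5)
Your analysis is essentially right, and the honest summary is that there is nothing in the paper to compare it to: \cref{lem:limitProb} is stated without an in-text proof (the metatheory is deferred to the Coq development), and as literally printed---with only pointwise convergence of $(\mu_n)_{n\in\NN}$ as a hypothesis---it is false exactly for the reason you give. Your counterexample $\mu_n = \dunit{n}$ on $\NN$ converges pointwise to $\mathbf{0}$ while $\Pr_{x\sim\mu_n}[\top]=1$ for all $n$, so only the Fatou-type inequality $\Pr_{x\sim\mu_\infty}[E] \leq \liminf_n \Pr_{x\sim\mu_n}[E]$ can hold unconditionally, and your finite-truncation argument for it is correct (the stray ``$\forall n\in\NN$'' in the statement is evidently a typo, and the countability bookkeeping via $\supp(\mu_\infty)\subseteq\bigcup_n\supp(\mu_n)$ is fine, though not even needed if one defines the sums as suprema over finite subsets).

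Your diagnosis of where the missing hypothesis comes from also matches the paper's intended use. The lemma sits next to \cref{lem:less:eq:distr} and \cref{lem:lim:distr}, and the loop semantics only ever takes limits of the increasing lower approximations $\dsem{\state}{(\ift{e}{s})^n;\ift{e}{\abort}}$; \cref{lem:lim:distr} already records $\Pr_{x\sim\mu_n}[E]\leq\Pr_{x\sim\mu_\infty}[E]$ for increasing chains, which is precisely your missing $\limsup$ bound, so your monotone-convergence completion is the intended argument in that setting (a domination hypothesis $\mu_n\leq\nu$ for a fixed sub-distribution $\nu$ would do equally well). Two further remarks. First, the general (non-monotone) form of the lemma is implicitly leaned on elsewhere, e.g.\ in the claim that equality comparisons of bounded probabilistic expressions are $t$-closed and in the weight-$1$ lemma about plain approximations, so flagging the missing hypothesis is genuinely useful and not just pedantry about an edge case. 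Second, note that for the assertions the paper highlights as robust invariants, such as $\Pr[E]\leq k$, your unconditional Fatou direction alone already gives $t$-closedness, so part of the development survives even without repairing the statement.
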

Any bounded increasing real sequence has a limit; the same is true of
sub-distributions.
\begin{lemma}\label{lem:lim:distr}
  Let $(\mu_n)_{n\in\NN} \in\Dist(A)$ be an increasing sequence of
  sub-distributions. Then, this sequence converges to $\mu_\infty$
  and $\mu_n \leq \mu_\infty$ for every $n\in\NN$. In
  particular, for any event $E$, we have $\Pr_{x \sim \mu_n} [E]\leq
  \Pr_{x \sim \mu_\infty} [E]$ for every $n\in \NN$.
\end{lemma}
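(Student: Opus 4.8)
The plan is to reduce everything to the monotone convergence theorem for bounded real sequences. First I would fix $a \in A$ and observe that $(\mu_n(a))_{n\in\NN}$ is a non-decreasing sequence contained in $[0,1]$: it is non-decreasing because $\mu_n \leq \mu_{n+1}$ by hypothesis, and bounded above by $1$ since each $\mu_n$ is a sub-distribution. Hence it converges, with limit $\mu_\infty(a) \eqdef \sup_{n\in\NN} \mu_n(a) \in [0,1]$. By the definition of convergence for sub-distributions, this already shows that $(\mu_n)_{n\in\NN}$ converges and pins down the candidate limit function $\mu_\infty : A \to [0,1]$.

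Next I would verify that $\mu_\infty$ is genuinely a sub-distribution, i.e.\ that $\sum_{a \in A} \mu_\infty(a)$ is well-defined and at most $1$. This is the one step requiring care, since it involves exchanging a limit with a potentially infinite sum. I would argue via finite truncations: for any finite $F \subseteq A$, the equality $\sum_{a \in F} \mu_\infty(a) = \lim_{n\to\infty} \sum_{a \in F} \mu_n(a)$ holds because a finite sum of convergent real sequences converges to the sum of the limits, and each $\sum_{a\in F}\mu_n(a) \leq \wt{\mu_n} \leq 1$, so $\sum_{a\in F}\mu_\infty(a) \leq 1$. Taking the supremum over all finite $F \subseteq A$ gives $\sum_{a \in A} \mu_\infty(a) \leq 1$; hence $\mu_\infty \in \Dist(A)$ and it coincides with the limit sub-distribution of the preceding definition.

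The remaining assertions are then immediate. For every $n \in \NN$ and every $a \in A$ we have $\mu_n(a) \leq \sup_{m\in\NN} \mu_m(a) = \mu_\infty(a)$, so $\mu_n \leq \mu_\infty$. Finally, for any event $E$, monotonicity of (possibly infinite) sums of non-negative terms gives $\Pr_{x \sim \mu_n}[E] = \sum_{x \mid E(x)} \mu_n(x) \leq \sum_{x \mid E(x)} \mu_\infty(x) = \Pr_{x \sim \mu_\infty}[E]$; alternatively this follows by combining \cref{lem:limitProb} with the fact that a limit of a non-decreasing real sequence dominates each of its terms. The only genuine obstacle is the interchange of limit and summation needed to show $\wt{\mu_\infty} \leq 1$; everything else is elementary monotone-convergence bookkeeping.
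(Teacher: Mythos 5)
Your proof is correct: pointwise monotone convergence in $[0,1]$ gives the limit, the finite-truncation argument is exactly the right way to justify $\wt{\mu_\infty}\leq 1$ (since for non-negative families the total mass is the supremum of finite partial sums), and the remaining inequalities follow by monotonicity. The paper states this lemma without proof (its metatheory is mechanized in Coq), and your argument is the standard one it implicitly relies on, so there is nothing further to reconcile.
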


\section{Programs and Assertions}\label{sec:programs}

Now, we introduce our core programming language and its denotational semantics.

\paragraph*{Programs.}
We base our development on \pWhile, a strongly-typed imperative
language with deterministic assignments, probabilistic assignments,
conditionals, loops, and an $\abort$ statement which halts the
computation with no result. Probabilistic assignments
$x \rnd g$ assign a value sampled from a distribution $g$ to a program
variable $x$. The syntax of statements is defined by the grammar:
\begin{align*}
  s &::= \skp
           \mid \abort
           \mid x \asn e
           \mid x \rnd g 
           \mid s; s \\
           &\mid \ifstmt{e}{s}{s}
           \mid \while{e}{s}
           \mid \call{x}{\IProc}{e}
           \mid \call{x}{\AProc}{e}
\end{align*}
where $x$, $e$, and $g$ range over typed variables in $\Var$,
expressions in $\expr$ and distribution expressions in $\dexpr$
respectively. The set $\expr$ of well-typed expressions is defined
inductively from $\Var$ and a set $\mathcal{F}$ of
function symbols, while the set $\dexpr$ of well-typed distribution
expressions is defined by combining a set of distribution symbols
$\mathcal{S}$ with expressions in $\mathcal{E}$. Programs may call a set
$\IProc$ of internal procedures as well as a set $\AProc$ of external
procedures. We assume that we have code for internal procedures but
not for external procedures---we only know indirect information,
like which internal procedures they may call.
Borrowing a convention from cryptography, we call internal
procedures \emph{oracles} and external procedures \emph{adversaries}.

\paragraph*{Semantics.}

\begin{figure*}[t]
\begin{align*}
  \dsem{m}{\skp} &= \dunit{m} \\
  \dsem{m}{\abort} &= \mathbf{0} \\
  \dsem{m}{x \asn e} &= \dunit{m\subst{x}{\dsem{m}{e}}} \\
  \dsem{m}{x \rnd g} &= \dlet v {\dsem{m}{g}} {\dunit{m[x:=v]}} \\
  \dsem{m}{s_1; s_2} &= \dlet {m'} {\dsem{m}{s_1}} {\dsem{m'}{s_2}}  \\
  \dsem{m}{\ifte{e}{s_1}{s_2}} &=
    \text{if $\dsem{m}{e}$ then $\dsem{m}{s_1}$ else $\dsem{m}{s_2}$} \\
  \dsem{m}{\while{e}{s}} &=
    \lim_{n \to \infty}\ \dsem{m}{(\ift e s)^n;\ift e \abort} \\
  \dsem{m}{\call{x}{\IProc}{e}} &=
    \dsem{m}{\farg{f} \asn e; \fbody{f}; x \asn \fret{f}} \\
  \dsem{m}{\call{x}{\AProc}{e}} &=
    \dsem{m}{\farg{a} \asn e; \fbody{a}; x \asn \fret{a}}
    \\ \\ \hline \\
  \dsem{\state}{s} &= \dlet m \state {\dsem{m}{s}}
\end{align*} 

\caption{\label{fig:semantics} Denotational semantics of programs}
\end{figure*}

The denotational semantics of programs is adapted from the seminal
work of \citet{Kozen79} and interprets programs as sub-distribution
transformers. We view states as type-preserving mappings from
variables to values; we write $\Mem$ for the set of states and
$\Dist(\Mem)$ for the set of probabilistic states. For each procedure
name $f \in \IProc \cup \AProc$, we assume a set
$\VarL[f] \subseteq \Var$ of \emph{local variables} s.t.
$\VarL[f]$ are pairwise disjoint. The other variables
$\Var \setminus \bigcup_f \VarL[f]$ are \emph{global variables}.

To define the interpretation of expressions and distribution
expressions, we let $\dsem{m}{e}$ denote the interpretation of
expression $e$ with respect to state $m$, and $\dsem{\state}{e}$
denote the interpretation of expression $e$ with respect to an initial
sub-distribution $\state$ over states defined by the clause
$\dsem{\state}{e}\eqdef \dlet m \state {\dsem{m}{e}}$. Likewise,
we define the semantics of commands in two stages: first interpreted
in a single input memory, then interpreted in an input
sub-distribution over memories.
\begin{definition}
  The semantics of commands are given in \cref{fig:semantics}.
  \begin{itemize}
  \item The semantics $\dsem{m}{s}$ of a statement $s$ in initial state $m$ is a
    sub-distribution over states.
  \item The (lifted) semantics $\dsem{\state}{s}$ of a statement $s$
    in initial sub-distribution $\state$ over states is a
    sub-distribution over states.
  \end{itemize}
\end{definition}
We briefly comment on loops. The semantics of a loop $\while{e}{c}$ is defined
as the limit of its lower approximations, where the $n$-th \emph{lower
approximation} of $\dsem{\state}{\while{e}{c}}$ is $\dsem{\state}{(\ift e
s)^n;\ift e \abort}$, where $\ift{e}{s}$ is shorthand for $\ifte{e}{s}{\skp}$
and $c^n$ is the $n$-fold composition $c;\cdots;c$.  Since the sequence is
increasing, the limit is well-defined by \cref{lem:lim:distr}.  In contrast, the
$n$-th \emph{approximation} of $\dsem{\state}{\while{e}{c}}$ defined by
$\dsem{\state}{(\ift e s)^n}$ may not converge, since they are not necessarily
increasing.  However, in the special case where the output distribution has
weight $1$, the $n$-th lower approximations and the $n$-th approximations have
the same limit.
\begin{lemma}
  If the sub-distribution $\dsem{\state}{\while{e}{c}}$ has weight $1$, then the
  limit of $\dsem{\state}{(\ift e s)^n}$ is defined and
  \[
    \lim_{n \to \infty}\ \dsem{\state}{(\ift e s)^n;\ift e \abort}
    = \lim_{n \to \infty}\ \dsem{\state}{(\ift e s)^n} .
  \]
\end{lemma}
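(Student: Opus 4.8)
The plan is to compare, for each $n$, the sub-distribution $\nu_n \eqdef \dsem{\state}{(\ift e s)^n}$ with $\mu_n \eqdef \dsem{\state}{(\ift e s)^n;\ift e \abort}$, and to show that although only $\mu_n$ is a priori increasing, the weight-$1$ hypothesis forces $\nu_n$ to have the same limit $\mu_\infty \eqdef \lim_n \mu_n = \dsem{\state}{\while{e}{c}}$. First I would unfold the semantics of $\ift e \abort = \ifte{e}{\abort}{\skp}$ and of sequential composition to observe that, writing $G \eqdef \{ m \in \Mem \mid \dsem{m}{e} \text{ is true}\}$ for the set of states satisfying the guard, $\mu_n$ is exactly the restriction $\drestr{\nu_n}{\overline{G}}$: concretely $\mu_n(m) = \nu_n(m)$ if $m \notin G$ and $\mu_n(m) = 0$ if $m \in G$. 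In particular $\mu_n \le \nu_n$ pointwise, so $\wt{\mu_n} \le \wt{\nu_n} \le 1$; and since $\mu_n$ vanishes on $G$ for every $n$, so does $\mu_\infty$.

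Next I would pin down the weights. Since $(\mu_n)$ converges to $\mu_\infty$, \cref{lem:limitProb} applied to the always-true event (whose probability under any sub-distribution is just its weight) yields $\wt{\mu_n} \to \wt{\mu_\infty} = 1$, using the hypothesis. Combined with $\wt{\mu_n} \le \wt{\nu_n} \le 1$, the squeeze theorem gives $\wt{\nu_n} \to 1$, and therefore $\sum_{m \in G} \nu_n(m) = \wt{\nu_n} - \wt{\mu_n} \to 0$.

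Now I would prove pointwise convergence $\nu_n(m) \to \mu_\infty(m)$ by cases on a state $m$. If $m \notin G$, then $\nu_n(m) = \mu_n(m) \to \mu_\infty(m)$ directly. If $m \in G$, then $0 \le \nu_n(m) \le \sum_{m' \in G} \nu_n(m') = \wt{\nu_n} - \wt{\mu_n} \to 0$, so $\nu_n(m) \to 0 = \mu_\infty(m)$. Hence $(\nu_n)$ converges in the sense of the earlier definition, with limit $\mu_\infty = \lim_n \mu_n = \dsem{\state}{\while{e}{c}}$, which is exactly the claimed identity.

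The only real subtlety---rather than an obstacle---is noticing that the ``spurious'' mass that $\ift e \abort$ discards (the mass of $\nu_n$ carried by guard-satisfying states) must vanish: it is squeezed out because the total mass of $\nu_n$ is capped at $1$ while the surviving mass on $\overline{G}$ already tends to $1$. Everything else is routine unfolding of the denotational semantics and elementary limits; in particular, no reasoning about the termination behaviour of the loop is needed beyond the given weight-$1$ assumption.
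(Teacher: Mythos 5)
Your argument is correct and is essentially the paper's own (one-sentence) proof worked out in detail: the paper likewise observes that the lower approximations lie below the approximations, so the weight-$1$ hypothesis forces their weights (and hence the discarded guard-mass) to converge, concluding via \cref{lem:less:eq:distr}. Your explicit identity $\mu_n = \drestr{\nu_n}{\neg e}$ and the pointwise squeeze are just the detailed form of that same argument, so nothing genuinely different is going on.
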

This follows by \cref{lem:less:eq:distr}, since lower approximations are below
approximations so the limit of their weights (and the weight of their
limit) is $1$. It will be useful to identify programs that terminate with
probability $1$.

\begin{definition}[Lossless]
A statement $s$ is \emph{lossless} if for every
sub-distribution $\state$, $\wt{\dsem{\state}{s}} =\wt{\state}$, where $|\state|$
is the total probability of $\state$. Programs that are not lossless are called \emph{lossy}.
\end{definition}

Informally, a program is lossless if all probabilistic assignments sample from
full distributions rather than sub-distributions, there are no $\abort$
instructions, and the program is almost surely terminating, i.e.\ infinite
traces have probability zero.  Note that if we restrict the language to sample
from full distributions, then losslessness coincides with almost sure
termination.

Another important class of loops are loops with a uniform upper bound
on the number of iterations. Formally, we say that a loop
$\while{e}{s}$ is \emph{certainly terminating} if there exists $k$
such that for every sub-distribution $\state$, we have $
\wt{\dsem{\state}{\while{e}{s}}} =
\wt{\dsem{\state}{(\ift{e}{s})^k}}$. Note that certain termination of
a loop does not entail losslessness---the output distribution of the
loop may not have weight $1$, for instance, if the loop samples from a
sub-distribution or if the loop aborts with positive probability.

\paragraph*{Semantics of Procedure Calls and Adversaries.}
The semantics of internal procedure calls is
straightforward. Associated to each procedure name $f \in \IProc$, we
assume a designated input variable $\farg{f} \in \VarL[f]$, a
piece of code $\fbody{f}$ that executes the function call, and a
result expression $\fret{f}$. A function call $\call{x}{\IProc}{e}$ is
then equivalent to $\farg{f} \asn e; \fbody{f}; x \asn \fret{f}$.
Procedures are subject to well-formedness criteria: procedures should
only use local variables in their scope and after initializing them,
and should not perform recursive calls.

\jh{The role of $\fbody{a}$ here is also confusing. I put
  ``unspecified'', but what does this mean really since we don't have
  the code?}  \gb{The idea is that we need code to interpret
  adversaries. But the logic is sound for any interpretation of
  adversaries.}

External procedure calls, also known as adversary calls, are a bit
more involved. Each name $a \in \AProc$ is parametrized by a set
$\aora{a} \subseteq \IProc$ of internal procedures which the adversary may call, a
designated input variable $\farg{a} \in \VarL[a]$, a (unspecified)
piece of code $\fbody{a}$ that executes the function call, and a
result expression $\fret{a}$.  We assume that adversarial code can
only access its local variables in $\VarL[a]$ and can only make calls
to procedures in $\aora{a}$. It is possible to impose more
restrictions on adversaries---say, that they are lossless---but for
simplicity we do not impose additional assumptions on
adversaries here.

\section{Proof System} \label{sec:proofsystem}

In this section we introduce a program logic for proving properties of
probabilistic programs. The logic is abstract---assertions are arbitrary
predicates on sub-distributions---but the meta-theoretic properties are clearest
in this setting. In the following section, we will give a concrete version
suitable for practical use.

\paragraph*{Assertions and Closedness Conditions.}
We use predicates on state distribution.
\begin{definition}[Assertions]
  The set $\assn$ of assertions is defined as $\mathcal{P}(\Dist (\Mem))$. We
  write $\form(\state)$ for $\state \in \form$.
\end{definition}
Usual set operations are lifted to assertions using their logical
counterparts, e.g., $\form \land \form' \eqdef \form \cap \form'$
and $\neg \form \eqdef \overline{\form}$.
Our program logic uses a few additional constructions. Given a predicate
$\lasr$ over states, we define
\begin{gather*}
  \detm{\lasr}(\state) \eqdef \forall m.\, m
    \in \supp(\state) \implies \lasr(m)
\end{gather*}
where $\supp(\state)$ is the set of all states with non-zero probability under
$\state$.  Intuitively, $\lasr$ holds deterministically on all
states that we may sample from the distribution. To reason about branching
commands, given two assertions $\form_1$ and $\form_2$, we let
\begin{gather*}
 ( \form_1 \oplus \form_2)(\state) \eqdef
   \exists \state_1, \state_2 .\,
      \state = \state_1 + \state_2 \land \form_1(\state_1) \land \form_2(\state_2) 
      .
\end{gather*}
This assertion means that the sub-distribution is the sum
of two sub-distributions such that $\form_1$ holds on the first piece
and $\form_2$ holds on the second piece.

Given an assertion $\form$ and an event $E \subseteq \Mem$, we let
$
\drestr{\form}{E}(\state)\eqdef \form (\drestr{\state}{E}) .
$
This assertion holds exactly when $\form$ is true on the portion of the
sub-distribution satisfying $E$.  Finally, given an assertion $\form$ and a function
$F$ from $\Dist(\Mem)$ to $\Dist(\Mem)$, we define
$
  \form[F] \eqdef \lambda \state .\, \form(F(\state)) .
$
Intuitively, $\form[F]$ is true in a sub-distribution $\state$ exactly when
$\form$ holds on $F(\state)$.

Now, we can define the closedness properties of assertions. These
properties will be critical to our rules for $\kwhile$ loops.

\begin{definition}[Closedness properties]\label{def:closedness}
  A family of assertions $(\form_n)_{n\in\NNinf}$ is:
  \begin{itemize}
   \item
     $u$\emph{-closed} if for every increasing sequence of
     sub-distributions $(\state_n)_{n\in\NN}$ such that
     $\dsvalid{\form_n}{\state_n}$ for all $n\in\NN$ then
     $\dsvalid{\form_\infty}{\lim_{n\to\infty}\state_n}$;
    
  \item
    $t$\emph{-closed} if for every converging sequence of
    sub-distributions $(\state_n)_{n\in\NN}$ such that
    $\dsvalid{\form_n}{\state_n}$ for all $n\in\NN$ then
    $\dsvalid{\form_\infty}{\lim_{n\to\infty}\state_n}$;

  \item 
    $d$\emph{-closed} if it is $t$-closed and downward closed, that is
    for every sub-distributions $\state \leq \state'$,
    $\dsvalid{\form_\infty}{\state'}$ implies
    $\dsvalid{\form_\infty}{\state}$.
  \end{itemize}
  When $(\form_n)_n$ is constant  and equal to $\form$, we say that $\form$ is
  $u$-/$t$-/$d$-closed.
\end{definition}

Note that $t$-closedness implies $u$-closedness, but the converse does not
hold. Moreover, $u$-closed, $t$-closed and $d$-closed
assertions are closed under arbitrary intersections and finite unions,
or in logical terms under finite boolean combinations, universal
quantification over arbitrary sets and existential quantification over
finite sets.


Finally, we introduce the necessary machinery for the frame rule. The
set $\MV(s)$ of \emph{modified} variables of a statement $s$ consists
of all the variables on the left of a deterministic or probabilistic
assignment.  In this setting, we say that an assertion $\form$ is
\emph{separated} from a set of variables $X$, written
$\aindep{\form}{X}$, if $\form(\state_1) \iff \form(\state_2)$ for any
distributions $\state_1$, $\state_2$ s.t.  $\wt{\state_1} =
\wt{\state_2}$ and ${\state_1}_{| \overline{X}} = {\state_2}_{|
  \overline{X}}$ where for a set of variables $X$, the restricted
  sub-distribution $\state_{| X}$ is
  \[
  \state_{| X} : m \in \Mem_{| X} \mapsto \Pr_{m' \sim \state} [m = m'_{| X}]
\]
where $\Mem_{| X}$ and $m_{| X}$ restrict $\Mem$ and $m$ to the variables in
$X$.

Intuitively, an assertion is separated from a set of variables
$X$ if every two sub-distributions that agree on the variables outside
$X$ either both satisfy the assertion, or both refute the assertion.

\paragraph*{Judgments and Proof Rules.}
Judgments are of the form $\hoare{\form}{s}{\form'}$, where the assertions
$\form$ and $\form'$ are drawn from  $\assn$.
\begin{definition}
A judgment $\hoare{\form}{s}{\form'}$ is \emph{valid}, written
  $\models \hoare{\form}{s}{\form'}$,
if $\dsvalid{\form'}{\dsem{\state}{s}}$ for every interpretation of
adversarial procedures and every probabilistic state $\state$ such
that $\dsvalid{\form}{\state}$.
\end{definition}

\Cref{fig:nonlooping:rules} describes the structural and basic rules
of the proof system.
Validity of judgments is preserved under standard structural rules,
like the rule of consequence \rname{Conseq}. As usual, the rule of
consequence allows to weaken the post-condition and to strengthen the
post-condition; in our system, this rule serves as the interface
between the program logic and mathematical theorems from probability
theory. The \rname{Exists} rule is helpful to deal with existentially
quantified pre-conditions.

The rules for $\skp$, assignments, random samplings and sequences are
all straightforward. The rule for $\abort$ requires $\detm{\bot}$ to
hold after execution; this assertion uniquely characterizes the
resulting null sub-distribution. The rules for assignments and random
samplings are semantical.

\begin{figure}
\begin{mathpar}
\inferrule[Conseq]
   {\form_0 \Rightarrow \form_1 \\
    \hoare{\form_1}{s}{\form_2} \\
    \form_2 \Rightarrow\form_3 }
   {\hoare{\form_0}{s}{\form_3}}
\and
\inferrule[Exists]
   {
    \forall x:T.\, \hoare{\form}{s}{\form'}}
   {\hoare{\exists x:T.\, \form}{s}{\form'}}
\and
\inferrule[Abort]
  { }
  {\hoare{\form}{\abort}{\detm{\bot}}} 
\and
\inferrule[Assgn]
  {\form' \eqdef \form[\dsem{}{x \asn e}]}
  {\hoare{\form'}{x \asn e}{\form}} 
\and
\inferrule[Skip]
  { }
  {\hoare{\form}{\skp}{\form}} 
\and
\inferrule[Sample]
  {\form' \eqdef \form[\dsem{}{x\rnd{g}}]}
  {\hoare{\form'}{x \rnd g}{\form}}
\and
\inferrule[Seq]
  {\hoare{\form_0}{s_1}{\form_1}\\
   \hoare{\form_1}{s_2}{\form_2}}
  {\hoare{\form_0}{s_1;s_2}{\form_2}}
\and
\inferrule[Cond]
  {\hoare{\form_1 \land \detm{e}}{s_1}{\form'_1}\\
   \hoare{\form_2 \land \detm{\neg e}}{s_2}{\form'_2}}
  {\hoare{(\form_1 \wedge \detm{e})\oplus (\form_2 \wedge \detm{\neg e})}
   {\ifstmt{e}{s_1}{s_2}}{\form'_1 \oplus \form'_2}}
\and
\inferrule[Split]
  {\hoare{\form_1}{s}{\form'_1} \\
   \hoare{\form_2}{s}{\form'_2}}
  {\hoare{\form_1 \oplus \form_2}{s}{\form'_1 \oplus \form'_2}}
\and
\inferrule[Frame]
  { \aindep{\form}{\MV(s)} \\ \mbox{$s$ is lossless} }
  { \hoare{\form}{s}{\form} }
\and
\inferrule[Call]
  {\hoare{\form}{\farg{f} \asn e; \fbody{f}}
         {\form'[\dsem{}{x \asn \fret{f}}]}}
  {\hoare{\form}{\call{x}{f}{e}}{\form'}}
\end{mathpar}

\caption{\label{fig:nonlooping:rules} Structural and basic rules}
\end{figure}

The rule \rname{Cond} for conditionals requires that the
post-condition must be of the form $\form_1\oplus\form_2$; this
reflects the semantics of conditionals, which splits the initial
probabilistic state depending on the guard, runs both branches, and
recombines the resulting two probabilistic states.

The next two rules (\rname{Split} and \rname{Frame}) are useful for
local reasoning. The \rname{Split} rule reflects the additivity of the
semantics and combines the \mbox{pre-} and post-conditions using the $\oplus$
operator.
The \rname{Frame} rule asserts that lossless statements preserve
assertions that are not influenced by modified variables.

The rule \rname{Call} for internal procedures is as expected, replacing the
procedure call $f$ with its definition.

\medskip

\Cref{fig:looping:rules} presents the rules for loops. We consider
four rules specialized to the termination behavior.
The \rname{While} rule is the most general rule, as it deals with
arbitrary loops. For simplicity, we explain the rule in the special
case where the family of assertions is constant, i.e.\ we have
$\form_n=\form$ and $\form'_n=\form'$. Informally, the
$\form$ is the loop invariant and $\form'$ is
an auxiliary assertion used to prove the invariant. We require that
$\form$ is $u$-closed, since the semantics of a loop is
defined as the limit of its lower approximations. Moreover, the first
premise ensures that starting from $\form$, one
guarded iteration of the loop establishes $\form'$; the
second premise ensures that restricting to $\neg e$ a
probabilistic state $\state'$ satisfying $\form'$ yields a
probabilistic state $\state$ satisfying $\form$. It is possible to
give an alternative formulation where the second premise
is substituted by the logical constraint $\drestr{\form'}{\neg
  e}\implies \form$. As usual, the post-condition of the loop is the
conjunction of the invariant with the negation of the guard (more
precisely in our setting, that the guard has probability 0).

The \rname{While-AST} rule deals with lossless loops. For
simplicity, we explain the rule in the special case where the family
of assertions is constant, i.e.\ we have $\form_n=\form$. In this
case, we know that lower approximations and approximations have the
same limit, so we can directly prove an invariant that holds after one
guarded iteration of the loop. On the other hand, we must now require
that the $\form$ satisfies the stronger property of
$t$-closedness.

The \rname{While-D} rule handles arbitrary loops with a
$d$-closed invariant; intuitively, restricting a sub-distribution
that satisfies a downwards closed assertion $\form$ yields a
sub-distribution which also satisfies $\form$. 

The \rname{While-CT} rule deals with certainly terminating loops. In
this case, there is no requirement on the assertions.

We briefly compare the rules from a verification perspective. If the
assertion is $d$-closed, then the rule \rname{While-D} is easier to
use, since there is no need to prove any termination requirement.
Alternatively, if we can prove certain termination of the loop, then
the rule \rname{While-CT} is the best to use since it does not impose
any condition on assertions.  When the loop is lossless, there
is no need to introduce an auxiliary assertion $\form'$, which
simplifies the proof goal. Note however that it might still be
beneficial to use the \rname{While} rule, even for lossless loops,
because of the weaker requirement that the invariant is $u$-closed
rather than $t$-closed.

\medskip

Finally, \cref{fig:adv:rules} gives the adversary rule for general adversaries.
It is highly similar to the general rule \rname{While-D} for loops since the
adversary may make an arbitrary sequence of calls to the oracles in $\aora{a}$
and may not be lossless. Intuitively, $\eta$ plays the role of the invariant: it
must be $d$-closed and it must be preserved by every oracle call with arbitrary
arguments. If this holds, then $\eta$ is also preserved by the adversary call.
Some framing conditions are required, similar to the ones of the
\textsc{[Frame]} rule: the invariant must not be influenced by the state
writable by the external procedures. 

It is possible to give other variants of the adversary rule with more
general invariants by restricting the
adversary, e.g., requiring losslessness or bounding the number of calls the
external procedure can make to oracles, leading to rules akin to the almost
surely terminating and certainly terminating loop rules, respectively.

\begin{figure*}
\begin{mathpar}
\inferrule[While]
  { 
    \uclosed{(\form'_n)_{n\in\NNinf}} \\\\
    \forall n.\, \hoare{\form_n}{\ift{e}{s}}{\form_{n+1}} \\
   \forall n.\, \hoare{\form_n}{\ift{e}{\abort}}{\form'_{n}}}
  {\hoare{\form_0}{\while{e}{s}}{\form'_\infty \wedge \detm{\neg e}}}
\\
\inferrule[While-AST]
          {\tclosed{(\form_n)_{n\in\NNinf}} \\
            \forall n.\, \hoare{\form_n}{\ift{e}{s}}{\form_{n+1}} \\ 
  \forall \state .\, \form_0(\state) \implies \wt{\dsem{\state}{(\while{e}{s})}}=1}
  {\hoare{\form_0}{\while{e}{s}}{\form_\infty \wedge \detm{\neg e}}}
\\
\inferrule[While-D]
          {\dclosed{(\form_n)_{n\in\NNinf}} \\
            \forall n.\, \hoare{\form_n}{\ift{e}{s}}{\form_{n+1}}}
  {\hoare{\form_0}{\while{e}{s}}{\form_\infty \wedge \detm{\neg e}}}
\\
\inferrule[While-CT]
          {\forall n.\, \hoare{\form_n}{\ift{e}{s}}{\form_{n+1}} \\
            \forall \state .\, \form_0(\state) \implies
   \dsem{\state}{(\ift{e}{s})^k} =  \dsem{\state}{(\while{e}{s})}}
  {\hoare{\form_0}{\while{e}{s}}{\form_k \wedge \detm{\neg e}}}
\end{mathpar}
\caption{\label{fig:looping:rules} Rules for loops}
\end{figure*}

\begin{figure*}
  \begin{mathpar}
\inferrule[Adv]
{\forall n\in\NNinf.~\aindep{\form_n}{\{x, \AVar\}} \\
    \dclosed{(\form_n)_{n\in\NNinf}} \\\\
   \forall f \in \aora{a}, x \in \VarL[a], e \in \expr, n \in \NN .\,
 \hoare {\form_n} {\call{x}{f}{e}} {\form_{n + 1}}}
 {\hoare {\form_0} {\call{x}{a}{e}} {\form_{\infty}}}
 \end{mathpar}
  \caption{\label{fig:adv:rules} Rules for adversaries}
\end{figure*}
  
\paragraph*{Soundness and Relative Completeness.}

Our proof system is sound and relatively complete with respect to the semantics;
these proofs have also been formalized in the \textsc{Coq} proof assistant.
\begin{theorem}[Soundness]
Every judgment $\hoare{\form}{s}{\form'}$ provable using the rules of
our logic is valid.
\end{theorem}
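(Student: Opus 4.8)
The plan is to proceed by induction on the derivation of $\hoare{\form}{s}{\form'}$, establishing for each rule that if the premises are valid judgments (or hold as side conditions) then the conclusion is valid. For the structural rules (\rname{Conseq}, \rname{Exists}, \rname{Split}, \rname{Seq}) and the basic rules (\rname{Skip}, \rname{Abort}, \rname{Assgn}, \rname{Sample}, \rname{Call}) this is essentially immediate from the denotational semantics in \cref{fig:semantics}: for instance, \rname{Assgn} and \rname{Sample} are true by definition because $\form' \eqdef \form[\dsem{}{\cdot}]$ literally expresses $\form(\dsem{\state}{x\asn e})$, and \rname{Seq} follows from the monadic bind $\dsem{\state}{s_1;s_2} = \dlet{m'}{\dsem{\state}{s_1}}{\dsem{m'}{s_2}}$. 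The rule \rname{Cond} requires unfolding the semantics of $\ifte{e}{s_1}{s_2}$ on an input $\state$: one splits $\state$ into $\drestr{\state}{e}$ and $\drestr{\state}{\neg e}$, observes these pieces satisfy $\form_1 \wedge \detm e$ and $\form_2 \wedge \detm{\neg e}$ respectively by the shape of the precondition, applies the premises, and recombines using that $\dsem{\cdot}{s_i}$ is additive over $+$. The rule \rname{Frame} uses the definition of $\aindep{\form}{\MV(s)}$ together with losslessness: $\dsem{\state}{s}$ agrees with $\state$ on the variables outside $\MV(s)$ and has the same weight, so $\form$ is preserved.

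The substantive cases are the loop rules and the adversary rule. For \rname{While}, I would unfold $\dsem{\state}{\while e s} = \lim_{n} \dsem{\state}{(\ift e s)^n; \ift e \abort}$ from \cref{fig:semantics}. The key observation is that $(\ift e s)^n; \ift e \abort$ restricts, after $n$ guarded iterations, to the sub-distribution on which the guard is false; so using the first premise $n$ times one shows $\form_n$ holds on $\dsem{\state}{(\ift e s)^n}$, and then the second premise (applied to $\ift e \abort$, whose semantics restricts to $\neg e$ and kills the rest) gives that $\form'_n$ holds on the $n$-th lower approximation $\dsem{\state}{(\ift e s)^n; \ift e \abort}$. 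Since the lower approximations form an increasing sequence (by \cref{lem:lim:distr}) and converge to $\dsem{\state}{\while e s}$, $u$-closedness of $(\form'_n)_n$ yields $\form'_\infty$ on the limit. The conjunct $\detm{\neg e}$ holds because each lower approximation is supported on states with $\neg e$, and this property passes to the pointwise limit. For \rname{While-AST} I would instead use the lemma stating that when the loop output has weight $1$ the limits of the approximations $\dsem{\state}{(\ift e s)^n}$ and lower approximations coincide; since the approximations need not be increasing, I invoke $t$-closedness (for converging rather than merely increasing sequences) applied to $\form_n$ on $\dsem{\state}{(\ift e s)^n}$. For \rname{While-D}, $d$-closedness lets me replace the appeal to the auxiliary assertion: $\form_n$ holds on $\dsem{\state}{(\ift e s)^n}$, the lower approximation is $\leq$ this, and downward closure plus $t$-closedness transfer $\form_\infty$ to the limit. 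For \rname{While-CT}, certain termination means the loop equals $(\ift e s)^k$ exactly, so iterating the first premise $k$ times finishes with no closedness needed.

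The main obstacle I expect is the adversary rule \rname{Adv}. Here one cannot directly induct on a fixed program text, since $\fbody a$ is arbitrary unspecified code constrained only to touch variables in $\VarL[a]$ and call oracles in $\aora a$. The plan is to do a nested structural induction on the shape of $\fbody a$ (over the \pWhile\ grammar), proving a strengthened statement: for \emph{any} well-formed adversary-body fragment $c$, the judgment $\hoare{\form_0}{c}{\form_\infty}$ is valid under the hypotheses of \rname{Adv}. The separation condition $\aindep{\form_n}{\{x,\AVar\}}$ ensures the assertions ignore the adversary's private state, so that sequencing, conditionals, and the adversary's own assignments preserve $\form_\infty$ (via a frame-style argument) and only oracle calls move the index $n \mapsto n+1$; since $\form_\infty$ is the limit it is stable under any finite number of such increments. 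Adversary-internal loops are handled exactly like \rname{While-D}, which is why $d$-closedness is required: it is the weakest closedness assumption that survives arbitrary (possibly non-terminating, non-increasing) iteration without a termination hypothesis. The delicate point is getting the framing and the quantification over all $e \in \expr$, $x \in \VarL[a]$ in the oracle-call premise to line up with an arbitrary call site inside $\fbody a$; once the invariant is shown stable under each syntactic constructor, the limit assertion $\form_\infty$ closes the induction.
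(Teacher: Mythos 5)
Your plan is correct and, for every rule except \rname{Adv}, it is essentially the paper's own proof: the appendix proceeds rule by rule with exactly the ingredients you name --- additivity of the semantics for \rname{Split}/\rname{Cond} (\cref{lem:lin}), preservation of non-modified variables plus losslessness for \rname{Frame} (\cref{lem:semnotmod}), and a closure-in-the-limit lemma (\cref{lem:close}) for the loop rules; if anything, you are more explicit than the paper about the indexed families $(\form_n)_n$, $(\form'_n)_n$ and the role of the $\ift{e}{\abort}$ premise, which the appendix only spells out in the constant-invariant special case. The genuine divergence is \rname{Adv}. The paper does not perform a structural induction over the whole \pWhile{} grammar: it decomposes the adversary body as $s_1; c^?_1; \cdots; s_n; c^?_n$ with the $s_i$ call-free, dispatches each $s_i$ with the \rname{Frame} rule (invoking losslessness of the adversary's own code) and each oracle call with the last premise, and chains everything with \rname{Seq}. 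Your nested induction is more general --- in particular it covers adversary-internal loops whose bodies contain oracle calls, which is precisely where $d$-closedness earns its keep and which the paper's flat decomposition silently excludes --- at the price of redoing a \rname{While-D}-style limit argument inside the adversary case. One caveat applies to both arguments: along a run making $k$ oracle calls you only obtain $\form_k$, and the step from there to the advertised postcondition $\form_\infty$ is not justified by $d$-closedness alone; the paper sidesteps this by carrying a single constant invariant (as in its PRP/PRF example), and your phrase that $\form_\infty$ ``is stable under any finite number of such increments'' would need the same restriction, or an explicit monotonicity or limit argument, to go through.
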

Completeness of the logic follows from the next lemma, whose proof
makes an essential use of the \rname{While} rule. In the sequel, we
use $\carac{\state}$ to denote the characteristic function of a
probabilistic state $\state$, an assertion stating that the current state is
equal to $\state$.
\begin{lemma}
For every probabilistic state $\state$, the following judgment is
provable using the rule of the logic:
\[ \hoare{\carac{\state}}{s}{\carac{\dsem{\state}{s}}}. \]
\end{lemma}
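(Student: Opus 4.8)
The plan is to prove the statement by structural induction on the program $s$, showing for each construct that the judgment $\hoare{\carac{\state}}{s}{\carac{\dsem{\state}{s}}}$ is derivable. The key observation is that $\carac{\state}$ is a maximally precise assertion: it is satisfied by exactly one sub-distribution, namely $\state$. Consequently, for the semantical rules (\rname{Assgn}, \rname{Sample}, \rname{Call}) the weakest-precondition machinery already delivers the right precondition automatically, because $\carac{\dsem{\state}{s}}[\dsem{}{s}]$ is logically equivalent to $\carac{\state}$ on the one state where it matters; we invoke \rname{Conseq} to bridge the gap between this and $\carac{\state}$ itself. For $\skp$ the result is immediate, and for $\abort$ we use that $\dsem{\state}{\abort} = \mathbf{0}$ and that $\detm{\bot}$ characterizes $\mathbf{0}$, again closing via \rname{Conseq}.

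**Key steps.**
First I would handle the base cases ($\skp$, $\abort$, $x \asn e$, $x \rnd g$) as above. Second, the sequencing case: by the induction hypothesis we have $\hoare{\carac{\state}}{s_1}{\carac{\dsem{\state}{s_1}}}$ and, applying the hypothesis at the intermediate distribution, $\hoare{\carac{\dsem{\state}{s_1}}}{s_2}{\carac{\dsem{\dsem{\state}{s_1}}{s_2}}}$; composing with \rname{Seq} and unfolding the semantics of $s_1;s_2$ gives the claim. Third, the conditional $\ifte{e}{s_1}{s_2}$: here I would split $\state$ as $\drestr{\state}{\dc{e}} + \drestr{\state}{\dc{\neg e}}$, observe that $\carac{\state}$ entails $(\carac{\drestr{\state}{\dc{e}}} \wedge \detm{e}) \oplus (\carac{\drestr{\state}{\dc{\neg e}}} \wedge \detm{\neg e})$, apply the induction hypotheses to each branch, and use \rname{Cond} together with the fact that $\dsem{\state}{\ifte{e}{s_1}{s_2}} = \dsem{\drestr{\state}{\dc{e}}}{s_1} + \dsem{\drestr{\state}{\dc{\neg e}}}{s_2}$, so the resulting postcondition $\carac{\dsem{\drestr{\state}{\dc{e}}}{s_1}} \oplus \carac{\dsem{\drestr{\state}{\dc{\neg e}}}{s_2}}$ implies $\carac{\dsem{\state}{\ifte{e}{s_1}{s_2}}}$ via \rname{Conseq}. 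The procedure-call cases reduce to the sequencing and assignment cases by unfolding the definitions.

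**The main obstacle: loops.**
The hard part will be the loop case $\while{e}{s}$, which is exactly where \rname{While} is essential. Let $\state_n \eqdef \dsem{\state}{(\ift{e}{s})^n}$ and $\state_n' \eqdef \dsem{\state}{(\ift{e}{s})^n; \ift{e}{\abort}}$, so that $\dsem{\state}{\while{e}{s}} = \lim_n \state_n' = \state'_\infty$. I would instantiate the \rname{While} rule with the families $\form_n \eqdef \carac{\state_n}$ and $\form'_n \eqdef \carac{\state'_n}$. The first premise, $\hoare{\carac{\state_n}}{\ift{e}{s}}{\carac{\state_{n+1}}}$, follows from the already-treated conditional and sequencing cases (noting $\state_{n+1} = \dsem{\state_n}{\ift{e}{s}}$). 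The second premise, $\hoare{\carac{\state_n}}{\ift{e}{\abort}}{\carac{\state'_n}}$, follows similarly since $\state'_n = \dsem{\state_n}{\ift{e}{\abort}}$. The crucial remaining obligation is $u$-closedness of $(\carac{\state'_n})_{n\in\NNinf}$: given an increasing sequence $(\nu_n)$ with $\nu_n = \state'_n$ for each $n$, we must show $\lim_n \nu_n = \state'_\infty$, which is immediate from the definition of the limit sub-distribution and \cref{lem:lim:distr} --- the family is in fact tailored so that the unique witnessing sequence \emph{is} $(\state'_n)$, whose limit is by definition $\state'_\infty$. Finally, \rname{While} yields $\hoare{\carac{\state}}{\while{e}{s}}{\carac{\state'_\infty} \wedge \detm{\neg e}}$; since $\detm{\neg e}$ already holds on $\supp(\state'_\infty)$ (no state satisfying $e$ survives the final $\ift{e}{\abort}$), we conclude via \rname{Conseq} that $\hoare{\carac{\state}}{\while{e}{s}}{\carac{\dsem{\state}{\while{e}{s}}}}$, completing the induction.
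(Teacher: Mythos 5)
Your proposal is correct and follows essentially the same route as the paper's proof: structural induction, \rname{Seq} with the intermediate characteristic assertion, \rname{Conseq} plus \rname{Cond} on the restricted sub-distributions, and \rname{While} instantiated with the characteristics of the approximations and lower approximations, closing with $u$-closedness. Your instantiation $\form_n = \carac{\dsem{\state}{(\ift{e}{s})^n}}$, $\form'_n = \carac{\dsem{\state}{(\ift{e}{s})^n;\ift{e}{\abort}}}$ is in fact the one consistent with the statement of \rname{While} (the paper's text swaps the primed and unprimed families, apparently a typo), so no gap remains.
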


\begin{proof} By induction on the structure of $s$.
\begin{itemize}
\item $s = \abort$, $s = \skp$, $x \asn e$ and $s = x \rnd g$ are
  trivial;

\item $s = s_1;s_2$, we have to prove
  \[ \hoare{\carac{\state}}{s_1;s_2}
           {\carac{\dsem{\dsem{\state}{s_1}}{s_2}}} . \]
  We apply the \rname{Seq} rule with
  $\form_1 = \carac{\dsem{\state}{s_1}}$ premises can be
  directly proved using the induction hypothesis;

\item $s = \ifstmt{e}{s_1}{s_2}$, we have to prove 
  \[ \hoare{\carac{\state}}{\ifstmt{e}{s_1}{s_2}}
           {
               (\carac{\dsem{\drestr{\state}{e}}{s_1}} \oplus
                \carac{\dsem{\drestr{\state}{\neg e}}{s_2})}} . \]
  We apply the \rname{Conseq} rule to be able to apply the the
  \rname{Cond} rule with
  $\form_1 = \carac{\dsem{\drestr{\state}{e}}{s_1}}$ and
  $\form_2 = \carac{\dsem{\drestr{\state}{\neg e}}{s_2}}$ Both
  premises can be proved by an application of the \rname{Conseq} rule
  followed by the application of the induction hypothesis.
      
\item $s = \while{e}{s}$, we have to prove
  \[ \hoare{\carac{\state}}{\while{e}{s}}
           {\carac{
              \lim_{n \to \infty}\ \dsem{\state}{(\ift e s)^n;\ift e \abort}}} .\]
  We first apply the \rname{While} rule with
  $\form'_n = \carac{\dsem{\state}{(\ift e s)^n}}$ and
  \[
    \form_n = \carac{\dsem{\state}{(\ift e s)^n;\ift e \abort}} .
  \]
  For the first premise we apply the same process as for the
  conditional case: we apply the \rname{Conseq} and \rname{Cond}
  rules and we conclude using the induction hypothesis (and the
  \rname{Skip} rule).  For the second premise we follow the same
  process but we conclude using the \rname{Abort} rule instead of the
  induction hypothesis.  Finally we conclude since
  $\uclosed{(\form_n)_{n\in\NNinf}}$. \qedhere
\end{itemize}
\end{proof}

The abstract logic is also relatively complete. This property will be less
important for our purposes, but it serves as a basic sanity check.

\begin{theorem}[Relative completeness]
  Every valid judgment is derivable.
\end{theorem}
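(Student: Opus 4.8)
The plan is to show that every valid judgment $\models \hoare{\form}{s}{\form'}$ is derivable, using the previous lemma---which establishes $\hoare{\carac{\state}}{s}{\carac{\dsem{\state}{s}}}$ for every probabilistic state $\state$---as the central tool. The strategy is the standard "most-general-formula" argument adapted to the distributional setting: we reduce an arbitrary valid judgment to a family of the tight judgments supplied by the lemma, glued together by the structural rules \rname{Exists} and \rname{Conseq}.

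First I would observe that, by the lemma, for each $\state$ with $\form(\state)$ we have a derivation of $\hoare{\carac{\state}}{s}{\carac{\dsem{\state}{s}}}$. Since the judgment $\models\hoare{\form}{s}{\form'}$ is valid, $\form(\state)$ implies $\form'(\dsem{\state}{s})$ for every interpretation of the adversarial procedures; hence the post-condition $\carac{\dsem{\state}{s}}$ logically entails $\form'$ (the singleton predicate "the current distribution equals $\dsem{\state}{s}$" is stronger than $\form'$ precisely because $\form'$ holds at $\dsem{\state}{s}$). So by \rname{Conseq} we can weaken the post-condition and obtain a derivation of $\hoare{\carac{\state}}{s}{\form'}$ for every $\state$ satisfying $\form$. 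Next I would package the pre-conditions: the assertion $\form$ is, semantically, the disjunction $\exists \state.\, \form(\state) \land \carac{\state}$ over all distributions $\state$ in $\form$. Using \rname{Conseq} once more to rewrite $\form$ as this existential form, and then applying the \rname{Exists} rule (quantifying over the "type" of probabilistic states, with the side hypothesis $\form(\state)$ discharging which $\state$ are relevant), we assemble the per-$\state$ derivations into a single derivation of $\hoare{\form}{s}{\form'}$.

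The main subtlety---and the step I would be most careful about---is making the \rname{Exists}/\rname{Conseq} bookkeeping precise in this abstract, fully semantic assertion language: one must check that "$\exists \state : \Dist(\Mem).\, \form(\state) \wedge \carac{\state}$" is literally equal, as a subset of $\Dist(\Mem)$, to $\form$ itself (it is: a distribution $\state_0$ satisfies the existential iff $\state_0 \in \form$), and that the \rname{Exists} rule as stated applies with $T$ instantiated to $\Dist(\Mem)$ and the body assertion being $\form(\state)\wedge\carac{\state}$---which requires the per-$\state$ premise $\hoare{\form(\state)\wedge\carac{\state}}{s}{\form'}$, again obtained from the lemma plus \rname{Conseq} (the extra conjunct $\form(\state)$ only strengthens the precondition). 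Everything else is routine: no loop reasoning or closedness conditions are needed here, since all the heavy lifting for loops was already done inside the lemma via the \rname{While} rule. I would therefore expect the proof to be short, with the only real content being the translation between the semantic reading of assertions and the syntactic shape required by \rname{Exists}.
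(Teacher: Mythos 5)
Your proposal is correct and follows essentially the same route as the paper's own proof: invoke the characteristic-function lemma $\hoare{\carac{\state}}{s}{\carac{\dsem{\state}{s}}}$, use validity plus \rname{Conseq} to obtain $\hoare{\carac{\state}\land\form(\state)}{s}{\form'}$ for each $\state$, and then discharge the quantification over $\state$ with \rname{Exists} and a final \rname{Conseq}, observing that $\form$ coincides semantically with $\exists\state.\,\form(\state)\wedge\carac{\state}$. The bookkeeping subtlety you flag is exactly the content of the paper's (equally brief) argument, so nothing is missing.
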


\begin{proof}
Consider a valid judgment $\hoare{\form}{s}{\form'}$.
Let $\state$ be a probabilistic state such that
$\dsvalid{\form}{\state}$. By the above proposition,
$ \hoare{\carac{\state}}{s}{\carac{\dsem{\state}{s}}}$. 
Using the validity of the judgment and \rname{Conseq}, we have
$\hoare{\carac{\state} \land \dsvalid{\form}{\state}}{s}{\form'}$.
Using the \rname{Exists} and \rname{Conseq} rules, we  conclude
$\hoare{\form}{s}{\form'}$
as required.
\end{proof}

The side-conditions in the loop rules (e.g.,
$\mathsf{uclosed}$/$\mathsf{tclosed}$/$\mathsf{dclosed}$ and the weight
conditions) are difficult to prove, since they are semantic properties. Next, we
present a concrete version of the logic with give easy-to-check, syntactic
sufficient conditions.


\section{A Concrete Program Logic} \label{sec:concrete}

To give a more practical version of the logic, we begin by setting a concrete
syntax for assertions

\paragraph*{Assertions.} 
We use a two-level assertion language, presented in
\cref{fig:syntax}.  A \emph{probabilistic
assertion} $\pasr$ is a formula built from comparison of probabilistic
expressions, using first-order quantifiers and connectives, and the
special connective $\oplus$. A \emph{probabilistic expression} $p$ can be a
logical variable $v$, an operator applied to probabilistic expressions
$o(\vec{p})$ (constants are $0$-ary operators), or the
expectation $\ex{\lexp}$ of a state expression $\lexp$.
A \emph{state expression} $\lexp$ is either a program variable $x$, the
characteristic function $\ind{\lasr}$ of a state assertion $\lasr$, an
operator applied to state expressions $o(\vec{\lexp})$, or the
expectation $\dlet{v}{g}{\lexp}$ of state expression $\lexp$ in a
given distribution $g$. Finally, a \emph{state assertion} $\lasr$ is a
first-order formula over program variables.
Note that the set of operators is left unspecified but we assume that
all the expressions in $\expr$ and $\dexpr$ can be encoded by operators.

\iffull
\begin{figure}
\else
\begin{wrapfigure}{l}{0.55\textwidth}
\fi
  \begin{align}
      \lexp & ::= x
              \mid v
              \mid \ind{\lasr}
              \mid \dlet{v}{g}{\lexp}
              \mid o(\vec{\lexp})
              \tag{S-expr.} \\
      \lasr & ::= \lexp \bowtie \lexp
              \mid FO(\lasr)
              \tag{S-assn.} \\
      \pexp & ::= v \mid o(\vec{\pexp}) \mid \ex{\lexp} \tag{P-expr.} \\
      \pasr & ::= \pexp \bowtie \pexp
              \mid \pasr \oplus \pasr 
              \mid FO(\pasr)
              \tag{P-assn.} \\
      \bowtie &\mathrel{\in} \{ \mathrel{=} , \mathrel{<}, \mathrel{\leq} \}
              \qquad o \in Ops
              \tag{Ops.}
  \end{align} 

\caption{\label{fig:syntax} Assertion syntax}
\iffull
\end{figure} 
\else
\end{wrapfigure} 
\fi

The interpretation of the concrete syntax is as expected. The interpretation of
probabilistic assertions is relative to a valuation $\rho$ which maps
logical variables to values, and is an element of $\assn$. The
definition of the interpretation is straightforward; the only
interesting case is $\pdenot{\ex{\lexp}}$ which is defined by
$\dlet{m}{\mu}{\mdenot{\lexp}}$, where $\mdenot{\lexp}$ is the
interpretation of the state expression $\lexp$ in the memory $m$ and
valuation $\rho$. The interpretation of state expressions is a mapping
from memories to values, which can be lifted to a mapping from
distributions over memories to distributions over values. The
definition of the interpretation is straightforward; the most
interesting case is for expectation $\mdenot{\dlet{v}{g}{\lexp}}
\eqdef \dlet{w}{\mdenot{g}}{\gdenot{\lexp}{\rho\subst{v}{w}}{m}}$.
We present the full interpretations in the supplemental materials.

Many standard concepts from probability theory have a natural
representation in our syntax. For example:
\begin{itemize}
  \item the probability that $\lasr$ holds in some probabilistic state
    is represented by the probabilistic expression $\Pr[\lasr]
    \eqdef \pr{\ind{\lasr}}{}$;
    
\item probabilistic independence of state expressions $\lexp_1$,
  \ldots, $\lexp_n$ is modeled by the probabilistic assertion $\indep
  \{\lexp_1, \ldots, \lexp_n\}$, defined by the clause\footnote{%
  The term $\Pr[\top]^{n - 1}$ is necessary since we work with sub-distributions.}
  $$\forall v_1 \ldots v_n,~ \Pr[\top]^{n - 1} \Pr[\bigwedge_{i=1\ldots
      n} \lexp_i = v_i] = \prod_{i=1\ldots n}\Pr[\lexp_i = v_i] ;
  $$

\item the fact that a distribution is proper is modeled by the probabilistic
  assertion $\llass \eqdef \Pr[\top] = 1$;
  
\item a state expression $\lexp$ distributed according to a law $g$ is
  modeled by the probabilistic assertion
  $$ \follows{\lexp}{g} \eqdef \forall w,~ \Pr[\lexp = w] = \ex{\dlet{v}{g}{\ind{v = w}}} .$$
  The inner expectation computes the probability that $v$ drawn from $g$ is
  equal to a fixed $w$; the outer expectation weights the inner probability by
  the probability of each value of $w$.
\end{itemize}
We can easily define $\square$ operator from the previous section in our new
syntax: $\detm{\lasr} \eqdef \Pr[\neg\lasr] = 0$.

\paragraph*{Syntactic Proof Rules.}

Now that we have a concrete syntax for assertions, we can give
syntactic versions of many of the existing proof rules. Such proof
rules are often easier to use since they avoid reasoning about the
semantics of commands and assertions. We tackle the non-looping rules
first, beginning with the following syntactic rules for assignment and
sampling:
\begin{mathpar}
  \inferrule[Assgn]
  { }
  {\hoare{\form\subst{x}{e}}{x \asn e}{\form}}
  \and
  \inferrule[Sample]
  { }
  {\hoare{\Samp{x}{g}(\form) }{x \rnd g}{\form}}
\end{mathpar}
The rule for assignment is the usual rule from Hoare logic,
replacing the program variable $x$ by its corresponding
expression $e$ in the pre-condition. The replacement $\form\subst{x}{e}$ is done recursively 
on the probabilistic assertion $\form$; for instance
for expectations, it is defined by 
$
\ex{\lexp}\subst{x}{e} \eqdef \ex{\lexp\subst{x}{e}} ,
$
where $\lexp\subst{x}{e}$ is the syntactic substitution.

The rule for sampling uses probabilistic substitution operator
$\Samp{x}{g}(\form)$, which replaces all occurrences of $x$ in $\form$ by a new
integration variable $t$ and records that $t$ is drawn from $g$; the operator is
defined in \cref{fig:samp}.

\iffull
\begin{figure}
\else
\begin{wrapfigure}{l}{0.45\textwidth}
\fi
  \[
  \begin{array}{lcll}
    \Samp{x}{g}(v) & \eqdef & v \\
    \Samp{x}{g}\left(\pr{\lexp}{}\right) & \eqdef &
    \ex{\dlet{t}{g}{\lexp\subst{x}{t}}} \\
    \Samp{x}{g}(o(\vec{\form})) &\eqdef & 
                o(\Samp{x}{g}(\form_1), \ldots, \Samp{x}{g}(\form_n)) \\
    \Samp{x}{g}(\form_1 \bowtie \form_2) &\eqdef &
                \Samp{x}{g}(\form_1) \bowtie \Samp{x}{g}(\form_2)
\end{array}
\]
for $o \in \Ops, \bowtie \in\{\wedge,\vee, \Rightarrow\}$.
\caption{Syntactic op. $\Samp{}{}$ (main cases) \label{fig:samp}}
\iffull
\end{figure}
\else
\end{wrapfigure}
\fi

\begin{figure*}
  \begin{align*}
    \sidecond{CTerm} & \eqdef \begin{array}[t]{l}
      \hoare{\llass\wedge\detm{(\tilde{e}=k\wedge 0< k \land b)}}{s}{
      \llass\wedge \detm{(\tilde{e} < k)}} \\
      \models \form \Rightarrow (\exists \dot{y}.\; \detm{\lexp \leq \dot{y}}) \land 
      \detm{(\tilde{e}=0 \Rightarrow \neg b)}
    \end{array}
    \\
    \sidecond{ASTerm} & \eqdef
    \begin{array}[t]{l}
      \hoare{\llass\wedge\detm{(\tilde{e}=k\wedge 0< k \leq K \land b)}}{s}{
        \llass\wedge \detm{(0\leq \tilde{e} \leq K )}\wedge 
      \Pr[\tilde{e} < k] \geq \epsilon} \\
      \models \form \Rightarrow \detm{(0 \leq \tilde{e} \leq K \land
      \tilde{e}=0 \Rightarrow \neg b)} \\
      \models \tclosed{\form} 
    \end{array}
  \end{align*}
  \caption{Side-conditions for loop rules} \label{fig:loop:syntactic}
\end{figure*}

Next, we turn to the loop rule. The side-conditions from
\cref{fig:looping:rules} are purely semantic, while in practice it is more
convenient to use a sufficient condition in the Hoare logic. We give
sufficient conditions for ensuring certain and almost-sure termination in
\cref{fig:loop:syntactic};
$\tilde{e}$ is an integer-valued
expression.  The first side-condition $\sidecond{CTerm}$ shows certain
termination given a strictly decreasing \emph{variant} $\tilde{e}$ that is
bounded below, similar to how a decreasing variant shows termination for
deterministic programs. The second side-condition $\sidecond{ASTerm}$ shows
almost-sure termination given a probabilistic variant $\tilde{e}$, which must be
bounded both above and below. While $\tilde{e}$ may increase with some
probability, it must decrease with strictly positive probability. This condition
was previously considered by \citet{HartSP83} for probabilistic transition
systems and also used in expectation-based
approaches~\citep{Morgan96:facs,Hurd03-jlap}. Our framework can also support
more refined conditions (e.g., based on super-martingales
\citep{ChakarovS13,mciver2016new}), but the condition $\sidecond{ASTerm}$
already suffices for most randomized algorithms.

While $t$-closedness is a semantic condition (cf. \cref{def:closedness}),
there are simple syntactic conditions to guarantee it. For instance, assertions
that carry a non-strict comparison $\bowtie \mathbin{\in} \{\leq,\geq,=\}$
between two bounded probabilistic expressions are $t$-closed;
the assertion stating probabilistic independence of a set of expressions is
$t$-closed.

\paragraph*{Precondition Calculus.}
With a concrete syntax for assertions, we are also able to incorporate syntactic
reasoning principles. One classic tool is Morgan and McIver's \emph{greatest
pre-expectation}, which we take as inspiration for a pre-condition calculus for
the loop-free fragment of \SYSTEM. Given an assertion $\pasr$ and a loop-free
statement $s$, we mechanically construct an assertion $\pasr^*$ that is the
pre-condition of $s$ that implies $\pasr$ as a post-condition. The basic idea is
to replace each expectation expression $p$ inside $\pasr$ by an expression $p^*$
that has the same denotation before running $s$ as $p$ after running $s$. This
process yields an assertion $\pasr^*$ that, interpreted before running $s$, is
logically equivalent to $\pasr$ interpreted after running $s$.

The computation rules for pre-conditions are defined in \cref{prem_wp}.  For a
probability assertion $\form$, its pre-condition $\wpre(s,\form)$ corresponds to
$\form$ where the expectation expressions of the form $\pr{\lexp}{}$ are
replaced by their corresponding \emph{pre-term}, $\prem(s,\pr{\lexp}{})$.
Pre-terms correspond loosely to Morgan and McIver's \emph{pre-expectations}---we
will make this correspondence more precise in the next section. The main
interesting cases for computing pre-terms are for random sampling and conditionals. For random sampling
the result is $\Samp{x}{g}(\pr{\lexp}{})$, which corresponds to the
[\textsc{Sample}] rule.  For conditionals, the expectation expression is split into
a part where $e$ is true and a part where $e$ is not true. We restrict the
expectation to a part satisfying $e$ with the operator
$
  \cond{\pr{\lexp}{}}{e}
  \eqdef
  \pr{\lexp \cdot \ind{e}}{} .
$
This corresponds to the expected value of $\lexp$ on the portion of the
distribution where $e$ is true.
Then, we can build the pre-condition calculus into \SYSTEM.
\begin{figure}
  \begin{align*}
    \prem(s_1; s_2, \pr{\lexp}{})
    &\eqdef
    \prem(s_1,\prem(s_2, \pr{\lexp}{}))
    \\
    \prem(x \asn e, \pr{\lexp}{})
    &\eqdef
    \pr{\lexp}{} \subst{x}{e}
    \\
    \prem(x \rnd g, \pr{\lexp}{})
    & \eqdef
    \Samp{x}{g}(\pr{\lexp}{})
    \\
    \prem(\ifstmt{e}{s_1}{s_2}, \pr{\lexp}{})
    & \eqdef 
    \cond{\prem({s_1}, \pr{\lexp}{})}{e} + 
    \cond{\prem({s_2}, \pr{\lexp}{})}{\neg e}
    \\[1em]
    \wpre(s, \pexp_1 \bowtie \pexp_2)
    &\eqdef
    \prem(s, \pexp_1) \bowtie \prem(s, \pexp_2)
  \end{align*}
  \caption{Precondition calculus (selected)}
  \label{prem_wp}
\end{figure}
\begin{thm} \label{thm:pc}
  Let $s$ be a non-looping command. Then, the following rule is derivable in the
  concrete version of \SYSTEM:
  \[
    \inferrule[PC]
    { }
    {\hoare{\wpre(s,\form)}{s}{\form}}
  \]
\end{thm}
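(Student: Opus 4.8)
The plan is to prove the theorem by structural induction on the non-looping command $s$, in each case reducing the goal $\hoare{\wpre(s,\form)}{s}{\form}$ to the corresponding basic rule from \cref{fig:nonlooping:rules} (in particular the syntactic \rname{Assgn}, \rname{Sample}, \rname{Seq}, and \rname{Cond} rules) combined with the rule of consequence. The core of the argument is a semantic \emph{correctness lemma} for the pre-term operator, stated roughly as: for every state expression $\lexp$, every valuation $\rho$, and every sub-distribution $\state$, we have $\pdenot{\prem(s,\pr{\lexp}{})} = \dslet{\dsem{\state}{s}}{\mdenot{\lexp}}$, i.e.\ the pre-term computed before running $s$ has the same denotation as the expectation $\ex{\lexp}$ computed after running $s$. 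From this lemma, lifting pointwise through operators $o(\vec p)$ and the comparison $\bowtie$ in $\form$ gives $\pdenot[\state]{\wpre(s,\form)} \iff \pdenot[\dsem{\state}{s}]{\form}$, which combined with \rname{Conseq} and the relevant basic rule yields the derivation.

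First I would set up the correctness lemma and prove it by induction on $s$ simultaneously with (or as a sub-induction inside) the structural induction on commands. The base cases are the interesting ones: for $x \asn e$, $\prem(x\asn e,\pr\lexp{}) = \pr\lexp{}\subst xe$, and I would check that $\dslet{\dsem{\state}{x\asn e}}{\mdenot{\lexp}} = \dslet{\state}{\mdenot{\lexp\subst xe}}$ by unfolding the semantics of deterministic assignment in \cref{fig:semantics} and the substitution lemma for state expressions; this is precisely the content of the syntactic \rname{Assgn} rule applied at the level of expectations. For $x \rnd g$, $\prem(x\rnd g,\pr\lexp{}) = \Samp xg(\pr\lexp{}) = \ex{\dlet tg{\lexp\subst xt}}$, and unfolding $\dsem{m}{x\rnd g} = \dlet v{\dsem m g}{\dunit{m[x:=v]}}$ together with the definition of $\Samp{}{}$ on expectations (\cref{fig:samp}) and the monad laws shows the two denotations agree — this is the content of the syntactic \rname{Sample} rule. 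The sequencing case $s_1;s_2$ follows by composing the two induction hypotheses and the associativity of $\dlet{}{}{}$ (the monad bind), matching the definition $\prem(s_1;s_2,\cdot) = \prem(s_1,\prem(s_2,\cdot))$ and then invoking \rname{Seq}.

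The conditional case $\ifstmt e{s_1}{s_2}$ is where I expect the main obstacle. Here $\prem(\ifstmt e{s_1}{s_2},\pr\lexp{}) = \cond{\prem(s_1,\pr\lexp{})}{e} + \cond{\prem(s_2,\pr\lexp{})}{\neg e}$, so I must show that splitting the input distribution $\state$ into $\drestr\state e$ and $\drestr\state{\neg e}$, running each branch, and taking the expectation of $\lexp$ is the same as restricting each branch's pre-term to the appropriate guard and summing — and crucially that the $\cond{\cdot}{e}$ operator, defined syntactically as multiplication by $\ind e$, really does commute past $\prem(s_i,\cdot)$ in the right way. The delicate point is that $\cond{\prem(s_i,\pr\lexp{})}{e}$ must equal $\dslet{\dsem{\drestr\state e}{s_i}}{\mdenot\lexp}$ — i.e.\ restricting \emph{before} running $s_1$ (which is what the semantics of the conditional does, via $\drestr\state e$) agrees with the syntactic restriction of the pre-term; this works because $e$ only mentions program variables not yet touched, but needs care. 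Once the correctness lemma is in hand for conditionals, I would derive the PC rule for $\ifstmt e{s_1}{s_2}$ by first rewriting $\form$ via \rname{Conseq} into $\oplus$-form matching the conclusion of \rname{Cond}, applying \rname{Cond} with the two branch judgments obtained from the induction hypothesis (after restricting to $\detm e$ and $\detm{\neg e}$ respectively), and finally using \rname{Conseq} again to fold the resulting $\oplus$ of pre-conditions back into $\wpre(\ifstmt e{s_1}{s_2},\form)$. Since $s$ is non-looping, there is no \kwhile{} case, and procedure calls (if in scope) reduce via \rname{Call} to their inlined bodies, to which the induction hypothesis applies.
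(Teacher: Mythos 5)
Your plan is essentially the proof the paper intends: the paper itself only states the key idea (each expectation expression is replaced by a pre-term whose denotation before $s$ equals that of the original after $s$) and never prints a detailed derivation, and your structural induction on $s$, the semantic correctness lemma for $\prem$, and the assembly via the syntactic \rname{Assgn}, \rname{Sample}, \rname{Seq}, \rname{Cond} rules plus \rname{Conseq} is the right way to make it precise. Two corrections, though. First, in the conditional case the phrase ``rewrite $\form$ via \rname{Conseq} into $\oplus$-form'' hides the actual work: an arbitrary postcondition $\form$ is not itself an $\oplus$, so you must introduce fresh logical variables recording each branch's contribution to every expectation occurring in $\form$, apply the induction hypothesis to the enriched per-branch postconditions (carrying the purely arithmetic constraint on those logical variables through unchanged), and eliminate the existential with \rname{Exists} before the final \rname{Conseq}; your sketch should name this bookkeeping explicitly, since it is the crux of that case (and the same device handles $\abort$, where the pre-term is $0$). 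Second, your stated reason why $\cond{\cdot}{e}$ matches the semantic split---``because $e$ only mentions program variables not yet touched''---is off: both the restriction $\drestr{\state}{e}$ in the semantics of the conditional and the indicator in $\cond{\ex{\lexp'}}{e} = \ex{\lexp' \cdot \ind{e}}$ are evaluated in the \emph{initial} state, so the identity $\dslet{\drestr{\state}{e}}{\lexp'} = \dslet{\state}{\lexp' \cdot \ind{e}}$ holds unconditionally and no freshness or non-interference condition is needed.
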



\section{Case Studies: Embedding Lightweight Logics}

While \SYSTEM is suitable for general-purpose reasoning about probabilistic
programs, in practice humans typically use more special-purpose proof
techniques---often targeting just a single, specific kind of property, like
probabilistic independence---when proving probabilistic assertions. When these
techniques apply, they can be a convenient and powerful tool.

To capture this intuitive style of reasoning, researchers have
considered lightweight program logics where the assertions and proof
rules are tailored to a specific proof technique. We demonstrate how
to integrate these tools in an assertion-based logic by introducing
and embedding a new logic for reasoning about independence and
distribution laws, useful properties when analyzing randomized
algorithms. We crucially rely on the rich assertions in \SYSTEM---it
is not clear how to extend expectation-based approaches to support
similar, lightweight reasoning. Then, we show to embed the union bound
logic \citep{BartheGGHS16-icalp} for proving accuracy bounds.

\subsection{Law and Independence Logic}
%
We begin by describing the law and independence logic \Sli,
a proof system with intuitive rules that are easy to
apply and amenable to automation. For simplicity, we only
consider programs which sample from the binomial distribution, and
have deterministic control flow---for lack of space, we also omit
procedure calls.
\begin{definition}[Assertions]
  \Sli assertions have the grammar:
  \[
    \begin{array}{rcl}
      \xi & := & \detE (e)
        \mid \indep E
        \mid \follows{e}{\binD(e,p)}
        \mid \top
        \mid \bot
        \mid \xi \land \xi
    \end{array}
  \]
  where $e\in\expr$, $E\subseteq\expr$, and $p\in [0,1]$.
\end{definition}
The assertion $\detE(e)$ states that $e$ is deterministic in the
current distribution, i.e., there is at most one element in the
support of its interpretation. The assertion $\indep E$ states that
the expressions in $E$ are independent, as formalized in the previous
section. The assertion $\follows{e}{\binD(m,p)}$ states that $e$ is
distributed according to a binomial distribution with parameter $m$
(where $m$ can be an expression) and constant probability $p$, i.e.\ the
probability that $e=k$ is equal to the probability that exactly $k$
independent coin flips return heads using a biased coin that returns heads with
probability $p$.

Assertions can be seen as an instance of a logical abstract domain,
where the order between assertions is given by implication based on a
small number of axioms. Examples of such axioms
include independence of singletons, irreflexivity of independence,
anti-monotonicity of independence, an axiom for the sum of
binomial distributions, and rules for deterministic expressions:
\begin{mathpar}
  \indep \{ x \}
  \and
  \indep \{ x, x\} \iff \detE(x)
  \and
  \indep (E\cup E') \implies \indep E
  \and
  \follows{e\!}{\!\binD(m,p)} {\land} \follows{e'\!}{\!\binD(m',p)} {\land}
  \indep \{ e,e'\}\!\implies\!\mbox{$\follows{e{+}e'\!}{\!\binD(m+m',p)}$}
  \and
  \bigwedge_{1\leq i \leq n} \detE(e_i) \implies \detE(f(e_1,\ldots,e_n))
\end{mathpar}
\begin{definition}
  Judgments of the logic are of the form $\indhoare{\xi}{s}{\xi'}$,
  where $\xi$ and $\xi'$ are \Sli-assertions. A judgment is \emph{valid} if
  it is derivable from the rules of \cref{fig:fil}; structural rules
  and rule for sequential composition are similar to those from
  \Cref{sec:proofsystem} and omitted.
\end{definition}
The rule [\textsc{\Sli-Assgn}] for deterministic assignments is as in
\Cref{sec:proofsystem}. The rule [\textsc{\Sli-Sample}] for random
assignments yields as post-condition that the variable $x$ and a set
of expressions $E$ are independent assuming that $E$ is independent
before the sampling, and moreover that $x$ follows the law of the
distribution that it is sampled from. The rule
[\textsc{\Sli-Cond}] for conditionals requires that the guard is
deterministic, and that each of the branches satisfies the
specification; if the guard is not deterministic, there are
simple examples where the rule is not sound.\iffull\footnote{%
  Consider the following program where $\bernD(p)$ is the Bernoulli distribution
  with parameter $p$:
  \begin{align*}
    b \rnd \bernD(p); \kif\ {b}\ &  \kthen\ {x_1 \rnd \bernD(p_1); x_2 \rnd \bernD(p_2)} \\
    & \kelse\ {x_1 \rnd \bernD(p'_1); x_2 \rnd \bernD(p'_2)}
  \end{align*}
  Each branch establishes $\indep \{ x_1,x_2 \}$, but this is not a valid
  post-condition for the conditional. There are similar examples using the
binomial distribution.}\fi{}
The rule [\textsc{\Sli-While}] for loops requires that the loop is
certainly terminating with a deterministic guard.
Note that the requirement of certain termination could be avoided by
restricting the structural rules such that a statement $s$
has deterministic control flow whenever $\indhoare{\xi}{s}{\xi'}$
is derivable.

We now turn to the embedding. The embedding of \Sli assertions into
general assertions is immediate, except for $\detE(e)$ which is
translated as $\detm{e}\vee\detm{\neg e}$. We let $\overline{\xi}$
denote the translation of $\xi$.
\begin{thm}[Embedding and soundness of \Sli logic]
If $\indhoare{\xi}{s}{\xi'}$ is derivable in the \Sli
logic, then $\hoare{\overline{\xi}}{s}{\overline{\xi'}}$ is
derivable in (the syntactic variant of) \SYSTEM. As a consequence,
every derivable judgment $\indhoare{\xi}{s}{\xi'}$ is valid.
\end{thm}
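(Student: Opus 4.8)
The plan is to prove the embedding by induction on the derivation of $\indhoare{\xi}{s}{\xi'}$ in the \Sli logic, translating each \Sli proof rule into a (short) derivation in the syntactic variant of \SYSTEM, using the translation $\overline{\xi}$ (which is the identity on all cases except $\detE(e)$, sent to $\detm{e}\vee\detm{\neg e}$). The soundness consequence is then immediate: by the soundness theorem for \SYSTEM, $\hoare{\overline{\xi}}{s}{\overline{\xi'}}$ is valid, and since $\overline{\cdot}$ preserves the intended meaning of the \Sli assertions (which we verify once and for all by inspecting the interpretation of $\indep$, $\follows{\cdot}{\binD(\cdot,p)}$, and $\detE$ from \cref{sec:concrete}), validity of $\indhoare{\xi}{s}{\xi'}$ follows.

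First I would handle the structural rules and sequential composition, which are essentially inherited: consequence in \Sli is justified by the small set of axioms listed (independence of singletons, anti-monotonicity, the binomial-sum axiom, the deterministic-function axiom), and each of these axioms must be checked to be a valid implication between the corresponding \SYSTEM assertions, so that \rname{Conseq} of \SYSTEM can discharge it; \rname{Seq} matches directly. Next, the assignment rule [\textsc{\Sli-Assgn}] coincides with the syntactic \rname{Assgn} rule of \SYSTEM once we check that translation commutes with substitution, i.e.\ $\overline{\xi}\subst{x}{e} = \overline{\xi\subst{x}{e}}$ — the only nontrivial point being $\detE(e')\subst{x}{e}$, where $(\detm{e'}\vee\detm{\neg e'})\subst{x}{e} = \detm{e'\subst{x}{e}}\vee\detm{\neg e'\subst{x}{e}}$ by the definition of $\detm{\cdot}$ as $\Pr[\neg\cdot]=0$ and the substitution rule for $\Pr[\cdot]$. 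For [\textsc{\Sli-Sample}], I would unfold the \SYSTEM syntactic \rname{Sample} rule: I must show that $\Samp{x}{g}(\overline{\xi'})$ follows from $\overline{\xi}$, where $\xi'$ asserts both $\indep(E\cup\{x\})$ and $\follows{x}{g}$ and $\xi$ asserts $\indep E$; this reduces to two facts about the probabilistic substitution operator $\Samp{x}{g}$ — that it correctly records the law of the freshly-sampled variable (matching the $\follows{\cdot}{\cdot}$ definition $\forall w.\,\Pr[\lexp=w]=\ex{\dlet{v}{g}{\ind{v=w}}}$) and that it preserves independence of the old expressions while making the new one independent of them, exactly because $x$ does not occur in $E$ and sampling is independent of the current state.

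The conditional rule [\textsc{\Sli-Cond}] is where the deterministic-guard hypothesis earns its keep: because $\detE(e)$ translates to $\detm{e}\vee\detm{\neg e}$, the precondition $\overline{\xi}\wedge(\detm e\vee\detm{\neg e})$ splits, via \rname{Conseq}, as $(\overline{\xi}\wedge\detm e)\oplus(\overline{\xi}\wedge\detm{\neg e})$ — here the $\oplus$ is degenerate, one of the two summands being the null sub-distribution — after which \rname{Cond} applies with the two branch derivations supplied by the induction hypothesis, and \rname{Conseq} recombines $\overline{\xi'}\oplus\overline{\xi'}$ back to $\overline{\xi'}$ (using that $\xi'$, being a conjunction of $\detE$, $\indep$, and $\follows{}{}$ atoms, is stable under this recombination — the key soundness point that fails without the deterministic guard, as the footnoted Bernoulli example shows). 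I expect [\textsc{\Sli-While}] to be the main obstacle: I would discharge it via \SYSTEM's \rname{While-CT} rule, which requires (a) producing the certain-termination witness $k$ from the \Sli side-condition that the loop is certainly terminating with deterministic guard — this should be routine since \Sli's own loop rule already carries that hypothesis — and, more delicately, (b) reconciling \rname{While-CT}'s demand for a \emph{family} of invariants $(\form_n)_n$ with the single \Sli invariant $\xi$; the natural move is to take the constant family $\form_n=\overline{\xi}$ and reuse the conditional argument above for each one-guarded-iteration premise $\hoare{\overline{\xi}}{\ift e s}{\overline{\xi}}$, but one must be careful that the deterministic-guard condition is available uniformly and that the post-condition $\form_k\wedge\detm{\neg e}$ of \rname{While-CT} collapses to $\overline{\xi'}$ as \Sli intends (again invoking the fact that $\detm{\neg e}$ is already implied by $\detE(e)$ together with $\neg e$ holding with probability one at loop exit). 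Assembling these pieces and, separately, double-checking each listed \Sli axiom against the \SYSTEM semantics of the corresponding assertion, completes the induction; the final "as a consequence" clause is then one invocation of the \SYSTEM soundness theorem.
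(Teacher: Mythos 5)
Your overall strategy---induction on the \Sli derivation, with conditionals and loops as the key cases and the deterministic guard exploited through a degenerate $\oplus$---is essentially the paper's, and your loop case via \rname{While-CT} with the constant family $\form_n=\overline{\xi}$ is a reasonable variant of the paper's argument (which instead uses certain termination to replace the loop by its $n$-fold unrolling and applies \rname{Seq}). However, your conditional case has a genuine flaw as executed. After applying \rname{Cond} with the two branch triples $\hoare{\overline{\xi}\wedge\detm{e}}{s_1}{\overline{\xi'}}$ and $\hoare{\overline{\xi}\wedge\detm{\neg e}}{s_2}{\overline{\xi'}}$, the postcondition you obtain is $\overline{\xi'}\oplus\overline{\xi'}$, and your final \rname{Conseq} step rests on the implication $\overline{\xi'}\oplus\overline{\xi'}\Rightarrow\overline{\xi'}$, justified by the claim that conjunctions of $\detE$, $\indep$, and distribution-law atoms are stable under such recombination. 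That implication is not valid: the paper's footnoted Bernoulli example, which you yourself cite, exhibits a sub-distribution that is a sum of two pieces each satisfying $\indep\{x_1,x_2\}$ while the sum does not. The deterministic guard does not rescue the implication; what it actually gives you is that one of the two summands of the \emph{output} sub-distribution is null, but the assertion $\overline{\xi'}\oplus\overline{\xi'}$ has forgotten that fact, so \rname{Conseq} cannot recover $\overline{\xi'}$ from it.

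The repair is the one in the paper: perform a case analysis on the disjunction $\detm{e}\vee\detm{\neg e}$ and instantiate \rname{Cond} asymmetrically so that the nullity of the dead branch is recorded in the postcondition. In the case $\detm{e}$, take $\form_1=\overline{\xi}$, $\form_2=\bot$, $\form'_1=\overline{\xi'}$, $\form'_2=\detm{\bot}$ (the second premise holds vacuously); the precondition $(\overline{\xi}\wedge\detm{e})\oplus(\form_2\wedge\detm{\neg e})$ is then implied by $\overline{\xi}\wedge\detm{e}$, and the postcondition $\overline{\xi'}\oplus\detm{\bot}$ \emph{is} logically equivalent to $\overline{\xi'}$, since $\detm{\bot}$ forces the second summand to be the null sub-distribution. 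The case $\detm{\neg e}$ is symmetric. With this fix, carried through the guarded iterations $\ift{e}{s}$ inside your \rname{While-CT} argument as well, the remainder of your plan---structural rules, assignment via commutation of the translation with substitution, sampling via the properties of $\Samp{x}{g}$, and checking the \Sli axioms as valid implications for \rname{Conseq}---matches the paper's proof.
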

\begin{proof}[Proof sketch] By induction on the derivation.
  The interesting cases are conditionals and loops. For
  conditionals, the soundness follows from the soundness of the rule:
  $$
  \inferrule{\hoare{\form}{s_1}{\form'}\\
    \hoare{\form}{s_2}{\form'} \\
    \detm{e}\vee\detm{\neg e}}
  {\hoare{\form}{\ifstmt{e}{s_1}{s_2}}{\form'}}
  $$
To prove the soundness of this rule, we proceed by case analysis on
$\detm{e} \vee\detm{\neg e}$. We treat the case $\detm{e}$; the other
case is similar. In this case, $\form$ is equivalent to $\form_1\wedge
\detm{e} \oplus\form_2\wedge\detm{\neg e}$, where $\form_1=\form$ and
$\form_2=\bot$. Let $\form_1'=\form'$ and $\form_2=\detm{\bot}$;
again, $\form_1'\oplus\form_2'$ is logically equivalent to
$\form'$. The soundness of the rule thus follows from the soundness of
the [\textsc{Cond}] and [\textsc{Conseq}] rules.
For loops, there exists a natural number $n$ such that $\while{b}{s}$ is
semantically equivalent to $(\ift{b}{s})^n$.  By assumption
$\indhoare{\xi}{s}{\xi}$ holds, and thus by induction hypothesis
$\hoare{\overline{\xi}}{s}{\overline{\xi}}$. We also have
$\xi\implies\detE(b)$, and hence
$\hoare{\overline{\xi}}{\ift{b}{s}}{\overline{\xi}}$. We conclude by [\textsc{Seq}].
\end{proof}

\begin{figure}
\begin{mathpar}
\inferrule[\Sli-Assgn]{~}
     {\indhoare{\xi\subst{x}{e}}{x \asn e}{\xi}} 
\and
\inferrule[\Sli-Sample]
  {\{x \} \cap \Vars(E) \cap \Vars(e) = \emptyset}
  {\indhoare{\indep E}{x \rnd \binD(e,p)}{\indep (E \cup \{x\})
      \land \follows{x}{\binD(e,p)}}}
\and
\inferrule[\Sli-Seq]
    {\indhoare{\xi}{s_1}{\xi'}\\
     \indhoare{\xi'}{s_2}{\xi''}}
    {\indhoare {\xi}{s_1;s_2}{\xi''}}
\and
\inferrule[\Sli-Cond]
    {\indhoare{\xi}{s_1}{\xi'}\\
     \indhoare{\xi}{s_2}{\xi'}\\\\
     \xi \implies\detE(b) }
    {\indhoare {\xi}{\ifte{b}{s_1}{s_2}}{\xi'}}
\and
\inferrule[\Sli-While]
  {\indhoare{\xi}{s}{\xi} \\ 
   \xi\implies\detE(b) \\
   \sidecond{CTerm} }
  {\indhoare {\xi}{\while{b}{s}}{\xi}}
\end{mathpar}

\caption{\Sli proof rules (selected)}
\label{fig:fil} 
\end{figure}

To illustrate our system \Sli, consider the statement $s$ in \cref{fig:binsum} which
flips a fair coin $N$ times and counts the number of heads.
Using the logic, we prove that $\follows{\lstt{c}}{\binD(N \cdot (N +
1)/2, {1}/{2})}$ is a post-condition for $s$. We take the invariant:
$$\follows{\lstt{c}}{\binD \left({\lstt{j}(\lstt{j}+1)}/{2},{1}/{2}\right)}$$
The invariant holds initially, as $\follows{0}{B(0,{1}/{2})}$.
For the inductive case, we show:
$$\indhoare{\follows{\lstt{c}}{\binD\left(0,{1}/{2}\right)}}{s_0}{
  \follows{\lstt{c}}{\binD \left({(\lstt{j}+1)(\lstt{j}+2)}/{2},{1}/{2}\right)}}$$
where $s_0$ represents the loop body, i.e.\ $\lstt{x}\rnd
\binD\left(\lstt{j},{1}/{2}\right);\lstt{c} \asn \lstt{c} + \lstt{x}$. First, we apply the
rule for sequence taking as intermediate assertion
$$\follows{\lstt{c}}{\binD
  \left({\lstt{j}(\lstt{j}+1)}/{2},{1}/{2}\right)} \land
\follows{\lstt{x}}{\binD\left(\lstt{j},{1}/{2}\right)}
\land \indep \{\lstt{x},\lstt{c}\}$$
\iffull
\begin{figure}
\else
\begin{wrapfigure}{l}{0.3\textwidth}
\fi
\begin{lstlisting}
proc sum () = 
  var c:int, x:int; 
  c := 0; 
  for j := 1 to N do
    x ~~ B(j,1/2);
    c := c + x;
  return c
\end{lstlisting}
\caption{Sum of bin.}\label{fig:binsum}
\iffull
\end{figure}
\else
\end{wrapfigure}
\fi

The first premise follows from the rule for random assignment and structural
rules. The second premise follows from the rule for deterministic assignment and
the rule of consequence, applying axioms about sums of binomial distributions.

We briefly comment on several limitations of \Sli.  First, \Sli is
restricted to programs with deterministic control flow, but this
restriction could be partially relaxed by enriching \Sli with
assertions for conditional independence. Such assertions are already
expressible in the logic of \SYSTEM; adding conditional independence
would significantly broaden the scope of the \Sli proof system and
open the possibility to rely on axiomatizations of conditional
independence (e.g., based on graphoids~\citep{PearlP86}). Second, the
logic only supports sampling from binomial distributions. It is
possible to enrich the language of assertions with clauses
$\follows{c}{g}$ where $g$ can model other distributions, like the
uniform distribution or the Laplace distribution. The main design
challenge is finding a core set of useful facts about these
distributions. Enriching the logic and automating the analysis are
interesting avenues for further work.

\subsection{Embedding the Union Bound Logic}
The program logic \Sahl \citep{BartheGGHS16-icalp} was recently introduced for
estimating accuracy of randomized computations. One main application of \Sahl
is proving accuracy of randomized algorithms, both in the offline and online
settings---i.e.\ with adversary calls.
\Sahl is based on the union bound, a basic tool from probability
theory, and has judgments of the form
$
\ubhl \beta \Phi s \Psi ,
$
where $s$ is a statement, $\Phi$ and $\Psi$ are first-order formulae
over program variables, and $\beta$ is a probability, i.e.\,
$\beta\in{[0,1]}$. A judgment $\ubhl \beta \Phi s \Psi$ is valid if
for every memory $m$ such that $\Phi(m)$, the probability of $\neg
\Psi$ in $\dsem{m}{s}$ is upper bounded by $\beta$, i.e.\,
$\Pr\!_{\dsem{m}{s}}[\neg \Psi]\leq \beta$.

\Cref{fig:ubhl} presents some key rules of \Sahl, including a rule for
sampling from the Laplace distribution $\mathcal{L}_\epsilon$ centered
around $e$. The predicate $\sidecond{CTerm}(k)$ indicates that the
loop terminates in at most $k$ steps on any memory that satisfies the
pre-condition. Moreover, $\beta$ is a function of $\epsilon$.
\begin{figure}
\begin{mathpar}
  \inferrule[\Sahl-Sample]
  { }
  { \ubhl{\beta}
    {\top}
    {{x}\rnd {\mathcal{L}_\epsilon(e)}}
    { |x - e| \leq \frac{1}{\epsilon} \log \frac{1}{\beta} } }
\\
\inferrule[\Sahl-Seq]
  {\ubhl{\beta_1}{\Phi}{s_1}{\Theta}  \qquad \ubhl{\beta_2}{\Theta}{s_2}{\Psi} }
  {\ubhl{\beta_1+\beta_2}{\Phi}{s_1;s_2}{\Psi} }
\\
\inferrule[\Sahl-While]
  {\ubhl{\beta}{\Phi}{c}{\Phi} \\
   \sidecond{CTerm}(k)}
  { \ubhl{k \cdot \beta}{\Phi}{ \while{e}{c} }{\Phi \land \neg e} }
\end{mathpar}
\caption{\Sahl proof rules (selected)}
\label{fig:ubhl}
\end{figure}

\Sahl has a simple embedding into \SYSTEM.

\begin{thm}[Embedding of \Sahl]
If $\ubhl{\beta}{\Phi}{s}{\Psi}$ is derivable in \Sahl, then $\hoare
{\detm{\Phi}}{s} {\pr{\ind{\neg \Psi}}{} \leq \beta}$ is derivable in \SYSTEM.
\end{thm}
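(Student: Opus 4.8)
The plan is to argue by induction on the derivation of $\ubhl{\beta}{\Phi}{s}{\Psi}$ in \Sahl, but with a strengthened induction hypothesis that carries an additive slack: for every $\gamma \geq 0$, the triple $\hoare{\pr{\ind{\neg\Phi}}{}\leq\gamma}{s}{\pr{\ind{\neg\Psi}}{}\leq\beta+\gamma}$ is derivable in the syntactic version of \SYSTEM. The theorem is the case $\gamma=0$, since $\detm{\Phi}$ is equivalent to $\pr{\ind{\neg\Phi}}{}\leq 0$ (probabilities are non-negative), so one application of \rname{Conseq} finishes. The strengthening is exactly what makes the union bound compositional: semantically, if $\Pr_\mu[\neg\Phi]\leq\gamma$, then splitting $\mu$ into the part supported on $\Phi$ and the part supported on $\neg\Phi$, the \Sahl guarantee (integrated over the first part) bounds $\Pr[\neg\Psi]$ by $\beta$ there and by the weight $\leq\gamma$ on the second part, so $\Pr_{\dsem{\mu}{s}}[\neg\Psi]\leq\beta+\gamma$. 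All probabilistic facts (Laplace tail bounds, additivity, non-increase of weight) will enter the \SYSTEM derivations through \rname{Conseq}, which is its intended role as the interface to probability theory.

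For the \Sahl Laplace-sampling rule, $\neg\top=\bot$ makes the precondition $\pr{\ind{\neg\top}}{}\leq\gamma$ just $\True$; I apply \SYSTEM's \rname{Sample} rule, whose pre-term is $\ex{\dlet{t}{\mathcal L_\epsilon(e)}{\ind{|t-e|>\frac1\epsilon\log\frac1\beta}}}$, i.e.\ the expectation over the current state of $\Pr_{t\sim\mathcal L_\epsilon(e)}[|t-e|>\frac1\epsilon\log\frac1\beta]$; the Laplace tail bound shows this is $\leq\beta\cdot\pr\top{}\leq\beta\leq\beta+\gamma$, discharged against $\True$ by \rname{Conseq} (and likewise for any other primitive sampling rules, via the corresponding bound). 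For the \Sahl sequencing rule I use \rname{Seq} with intermediate assertion $\pr{\ind{\neg\Theta}}{}\leq\beta_1+\gamma$ from the hypothesis on $s_1$ at slack $\gamma$, then invoke the hypothesis on $s_2$ at slack $\gamma':=\beta_1+\gamma$: the slacks telescope to $\beta_1+\beta_2+\gamma$. The remaining \Sahl structural/weakening rules, and its adversary rule, translate to \rname{Assgn}, \rname{Conseq}, and \rname{Adv} of \SYSTEM.

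For the \Sahl loop rule, $\while{e}{c}$ terminates in at most $k$ steps on $\Phi$-memories, so I first split the precondition $\pr{\ind{\neg\Phi}}{}\leq\gamma$ into $\detm{\Phi}\oplus(\pr\top{}\leq\gamma)$ and run the loop on both pieces with \rname{Split}. On the $\detm{\Phi}$-piece I use \rname{While-CT} with invariant family $\form_n\eqdef\pr{\ind{\neg\Phi}}{}\leq n\beta$: $\sidecond{CTerm}(k)$ gives the semantic premise (on distributions supported on $\Phi$ the loop equals its $k$-fold unfolding), the per-iteration premise $\hoare{\form_n}{\ift{e}{c}}{\form_{n+1}}$ follows from the hypothesis on $c$ at slack $n\beta$ (a conditional, handled as below), and the conclusion gives $\pr{\ind{\neg\Phi}}{}\leq k\beta\wedge\detm{\neg e}$, hence $\pr{\ind{\neg(\Phi\wedge\neg e)}}{}\leq k\beta$ by \rname{Conseq}. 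On the $\pr\top{}\leq\gamma$-piece I invoke a separate weight-preservation lemma $\hoare{\pr\top{}\leq\gamma}{s}{\pr\top{}\leq\gamma}$, provable by structural induction using \rname{PC} on loop-free fragments and \rname{While-D} with the $d$-closed assertion $\pr\top{}\leq\gamma$ on loops, then weaken via \rname{Conseq}. Recombining with \rname{Split}, the pieces contribute $k\beta$ and $\gamma$, and additivity of probability yields $k\beta+\gamma$.

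The step I expect to be the main obstacle is branching — both the \Sahl conditional rule and the unfolding $\ift{e}{c}$ inside the loop. \SYSTEM's \rname{Cond} and \rname{Split} produce postconditions of the form $\psi\oplus\psi$, and a naive reading loses the union bound: $(\pr{\ind{A}}{}\leq\delta)\oplus(\pr{\ind{A}}{}\leq\delta)$ only entails $\pr{\ind{A}}{}\leq 2\delta$, because $\oplus$ forgets that the two summands came from one distribution whose total failure mass is $\leq\delta$. The fix is to make the correlation explicit before branching: rewrite the precondition $\pr{\ind{A}}{}\leq\delta$ as $\exists\delta_1,\delta_2.\,(\delta_1+\delta_2\leq\delta)\wedge\pr{\ind{A\wedge e}}{}\leq\delta_1\wedge\pr{\ind{A\wedge\neg e}}{}\leq\delta_2$ (valid since $\Pr[A]=\Pr[A\wedge e]+\Pr[A\wedge\neg e]$), introduce $\delta_1,\delta_2$ with \rname{Exists}, and thread the pure constraint $P\eqdef(\delta_1+\delta_2\leq\delta)$ through both branch derivations; after \rname{Cond} the postcondition is $((\pr{\ind{A}}{}\leq\delta_1+\beta)\wedge P)\oplus((\pr{\ind{A}}{}\leq\delta_2)\wedge P)$, which entails $\pr{\ind{A}}{}\leq\delta_1+\delta_2+\beta\leq\delta+\beta$ using $P$. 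Carrying $P$ through the branches is legitimate because conjoining a formula over logical variables only to every assertion of a derivation preserves derivability (a routine induction on the derivation, checking each rule; or a consequence of relative completeness), and the same device resolves the conditional case of the weight-preservation lemma. Everything else is bookkeeping.
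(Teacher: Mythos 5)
The paper states this embedding theorem without giving any proof (neither in the body nor in the appendix), so there is nothing to compare you against; judging your argument on its own terms, the overall architecture is the right one — a rule-by-rule translation, a strengthened induction hypothesis with slack, \rname{Conseq} as the interface for the Laplace tail bound and probability arithmetic, the existential split of the failure mass across the guard, and \rname{While-CT} plus a weight-preservation argument recombined with \rname{Split} for the loop rule. The sampling, assignment, sequencing and loop cases go through essentially as you describe.

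The genuine gap is the \Sahl conditional rule: your additive-slack induction hypothesis is too weak for it, and your own computation shows this. In \Sahl the rule for $\ifte{e}{s_1}{s_2}$ has both premises \emph{and} the conclusion at the same index $\beta$ (from any single memory only one branch runs). With your IH, the $e$-piece of failure mass $\delta_1$ yields $\Pr[\neg\Psi]\le\beta+\delta_1$ and the $\neg e$-piece yields $\Pr[\neg\Psi]\le\beta+\delta_2$ --- a full additive $\beta$ on each piece, no matter how little weight enters that branch --- so the $\oplus$-recombination gives $2\beta+\gamma$, not the required $\beta+\gamma$. Your displayed postcondition $((\Pr[\neg\Psi]\le\delta_1+\beta)\wedge P)\oplus((\Pr[\neg\Psi]\le\delta_2)\wedge P)$ is only correct when the else-branch carries index $0$, which is true for the loop unfolding $\ift{e}{c}$ but not for the general conditional; and since \SYSTEM has no conjunction rule, you cannot recover the missing weight information after the fact. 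The repair is to make the IH weight-sensitive: prove, for all $\gamma\ge 0$ and $w\in[0,1]$, that $\hoare{\Pr[\neg\Phi]\le\gamma\wedge\Pr[\top]\le w}{s}{\Pr[\neg\Psi]\le\beta w+\gamma\wedge\Pr[\top]\le w}$ is derivable. Then at a conditional you split both the failure mass ($\delta_1+\delta_2\le\gamma$) and the weight ($w_1+w_2\le w$) across the guard, branch $i$ contributes $\beta w_i+\delta_i$, and recombination gives $\beta w+\gamma$; sequencing threads the same $w$ (the weight conjunct in the postcondition replaces your separate weight-preservation lemma); the loop accumulates $k\beta w+\gamma$ exactly as in your \rname{While-CT} argument; the Laplace case bounds the \rname{Sample} pre-term by $\beta\Pr[\top]\le\beta w$; and the theorem is the instance $\gamma=0$, $w=1$. (A minor simplification: the ``thread $P$ through the derivation'' lemma is unnecessary, since $\delta_1,\delta_2$ are meta-level values in the premise of \rname{Exists}; for values making $P$ true it can be added by \rname{Conseq}, and for values making $P$ false the precondition is unsatisfiable and \rname{Conseq} from any instance with the right postcondition suffices.)
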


\section{Case Studies: Verifying Randomized Algorithms}\label{sec:examples}

In this section, we will demonstrate \SYSTEM on a selection of
examples; we present further examples in the supplemental material.
Together, they exhibit a wide variety of different proof techniques
and reasoning principles which are available in the \SYSTEM's
implementation.

\paragraph*{Hypercube Routing.}
 will begin with the \emph{hypercube routing} algorithm
\citep{valiant1982scheme,Valiant:1981:USP:800076.802479}.  Consider a
network topology (the \emph{hypercube}) where each node is labeled by
a bitstring of length $D$ and two nodes are connected by an edge if
and only if the two corresponding labels differ in exactly one bit
position.



In the network, there is initially one packet at each node, and each
packet has a unique destination. The algorithm implements a routing
strategy based on \emph{bit fixing}: if the current position has
bitstring $i$, and the target node has bitstring $j$, we compare the
bits in $i$ and $j$ from left to right, moving along the edge that
corrects the first differing bit.  Valiant's algorithm uses
randomization to guarantee that the total number of steps
grows \emph{logarithmically} in the number of packets.  In the first
phase, each packet $i$ select an intermediate destination
$\rho(i)$ uniformly at random, and use bit fixing to
reach $\rho(i)$. In the second phase, each packet use bit fixing
to go from $\rho(i)$ to the destination $j$.  We will focus on
the first phase since the reasoning for the second phase is nearly
identical.  We can model the strategy with the code in
\Cref{fig:hypercube}, using some syntactic sugar for the $\kfor$
loops.\footnote{%
  Recall that the number of node in a hypercube of dimension $D$ is $2^D$ so
  each node can be identified by a number in $[1,2^D]$.}
\iffull
\begin{figure}
\else
\begin{wrapfigure}{l}{0.47\textwidth}
\fi
\begin{lstlisting}
proc route ($D$ $T$ : int) :
  var $\rho$, pos, usedBy : node map;
  var nextE : edge;
  pos  :=  Map.init id $2^D$; $\rho$  :=   Map.empty;
  for $i$  :=  1 to $2^D$ do
    $\rho$[i] ~~ $[1,2^D]$
    for $t$  :=  1 to $T$ do
      usedBy  :=  Map.empty;
      for $i$  :=  1 to $2^D$ do
        if pos$[i] \neq \rho[i]$ then
          nextE  :=  getEdge pos[$i$] $\rho[i]$;
          if usedBy[nextE] = $\bot$ then
            // Mark edge used
            usedBy[nextE]  :=  $i$;
            // Move packet
            pos[$i$]  :=  dest nextE
  return (pos, $\rho$)
\end{lstlisting}
\caption{Hypercube Routing}\label{fig:hypercube}
\iffull
\end{figure}
\else
\end{wrapfigure}
\fi
We assume that initially, the position of the packet $i$ is at node $i$
(see \lstt{Map.init}). Then, we initialize the random intermediate
destinations $\rho$. The remaining loop encodes the evaluation of the
routing strategy iterated $T$ time.  The variable \lstt{usedBy} is a
map that logs if an edge is already used by a packet, it is empty at
the beginning of each iteration.  For each packet, we try to move it
across one edge along the path to its intermediate destination.  The
function \lstt{getEdge} returns the next edge to follow, following the
bit-fixing scheme.  If the packet can progress (its edge is not used),
then its current position is updated and the edge is marked as used.

We show that if the number of timesteps $T$ is $4 D + 1$, then all
packets reach their intermediate destination in at most $T$ steps,
except with a small probability $2^{-2D}$ of failure. That is, the
number of timesteps grows linearly in $D$, logarithmic in the number
of packets. This is formalized in our system as:
\[
  \hoare{T = 4D + 1}{\lstt{route}}{\Pr[ \exists i.\; \lstt{pos}[i] \neq
    \lstt{$\rho$}[i] ] \leq 2^{-2D}]}
\]

\paragraph*{Modeling Infinite Processes.}
\iffull
\begin{figure}
\else
\begin{wrapfigure}{l}{0.4\textwidth}
\fi
\begin{lstlisting}
proc coupon ($N$ : int) :
  var int cp[$N$], $t$[$N$];
  var int $X$ :=   0;
  for $p$  :=  1 to $N$ do
    ct  :=  0;
    cur ~~ $[1,N]$;
    while cp[cur] = 1 do
      ct  :=  ct + 1;
      cur ~~ $[1,N]$;
    $t$[$p$]       :=  ct;
    cp[cur]  :=  1;
    $X$  :=  $X$ + $t$[$p$];
  return $X$
\end{lstlisting}
\caption{Coupon collector}\label{fig:coupon}
\iffull
\end{figure}
\else
\end{wrapfigure}
\fi
Our second example is the \emph{coupon collector} process. The algorithm draws a
uniformly random coupon (we have $N$ coupon) on each day, terminating when it
has drawn at least one of each kind of coupon.  The code of the algorithm is
displayed in \cref{fig:coupon}; the array \lstt{cp} records of the coupons seen
so far, $\lstt{t}$ holds the number of steps taken before seeing a new coupon,
and $X$ tracks of the total number of steps.  Our goal is to bound the average
number of iterations. This is formalized in our logic as:
\[
  \{ \llass \}\;
  \lstt{coupon} \;
  \left\{
      \ex{X} = \textstyle\sum_{i \in [1,N]} \left(
        \frac{N}{N - i + 1} \right)
  \right\} .
\]


\paragraph*{Limited Randomness.}
\iffull
\begin{figure}
\else
\begin{wrapfigure}{r}{.45\textwidth}
\fi
\begin{lstlisting}
proc pwInd (N : int) :
  var bool X[2${}^{\mbox N}$], B[N];
  for i :=  1 to N do
    B[i] ~~ Ber(1/2);
  for j  :=  1 to 2${}^{\mbox N}$ do
    X[j]  :=  0;
    for k  :=  1 to N do
      if k $\in$ bits(j) then
        X[j]  :=  X[j] $\oplus$ B[k]
  return X
\end{lstlisting}
\caption{Pairwise Independence}\label{fig:pwindep}
\iffull
\end{figure}
\else
\end{wrapfigure}
\fi
\emph{Pairwise independence} says that if we see the result of
$X_i$, we do not gain information about all other variables
$X_k$. However, if we see the result of \emph{two} variables $X_i,
X_j$, we may gain information about $X_k$.
There are many constructions in the algorithms literature that grow
a small number of independent bits into more pairwise
independent bits. \Cref{fig:pwindep} gives one procedure, where
$\oplus$ is exclusive-or, and $\lstt{bits(j)}$ is the set of 
positions set to $1$ in the binary expansion of $\lstt{j}$.
The proof uses the following fact, which we fully verify:
for a uniformly distributed Boolean random variable $Y$, and a random
variable $Z$ of any type,
\begin{equation} \label{eq:pw-indep}
  Y \mathrel{\indep} Z \Rightarrow Y \oplus f(Z) \mathrel{\indep} g(Z)
\end{equation}
for any two Boolean functions $f, g$. Then, note that
$\lstt{X}[i] = \bigoplus_{\{j \in \lstts{bits}(i)\}} \lstt{B}[j]$
where the big XOR operator ranges over the indices $j$ where the bit
representation of $i$ has bit $j$ set. For any two $i, k \in [1, \dots,
2^{\lstts{N}}]$ distinct, there is a bit position in $[1, \dots, \lstt{N}]$
where $i$ and $k$ differ; call this position $r$ and suppose it is set in $i$
but not in $k$. By rewriting,
\[
  \lstt{X}[i] = \lstt{B}[r] \oplus \bigoplus_{\{j \in \lstts{bits}(i) \setminus
    r\}} \lstt{B}[j]
  \quad \text{and} \quad
  \lstt{X}[k] = \bigoplus_{\{j \in \lstts{bits}(k) \setminus r\}} \lstt{B}[j] .
\]
Since $\lstt{B}[j]$ are all independent, $\lstt{X}[i] \mathrel{\indep} \lstt{X}[k]$ follows from
\cref{eq:pw-indep} taking $Z$ to be the distribution on tuples $\langle
\lstt{B}[1],
\dots, \lstt{B}[\lstt{N}] \rangle$ excluding $\lstt{B}[r]$. This verifies
pairwise independence:
$$
\hoare{\llass}
{\lstt{pwInd(N)}}
{\llass \land \forall i , k \in [2^{\lstts{N}}] . \; i \neq k \Rightarrow
  \lstt{X}[i] \mathrel{\indep} \lstt{X}[k]  } .
$$

\paragraph*{Adversarial Programs.}
Pseudorandom functions (PRF) and pseudorandom permutations (PRP) are two
idealized primitives that play a central role in the design of symmetric-key
systems. Although the most natural assumption to make about a blockcipher is
that it behaves as a pseudorandom permutation, most commonly the security of
such a system is analyzed by replacing the blockcipher with a perfectly random
function. The PRP/PRF Switching Lemma
\cite{DBLP:conf/stoc/ImpagliazzoR89,DBLP:conf/eurocrypt/BellareR06} fills the
gap: given a bound for the security of a blockcipher as a pseudorandom function,
it gives a bound for its security as a pseudorandom permutation.

\begin{lem}[PRP/PRF switching lemma]
Let $A$ be an adversary with blackbox access to an oracle O implementing 
either a random permutation on $\bs{l}$ or a random function from $\bs{l}$
to $\bs{l}$. Then the probability that the adversary $A$ 
distinguishes between the two oracles in at most $q$ calls is
bounded by
$$|\Pr_{\textrm{PRP}}[b \land |H| \leq q] - 
   \Pr_{\textrm{PRF}}[b \land |H| \leq q] | \leq \frac{q(q-1)}{2^{l+1}} ,$$
where $H$ is a map storing each adversary call and $|H|$ is its size.
\end{lem}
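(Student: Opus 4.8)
The plan is to follow the classical game-playing argument, carrying the only delicate step --- the counting --- through the \rname{Adv} rule of \SYSTEM. First I would instrument both experiments with a global Boolean $\mathsf{bad}$ and the query log $H$ and rewrite the oracle in the standard ``eager'' form: on a fresh query it samples $y$ uniformly from $\bs{l}$, sets $\mathsf{bad}$ if $y$ already lies in the range of $H$, and --- only in the permutation game --- resamples $y$ from the complement of the range; on a repeated query it returns the stored value. The two oracles are then syntactically identical except inside code guarded by $\mathsf{bad}$, so the Fundamental Lemma of game playing gives
\[
  \bigl|\Pr_{\mathrm{PRP}}[b \land |H|\leq q] - \Pr_{\mathrm{PRF}}[b \land |H|\leq q]\bigr|
  \;\leq\; \Pr_{\mathrm{PRF}}[\mathsf{bad}\land |H|\leq q] .
\]
This step is relational and lives outside the unary logic proper; I would discharge it with the relational reasoning of the host prover (or simply cite it as standard). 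The genuinely new obligation is then the \SYSTEM judgment $\hoare{\llass}{G}{\Pr[\mathsf{bad}\land |H|\leq q]\leq q(q-1)/2^{l+1}}$, where $G$ first sets $H := \emptyset;\ \mathsf{bad} := \mathbf{false}$ and then calls the adversary.

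For the adversary call I would invoke \rname{Adv} with the \emph{constant} invariant
\[
  \eta \;\eqdef\;
  \Pr[\mathsf{bad}\land |H|\leq q]
  \;+\;
  \ex{\ind{|H|\leq q}\cdot \tfrac{(q-|H|)(q+|H|-1)}{2^{l+1}}}
  \;\leq\; \tfrac{q(q-1)}{2^{l+1}},
\]
whose second summand is exactly the ``remaining budget'' $\sum_{i=|H|}^{q-1} i/2^{l}$: it equals the full budget when $H$ is empty and vanishes once $|H|=q$. Three things must be checked. (i) $\eta$ is $d$-closed: as a non-strict comparison of bounded probabilistic expressions it is $t$-closed, and both summands on the left are monotone in the sub-distribution (the budget term because its integrand is non-negative), so $\eta$ is downward closed. (ii) $\eta$ is separated from $\{x,\AVar\}$, since it mentions only the globals $\mathsf{bad}$ and $H$, which adversarial code may not write. (iii) $\eta$ is preserved by an arbitrary oracle call $\call{x}{O}{e}$: I would split the input sub-distribution into the part where the query already lies in $\dom(H)$ --- where the call is the identity and $\eta$ survives by downward closure --- and the part where it is fresh. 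On the fresh part $|H|$ grows by one and $\mathsf{bad}$ is newly set with probability at most $|H|/2^{l}$ (the per-memory bound $\Pr[y\in\mathrm{range}(H)]\le|H|/2^{l}$ for uniform $y$), and the telescoping identity $\tfrac{(q-j)(q+j-1)}{2^{l+1}} = \tfrac{j}{2^{l}} + \tfrac{(q-j-1)(q+j)}{2^{l+1}}$ for $0\leq j\leq q-1$ shows that this new collision mass is paid for exactly by the shrinkage of the budget term, so $\eta$ is restored. Since $\eta$ holds with equality at the start of $G$ ($|H|=0$, $\mathsf{bad}$ false, weight $1$) and \rname{Conseq} discards the non-negative budget term from the postcondition $\eta$, I obtain the stated bound; composing with the inequality above concludes.

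The hard part is item (iii), and more precisely \emph{discovering} an invariant that is simultaneously $d$-closed and inductive. The naive candidate $\Pr[\mathsf{bad}\land|H|\leq q]\leq q(q-1)/2^{l+1}$ is $d$-closed but not preserved: an adversary can ``save up'' by repeating queries while keeping $|H|$ small, then spend a burst of fresh queries whose collision probability overflows the budget that the already-set-$\mathsf{bad}$ mass has fully claimed. Going the other way, a tight invariant of the form $\Pr[\mathsf{bad}]\leq\ex{g(|H|)}$ with $g$ a partial sum fails to be downward closed. The potential-function form threads the needle by carrying exactly the slack still needed, indexed by the current value of $|H|$, while keeping the whole left-hand side monotone in the sub-distribution. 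The remaining ingredients --- unfolding $\call{x}{O}{e}$ via \rname{Call} and the procedure-call semantics, the arithmetic of the partial sums, and the routine structural rules for the initialisation of $G$ --- I expect to be straightforward. A pleasant feature is that, unlike earlier rules for adversaries, \rname{Adv} imposes no bound on the number of oracle calls: the cutoff $|H|\leq q$ baked into $\eta$ does the bookkeeping, so the argument applies verbatim to adversaries making an unbounded or randomly chosen number of queries.
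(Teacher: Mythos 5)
Your proposal is correct and follows the same overall skeleton as the paper---reduce via the Fundamental Lemma of game playing to bounding $\Pr[\lstt{bad} \land |H|\leq q]$ in the lazily-sampled game, then discharge that bound with the \rname{Adv} rule using a $d$-closed, oracle-preserved invariant---but the invariant itself is genuinely different. The paper takes the universally quantified family $\forall k,\ \Pr[\lstt{bad} \land |H|\leq k] \leq k(k-1)/2^{l+1}$, which is $d$-closed (an arbitrary conjunction of non-strict comparisons against constants) and inductive because a fresh query shifts the relevant index from $k-1$ to $k$, contributing at most $(k-1)/2^{l}$; preservation is then discharged mechanically with the precondition calculus \rname{PC}. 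You instead thread the same difficulty---which you correctly diagnose: the single-instance assertion $\Pr[\lstt{bad}\land|H|\leq q]\leq q(q-1)/2^{l+1}$ is $d$-closed but not inductive, and the ``tight'' form with a distribution-dependent right-hand side is not downward closed---by a single potential-function invariant that carries the remaining budget additively on the left, balanced by your telescoping identity. Both fixes work; the paper's $\forall k$ form is shorter to state and its preservation proof is more mechanical, while yours makes the accounting explicit and keeps one inequality. One small caution on your step (iii): the case split on fresh versus repeated queries should not be recombined with the \rname{Split} rule (since $\eta\oplus\eta$ does not imply $\eta$); what your telescoping argument really establishes is that the expected potential is non-increasing per memory under one oracle call, and the recombination is by linearity of the semantics---exactly the kind of reasoning the \rname{PC} rule plus \rname{Conseq} packages, as in the paper. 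With that reading, your preservation argument is sound, and your closing observation that $\eta$'s built-in cutoff $|H|\leq q$ lets \rname{Adv} handle adversaries making arbitrarily many calls matches the paper's setting.
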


Proving this lemma can be done using the Fundamental Lemma of
Game-Playing, and bounding the probability of \emph{bad} in the
program from \cref{fig:prp/prf}. We focus on the latter. Here we apply
the [\textsc{Adv}] rule of \SYSTEM with the invariant
$\forall k, \Pr[\lstt{bad} \land |H| \leq k] \leq
\frac{k(k-1)}{2^{l+1}}$
where $|H|$ is the size of the map $H$, i.e. the number 
of adversary call.
Intuitively, the invariant says that at each call to the oracle 
the probability that $\lstt{bad}$ has been set before and
that the number of adversary call is less than $k$ is bounded 
by a polynomial in $k$. 

The invariant is $d$-closed and true before the
adversary call, since at that point $\Pr[\lstt{bad}] = 0$.  Then we
need to prove that the oracle preserves the invariant, which can be
done easily using the precondition calculus ([\textsc{PC}] rule).

\begin{figure}
  \begin{minipage}{0.6\linewidth}
\begin{lstlisting}
var $H$: ($\bs{l}$, $\bs{l}$) map;
proc orcl ($q$:$\bs{l}$):
  var $a:\bs{l}$;
  if $q \not\in H$ then
    $a$    ~~ $\bs{l}$;
    bad  :=  bad || $a \in \textrm{codom}(H)$;
    $H[q]$  :=  $a$;
  return $H[q]$;
\end{lstlisting}
\end{minipage}
\begin{minipage}{0.39\linewidth}
\begin{lstlisting}
proc main():
  var $b$: bool;
  bad  :=  false;
  $H$    :=  [];
  $b$    :=  A();
  return $b$;
\end{lstlisting}
\end{minipage}
\caption{PRP/PRF game}\label{fig:prp/prf}
\end{figure}

\section{Implementation and Mechanization}

We have built a prototype implementation of \SYSTEM within
\EasyCrypt~\citep{BartheGHZ11,BartheDGKSS13}, a theorem prover
originally designed for verifying cryptographic protocols. \EasyCrypt
provides a convenient environment for constructing proofs in various
Hoare logics, supporting interactive, tactic-based proofs for
manipulating assertions and allowing users to invoke external tools,
like SMT-solvers, to discharge proof obligations. \EasyCrypt provides
a mature set of libraries for both data structures (sets, maps, lists,
arrays, etc.) and mathematical theorems (algebra, real analysis,
etc.), which we extended with theorems from probability theory.

\begin{wraptable}{l}{0.4\textwidth}
\begin{tabular}{lcc}
  \toprule
  Example & LC & FPLC \\
  \midrule
  {\tt hypercube       } & 100 & 1140 \\
  {\tt coupon          } & 27  & 184  \\
  {\tt vertex-cover    } & 30  & 61   \\
  {\tt pairwise-indep  } & 30  & 231  \\
  {\tt private-sums    } & 22  & 80   \\
  {\tt poly-id-test    } & 22  & 32   \\
  {\tt random-walk     } & 16  & 42   \\
  {\tt dice-sampling   } & 10  & 64   \\
  {\tt matrix-prod-test} & 20  & 75   \\
  \bottomrule
\end{tabular}
\caption{Benchmarks \label{tab:examples}}
\end{wraptable}

We used the implementation for verifying many examples from the
literature, including all the programs presented in
\cref{sec:examples} as well as some additional examples (such as
polynomial identity test, private running sums, properties about
random walks, etc.). The verified proofs bear a strong resemblance to
the existing, paper proofs.  Independently of this work, \SYSTEM has
been used to formalize the main theorem about a randomized
gossip-based protocol for distributed systems \citep[Theorem
2.1]{KempeDG03}. Some libraries developed in the scope of \SYSTEM
have been incorporated into the main branch of \EasyCrypt, including
a general library on probabilistic independence.


\paragraph*{A New Library for Probabilistic Independence.}
In order to support assertions of the concrete program logic, we enhanced
the standard libraries of \EasyCrypt, notably the ones dealing with
big operators and sub-distributions. Like all \EasyCrypt libraries, they are
written in a foundational style, i.e.\, they are defined instead of axiomatized.
A large part
of our libraries are proved formally from first principles. However,
some results, such as concentration bounds, are currently declared as
axioms.

\jh{Check: are concentration bounds axiomatized?}

Our formalization of probabilistic independence deserves special
mention. We formalized two different (but logically equivalent)
notions of independence. The first is in terms of products of
probabilities, and is based on heterogenous lists. Since \SYSTEM (like
\EasyCrypt) has no support for heterogeneous lists, we use a smart
encoding based on second-order predicates. The second definition is
more abstract, in terms of product and marginal distributions. While
the first definition is easier to use when reasoning about randomized
algorithms, the second definition is more suited for proving
mathematical facts. We prove the two definitions equivalent, and
formalize a collection of related theorems.


\paragraph*{Mechanized Meta-Theory.}
The proofs of soundness and relative completeness of the
abstract logic, without adversary calls, and the syntactical
termination arguments have been mechanized in the \textsf{Coq} proof
assistant. The development is available in supplemental material.

\section{Related Work}\label{sec:related}

\paragraph*{More on Assertion-Based Techniques.}
The earliest assertion-based system is due to
Ramshaw~\citep{Ramshaw79}, who proposes a program logic where
assertions can be formulas involving \emph{frequencies}, essentially
probabilities on sub-distributions. Ramshaw's logic allows assertions
to be combined with operators like $\oplus$, similar to our
approach. \citet{Hartog:thesis} presents a Hoare-style logic with
general assertions on the distribution, allowing expected values and
probabilities. However, his \textbf{while} rule is based on a semantic
condition on the guarded loop body, which is less desirable for
verification because it requires reasoning about the semantics of
programs. \citet{ChadhaCMS07} give decidability results for a
probabilistic Hoare logic without \textbf{while} loops. We are not
aware of any existing system that supports assertions about general
expected values; existing works also restrict to Boolean
distributions. \citet{RandZ15} formalize a Hoare logic for
probabilistic programs but unlike our work, their assertions are
interpreted on \emph{distributions} rather than sub-distributions.
For conditionals, their semantics rescales the distribution of states
that enter each branch. However, their assertion language is limited
and they impose strong restrictions on loops.


\paragraph*{Other Approaches.}
Researchers have proposed many other approaches to verify probabilistic
program. For instance, verification
of Markov transition systems goes back to at least
\citet{HartSP83,SharirPH84}; our condition for ensuring almost-sure
termination in loops is directly inspired by their work. Automated methods
include model checking (see e.g., \citep{Baier16,Katoen16,KNP11}) and abstract
interpretation (see e.g., \citep{Monniaux00,CousotM12}). Techniques for
reasoning about higher-order (functional) probabilistic languages are an active
subject of research (see e.g.,
\citep{bizjak2015step,10.1007/978-3-642-54833-8_12,DalLago:2014:CEH:2535838.2535872}).
For analyzing probabilistic loops, in particular, there are tools for
reasoning about running time.  There are also automated systems for
synthesizing invariants \citep{ChatterjeeFNH16,BEFFH16}.
\citet{ChakarovS13,ChakarovS14} use a martingale method to compute the
expected time of the coupon collector process for $N=5$---fixing $N$
lets them focus on a program where the outer \textbf{while} loop is
fully unrolled. Martingales are also used by \citet{FioritiH15} for
analyzing probabilistic termination.  Finally, there are approaches
involving symbolic execution; \citet{SampsonPMMGC14} use a mix of
static and dynamic analysis to check probabilistic programs from the
approximate computing literature.

\section{Conclusion and Perspectives}
We introduced an expressive program logic for probabilistic programs, and showed
that assertion-based systems are suited for practical verification of
probabilistic programs. Owing to their richer assertions, program logics are a
more suitable foundation for specialized reasoning principles than
expectation-based systems. As evidence, our program logic can be smoothly
extended with custom reasoning for probabilistic independence and union bounds.
Future work includes proving better accuracy bounds for differentially private
algorithms, and exploring further integration of \SYSTEM into \EasyCrypt.

\paragraph*{Acknowledgments.}
We thank the reviewers for their helpful comments. This work benefited from
discussions with Dexter Kozen, Annabelle McIver, and Carroll Morgan. This work
was partially supported by ERC Grant \#679127, and NSF grants 1513694 and
1718220.

\bibliographystyle{splncs03}
\bibliography{header,phl}

\clearpage

\iffull

\appendix

\section{Soundness of \SYSTEM}\label{app:soundness}

Before presenting the proof of soundness, we will introduce two technical lemmas
needed for the loop rules.  Intuitively, $d$- and $t$-closed assertions are
preserved in the limit of general and lossless loops, respectively.

\begin{prop}\strut \label{lem:close}
  \begin{enumerate}
  \item If $\form$ is $d$-closed and is s.t.
    \begin{gather*}
      \forall \mu .\, \mu \models \form \implies
        \dsem{\mu}{\ift{b}{s}} \models \form,
    \end{gather*}
    then $\forall \mu .\, \mu \models \form
      \implies \dsem{\mu}{\while{b}{s}} \models \form$.

  \item If $\form$ is $t$-closed and is s.t.
    \begin{gather*}
      \forall \mu .\, \mu \models \form \implies
        \dsem{\mu}{\ift{b}{s}} \models \form,
    \end{gather*}
    then $\forall \mu .\, \mu \models \form
      \implies \dsem{\mu}{\while{b}{s}} \models \form$,
    provided that $\while{b}{s}$ is lossless.
  \end{enumerate}
\end{prop}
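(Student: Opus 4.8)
The plan is to prove both parts with a common setup and invoke the closedness hypotheses only at the limit. Fix $\mu$ with $\mu \models \form$, and for $n\in\NN$ set $\nu_n \eqdef \dsem{\mu}{(\ift{b}{s})^n}$ (the $n$-th approximation) and $\lambda_n \eqdef \dsem{\mu}{(\ift{b}{s})^n;\ift{b}{\abort}}$ (the $n$-th lower approximation), so that $\dsem{\mu}{\while{b}{s}} = \lim_{n}\lambda_n$ by the loop semantics and $(\lambda_n)_n$ is increasing, hence convergent by \cref{lem:lim:distr}. Two identities drive everything: by the sequencing semantics $\nu_{n+1} = \dsem{\nu_n}{\ift{b}{s}}$; and splitting on the guard, $\nu_n = \drestr{\nu_n}{b} + \drestr{\nu_n}{\neg b}$ with $\lambda_n = \drestr{\nu_n}{\neg b}$, since the trailing $\ift{b}{\abort}$ merely discards the mass on states satisfying $b$. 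A routine induction on $n$ then gives $\nu_n \models \form$ for all $n$: the base case is $\nu_0 = \mu \models \form$, and the inductive step applies the hypothesis $\forall\mu.\,\mu\models\form\Rightarrow\dsem{\mu}{\ift{b}{s}}\models\form$ to $\nu_n$.

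For part (1), since $\lambda_n = \drestr{\nu_n}{\neg b} \leq \nu_n$ and $\form$ is downward closed, $\lambda_n \models \form$ for every $n$. The sequence $(\lambda_n)_n$ converges to $\dsem{\mu}{\while{b}{s}}$, and since $\form$ is $d$-closed it is in particular $t$-closed; applying $t$-closedness to $(\lambda_n)_n$ gives $\dsem{\mu}{\while{b}{s}} \models \form$.

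For part (2) downward closure is unavailable, so we instead transport $\form$ along the approximations $(\nu_n)_n$, which requires showing $(\nu_n)_n$ converges to the same limit as $(\lambda_n)_n$. This is the only place losslessness is used: since the lower approximations increase to $\dsem{\mu}{\while{b}{s}}$, \cref{lem:limitProb} applied to the trivial event gives $\wt{\lambda_n} \to \wt{\dsem{\mu}{\while{b}{s}}}$, which equals $\wt{\mu}$ by losslessness; and $\wt{\lambda_n} \leq \wt{\nu_n} \leq \wt{\mu}$ because evaluating a command never increases total mass. Hence $\wt{\drestr{\nu_n}{b}} = \wt{\nu_n} - \wt{\lambda_n} \to 0$, so $\drestr{\nu_n}{b}(m) \to 0$ for each state $m$, and $\nu_n(m) = \lambda_n(m) + \drestr{\nu_n}{b}(m) \to (\lim_k\lambda_k)(m) = \dsem{\mu}{\while{b}{s}}(m)$. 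Thus $(\nu_n)_n$ converges to $\dsem{\mu}{\while{b}{s}}$ with $\nu_n \models \form$ throughout, and $t$-closedness of $\form$ concludes.

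I expect the main obstacle to be the convergence argument in part (2): showing that for a lossless loop the approximations and lower approximations share a limit when the initial sub-distribution $\mu$ need not have weight $1$. The excerpt records this only for weight-$1$ output distributions (via \cref{lem:less:eq:distr}), so the argument above is its mild generalization. Everything else is bookkeeping once the identities $\nu_{n+1} = \dsem{\nu_n}{\ift{b}{s}}$ and $\lambda_n = \drestr{\nu_n}{\neg b}$ are in place.
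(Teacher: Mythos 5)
Your proof is correct and follows essentially the same route as the paper's: induction shows the approximations $\dsem{\mu}{(\ift{b}{s})^n}$ satisfy $\form$, downward closure transfers this to the lower approximations, and $t$-closedness passes to the limit. For part (2) the paper only says ``the second is similar''; your explicit argument that for a lossless loop the approximations and lower approximations share the same limit even when $\wt{\mu}<1$ (via $\wt{\drestr{\nu_n}{b}}\to 0$) correctly supplies the detail that the paper's weight-$1$ lemma does not literally cover.
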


\begin{proof}
  We only treat the first case; the second is similar.  Let $\form$ be a
  \kdclosed assertion s.t. for any sub-distribution $\mu$, if $\form(\mu)$, then
  $\dsem{\mu}{\ift{b}{s}} \models \form$.
  We prove by induction on $n$ that for any sub-distribution $\mu$ such that
  $\mu \models \form$, we have
  $\dsem{\mu}{(\ift{b}{s})^n} \models \form$.
  By downward closedness of $\form$, we have
  $\dsem{\mu}{(\ift{b}{s})^n_{|\neg b}} \models \form$.
  We conclude by $t$-closenedness of $\form$. \qedhere
\end{proof}

\begin{lem}\label{lem:semsound}
  Let $\form$ be an assertion and $s$ be a command. Then
  $\hoare {\form[\dsem{}{s}]} s {\form}$.
\end{lem}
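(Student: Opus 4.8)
The claim is that the triple $\hoare{\form[\dsem{}{s}]}{s}{\form}$ is \emph{valid}, i.e.\ $\models \hoare{\form[\dsem{}{s}]}{s}{\form}$, for an arbitrary assertion $\form$ and command $s$. My plan is to do essentially nothing beyond unfolding the two definitions involved and checking that they coincide. Recall $\form[F] \eqdef \lambda\state.\, \form(F(\state))$; instantiating $F$ with the lifted semantics $\dsem{}{s}\colon \state \mapsto \dsem{\state}{s}$ gives $\form[\dsem{}{s}] = \lambda\state.\, \form(\dsem{\state}{s})$. Recall also that $\models \hoare{\zeta}{s}{\form}$ means $\dsvalid{\form}{\dsem{\state}{s}}$ for every interpretation of the adversaries and every probabilistic state $\state$ with $\dsvalid{\zeta}{\state}$.

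So I would argue as follows. Fix an interpretation of adversarial procedures and a probabilistic state $\state$, and assume $\dsvalid{\form[\dsem{}{s}]}{\state}$. By the definition of $\form[\cdot]$ this assumption is literally the statement $\dsvalid{\form}{\dsem{\state}{s}}$, which is exactly the post-condition required of $\dsem{\state}{s}$. Hence the triple is valid. Put differently, $\form[\dsem{}{s}]$ is by construction the exact semantic pre-image of $\form$ under $s$, and the lemma is just the formal rendering of that observation; there is no induction on $s$ and no content beyond the definitions.

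The only point I would be careful about is the role of the adversary interpretation: $\dsem{}{s}$, and therefore the assertion $\form[\dsem{}{s}]$, becomes a well-defined function only once the adversaries occurring in $s$ are interpreted, so the statement must be read relative to a fixed such interpretation (for the adversary-free fragment — the setting of the mechanized soundness proof — this is moot, since $\dsem{}{s}$ is then an honest function). With that reading the unfolding above goes through verbatim, so I do not expect a genuine obstacle. I expect the lemma is isolated precisely because it is the reusable backbone behind the semantic side-conditions of the \rname{Assgn} and \rname{Sample} rules and, specialized to $s = \ift{e}{\abort}$, behind the handling of the auxiliary assertions $\form'_n$ in the soundness arguments for the loop rules, where one chains it with \rname{Conseq}.
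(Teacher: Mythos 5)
Your proof is correct and is essentially identical to the paper's: both simply unfold the definition of $\form[\dsem{}{s}]$ so that the pre-condition assumption $\dsvalid{\form[\dsem{}{s}]}{\state}$ is literally the required conclusion $\dsvalid{\form}{\dsem{\state}{s}}$, with no induction on $s$. Your additional remark about fixing the adversary interpretation is a reasonable clarification but does not change the argument.
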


\begin{proof}
  Let $\mu \models \form[\dsem{}{s}]$.
  By definition of $\form[\dsem{}{s}]$, this amounts to have
  $\dsem{\mu}{s} \models \eta$, which exactly gives the expected
  result.
\end{proof}

\begin{lem}\label{lem:semanti}
  Let $\mu$ be a sub-distribution and $s$ be a command.
  Then, $\wt{\dsem{\mu}{s}} \le \wt{\mu}$.
\end{lem}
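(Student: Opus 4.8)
The plan is to prove $\wt{\dsem{\mu}{s}} \le \wt{\mu}$ by induction on the structure of the command $s$, reading off each case from the denotational semantics of \cref{fig:semantics}. The base cases are immediate: $\skp$ and $x \asn e$ send a sub-distribution to one of the same weight, since they act pointwise as Dirac distributions, so $\wt{\dsem{\mu}{s}} = \wt{\mu}$; $\abort$ produces the null sub-distribution, so $\wt{\dsem{\mu}{\abort}} = 0 \le \wt{\mu}$; and for $x \rnd g$ we have $\wt{\dsem{\mu}{x \rnd g}} = \sum_m \mu(m)\cdot\wt{\dsem{m}{g}} \le \sum_m \mu(m) = \wt{\mu}$, since every distribution expression denotes a sub-distribution of weight at most $1$.

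For sequencing, unfolding the monadic definition gives $\wt{\dsem{\mu}{s_1;s_2}} = \sum_{m'} \dsem{\mu}{s_1}(m') \cdot \wt{\dsem{m'}{s_2}}$; applying the induction hypothesis for $s_2$ at the point mass $\dunit{m'}$ yields $\wt{\dsem{m'}{s_2}} \le 1$, so this sum is at most $\wt{\dsem{\mu}{s_1}} \le \wt{\mu}$ by the induction hypothesis for $s_1$. For a conditional $\ifte{e}{s_1}{s_2}$, split $\mu = \drestr{\mu}{e} + \drestr{\mu}{\neg e}$ according to the value of the guard; the semantics runs $s_1$ on the first piece and $s_2$ on the second, and summing the two bounds from the induction hypothesis gives total weight at most $\wt{\drestr{\mu}{e}} + \wt{\drestr{\mu}{\neg e}} = \wt{\mu}$. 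An internal call $\call{x}{\IProc}{e}$ unfolds by definition to $\farg{f} \asn e;\, \fbody{f};\, x \asn \fret{f}$, a sequence whose outer statements are assignments and whose middle statement is the procedure body; since procedures are non-recursive, the call graph is acyclic, so an outer induction on call depth (equivalently, inlining all procedures beforehand) lets us assume the bound for $\fbody{f}$, and the sequencing case concludes. Adversary calls $\call{x}{\AProc}{e}$ are handled identically, using that $\fbody{a}$ is a command of the language touching only $\VarL[a]$ and the oracles in $\aora{a}$; the argument is uniform in the chosen interpretation of the adversary.

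The only case using the analytic preliminaries is the loop $\while{e}{s}$, whose semantics is $\lim_{n\to\infty} \dsem{\mu}{(\ift{e}{s})^n;\,\ift{e}{\abort}}$. Each approximant is the semantics of a loop-free command built from sequencing, conditionals, $s$ and $\abort$, so the cases already treated give $\wt{\dsem{\mu}{(\ift{e}{s})^n;\,\ift{e}{\abort}}} \le \wt{\mu}$ for all $n$. Since this sequence is increasing it converges (\cref{lem:lim:distr}), and instantiating \cref{lem:limitProb} with the trivial event $\top$ gives $\wt{\dsem{\mu}{\while{e}{s}}} = \lim_{n\to\infty}\wt{\dsem{\mu}{(\ift{e}{s})^n;\,\ift{e}{\abort}}} \le \wt{\mu}$. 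I expect the only real friction to be bookkeeping rather than mathematics: plain structural induction on $s$ does not literally reach $\fbody{f}$, so the induction must be phrased over an acyclic-call-graph measure (or procedures inlined first), and the adversary case must be checked to go through uniformly in the adversary's interpretation; the loop case, though the only one needing limits of sub-distributions, is routine given \cref{lem:limitProb}.
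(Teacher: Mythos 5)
Your proof is correct, but it takes a much heavier route than the paper. The paper's entire proof is a two-line computation that never inspects the structure of $s$: by definition the lifted semantics is the monadic bind $\dsem{\mu}{s} = \dlet{m}{\mu}{\dsem{m}{s}}$, so $\wt{\dsem{\mu}{s}} = \sum_m \mu(m)\cdot\wt{\dsem{m}{s}} \le \sum_m \mu(m) = \wt{\mu}$, using only the fact that each per-state denotation $\dsem{m}{s}$ is itself a sub-distribution and hence has weight at most $1$ --- a fact the paper treats as part of the well-definedness of the semantics (it is stated in the definition accompanying \cref{fig:semantics}). Your structural induction essentially re-proves that well-definedness fact from scratch: the key quantity in your sequencing case, $\wt{\dsem{m'}{s_2}} \le 1$, is exactly what the paper's one-liner invokes for arbitrary $s$. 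What your approach buys is a self-contained verification that does not presuppose $\wt{\dsem{m}{s}} \le 1$, handling loops via \cref{lem:lim:distr} and \cref{lem:limitProb} and procedure/adversary calls via a call-depth measure --- all of which is sound, and your flagged bookkeeping concerns (non-recursive call graph, uniformity in the adversary's interpretation, the nested induction hidden in the loop approximants) are handled adequately. What the paper's argument buys is uniformity: a single computation covering every command, including loops and adversary calls, with no case analysis at all, at the price of leaning on the definitional guarantee that the state-level semantics lands in $\Dist(\Mem)$.
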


\begin{proof}
  We have
  \begin{align*}
    \wt{\dsem{\mu}{s}}
      &= \wt{\dlet m \mu {\dsem{m}{s}}}
       = \sum_m \mu(m) \cdot \overbrace{\wt{\dsem{m}{s}}}^{\le 1} \\
      &\le \sum_m \mu(m) = \wt{\mu} \qedhere
  \end{align*}
\end{proof}

\begin{lem}\label{lem:lin}
  Let $\mu$ be a sub-distribution s.t.
  $\mu \eqdef \sum_{i \in I} \lambda_i \mu_i$ where $I$ is a
  finite set, all the $\lambda_i$'s are in $[0, 1]$ and all the
  $\mu_i$'s are sub-distributions.
  Then, for any command $s$,
  \begin{gather*}
    \dsem{\mu}{s} = \sum_{i \in I} \lambda_i \cdot \dsem{\mu_i}{s}
  \end{gather*}
\end{lem}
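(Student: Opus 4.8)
The plan is to observe that the lifted semantics $\mu \mapsto \dsem{\mu}{s}$ is, by construction, \emph{linear} in the input sub-distribution, so no induction on the structure of $s$ is needed at all. First I would reduce the claimed equality of sub-distributions to a pointwise statement: since two sub-distributions are equal exactly when they agree on every point, it suffices to prove that for every $m' \in \Mem$ we have $\dsem{\mu}{s}(m') = \sum_{i \in I} \lambda_i \cdot \dsem{\mu_i}{s}(m')$.

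Then I would simply unfold definitions. By the definition of the lifted semantics and of the monadic bind, $\dsem{\mu}{s}(m') = \dlet{m}{\mu}{\dsem{m}{s}}\,(m') = \sum_{m \in \Mem} \mu(m) \cdot \dsem{m}{s}(m')$. Substituting the hypothesis $\mu(m) = \sum_{i \in I} \lambda_i \, \mu_i(m)$ and distributing the product over the finite sum gives $\sum_{m}\sum_{i} \lambda_i \, \mu_i(m)\cdot\dsem{m}{s}(m')$. All summands are nonnegative, and every sum in sight is bounded (each $\dsem{m}{s}(m') \le 1$ and each $\mu_i$ is a sub-distribution, so $\sum_m \mu_i(m)\cdot\dsem{m}{s}(m') \le 1$), hence all the series converge; because $I$ is finite, the two sums may be interchanged. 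This yields $\sum_{i} \lambda_i \sum_{m} \mu_i(m)\cdot\dsem{m}{s}(m') = \sum_{i} \lambda_i \cdot \dsem{\mu_i}{s}(m')$, which is exactly the desired pointwise identity; summing over $m'$ (or rather, reading it pointwise) gives the lemma.

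The only point requiring any care is the interchange of the finite sum over $i$ with the sum over $m\in\Mem$, which is legitimate because $I$ is finite and all terms are nonnegative (a finite sum of convergent nonnegative series can always be reordered, e.g.\ by Tonelli's theorem for series). There is no genuine obstacle here: the statement is, in essence, that monadic bind is linear in its first argument, and the proof is a one-line computation once the definitions of $\dsem{\mu}{s}$ and $\dlet{a}{\mu}{f}$ are expanded. This lemma will subsequently be used in the soundness proof to split the semantics of a program executed on a sum (or scaled sum) of sub-distributions---as arises from conditionals and the $\oplus$ connective---into the corresponding sum of executions.
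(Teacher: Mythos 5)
Your proof is correct and matches the paper's argument: the paper disposes of this lemma with the single remark that it is an ``immediate consequence of the linearity of $\Exp$,'' and your pointwise unfolding of $\dsem{\mu}{s}(m') = \sum_{m} \mu(m)\cdot\dsem{m}{s}(m')$ followed by the (legitimate, since $I$ is finite and all terms are nonnegative) interchange of sums is exactly that linearity made explicit. No induction on $s$ is needed, just as you observe.
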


\begin{proof}
  Immediate consequence of the linearity of $\Exp$.
\end{proof}

\begin{lem}\label{lem:semnotmod}
  Let $s$ be a lossless command and $\mu$ be a sub-distribution. Then
  $\mu_{| \overline{\MV(s)}} = {\dsem{\mu}{s}}_{| \overline{\MV(s)}}$.
\end{lem}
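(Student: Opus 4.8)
The plan is to decouple \cref{lem:semnotmod} into a weight-free monotonicity claim and a short mass-counting argument. Concretely, I would first establish, for \emph{every} command $s$ (lossless or not) and every sub-distribution $\mu$, the pointwise inequality
\[
  {\dsem{\mu}{s}}_{|\overline{\MV(s)}} \ \le\ \mu_{|\overline{\MV(s)}}
\]
on $\Mem_{|\overline{\MV(s)}}$: running $s$ can only lose mass, never raise the probability of an event that depends solely on the variables $s$ leaves untouched. Granting this, the lemma follows in two lines. Marginalization preserves total weight --- $\wt{\nu_{|X}} = \wt{\nu}$, since $m \mapsto m_{|X}$ partitions $\Mem$ --- so losslessness of $s$ gives $\wt{{\dsem{\mu}{s}}_{|\overline{\MV(s)}}} = \wt{\dsem{\mu}{s}} = \wt{\mu} = \wt{\mu_{|\overline{\MV(s)}}}$. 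Combined with the inequality and the elementary fact that $\nu_1 \le \nu_2$ together with $\wt{\nu_1} = \wt{\nu_2}$ forces $\nu_1 = \nu_2$ (a sum of nonnegative reals is zero iff every summand is --- a mild generalization of \cref{lem:less:eq:distr}), we obtain the desired equality.

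For the inequality I would use structural induction on $s$. The base cases $\skp$ and $\abort$ are immediate ($\MV = \emptyset$ and ${\dsem{\mu}{\abort}} = \mathbf{0}$). For $x \asn e$ and $x \rnd g$ one has $\MV(s) = \{x\}$, and since $m[x := v]_{|\overline{\{x\}}} = m_{|\overline{\{x\}}}$, marginalizing out $x$ erases the effect of the assignment --- exactly for a deterministic assignment, and up to a factor $\wt{\dsem{m}{g}} \le 1$ for a probabilistic one, which only scales the mass down. For $s_1;s_2$ the hypothesis propagates via the monotonicity of marginalization and the identity $(\nu_{|X})_{|Y} = \nu_{|Y}$ for $Y \subseteq X$: apply the hypothesis for $s_2$ to $\dsem{\mu}{s_1}$, project down from $\overline{\MV(s_2)}$ to $\overline{\MV(s_1)}\cap\overline{\MV(s_2)} = \overline{\MV(s_1;s_2)}$, and chain with the hypothesis for $s_1$. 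For $\ifte{e}{s_1}{s_2}$ one splits ${\dsem{\mu}{\ifte{e}{s_1}{s_2}}} = \dsem{\drestr{\mu}{e}}{s_1} + \dsem{\drestr{\mu}{\neg e}}{s_2}$, applies the two hypotheses, and recombines using additivity of marginalization and $\drestr{\mu}{e} + \drestr{\mu}{\neg e} = \mu$. Internal procedure calls $\call{x}{f}{e}$ are handled by unfolding to $\farg{f} \asn e; \fbody{f}; x \asn \fret{f}$; non-recursiveness keeps the induction well-founded, and disjointness of the local-variable blocks makes the unmodified globals come out right.

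The main obstacle is the loop $\while{e}{c}$, whose semantics is a limit of the approximants $(\ift{e}{c})^n; \ift{e}{\abort}$ --- these contain $c$ but are not syntactic subterms of $\while{e}{c}$. I would resolve this with an inner induction on $n$: using only the already-proved constructor cases ($\skp$, $\abort$, sequencing, conditionals) together with the structural hypothesis for $c$, one shows ${\dsem{\mu}{(\ift{e}{c})^n; \ift{e}{\abort}}}_{|\overline{\MV(c)}} \le \mu_{|\overline{\MV(c)}}$ for every $n$, noting $\MV((\ift{e}{c})^n;\ift{e}{\abort}) \subseteq \MV(c) = \MV(\while{e}{c})$. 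Then pass to the limit: evaluated at a point $m_0$, the left-hand side is the probability of the event $\{m : m_{|\overline{\MV(c)}} = m_0\}$ under the $n$-th lower approximation, so by \cref{lem:lim:distr} (the approximants increase and hence converge) and \cref{lem:limitProb} it converges to ${\dsem{\mu}{\while{e}{c}}}_{|\overline{\MV(c)}}(m_0)$, while the right-hand side is constant in $n$; a pointwise $\le$ therefore survives the limit. The genuinely delicate points are just bookkeeping: keeping the two restriction operators apart ($\drestr{\cdot}{E}$ for events versus ${\cdot}_{|X}$ for variable sets), and the observation that marginalization commutes with limits, which is precisely \cref{lem:limitProb} instantiated at the event $\{m : m_{|X} = m_0\}$.
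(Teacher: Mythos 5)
Your proof is correct, and it supplies considerably more detail than the paper, whose entire argument for \cref{lem:semnotmod} is ``by a direct induction on $s$.'' The organization differs, though. The most direct induction would establish, for every command $s$ and single memory $m$, that $\dsem{m}{s}$ is supported on memories agreeing with $m$ outside $\MV(s)$; the lemma then follows in one line, because losslessness applied to Dirac inputs gives $\wt{\dsem{m}{s}}=1$, so marginalizing $\dsem{\mu}{s}$ onto $\overline{\MV(s)}$ returns exactly $\mu_{|\overline{\MV(s)}}$. You instead prove the losslessness-free global inequality $({\dsem{\mu}{s}})_{|\overline{\MV(s)}}\le\mu_{|\overline{\MV(s)}}$ by structural induction and recover the equality by mass counting, using $\wt{\nu_{|X}}=\wt{\nu}$ together with the evident strengthening of \cref{lem:less:eq:distr} to equal (not necessarily unit) weights. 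This decomposition buys something real: losslessness of $s$ does not straightforwardly restrict to subcommands (the second component of a sequence is only lossless on inputs reachable through the first), so keeping the inductive statement free of losslessness sidesteps a wrinkle that a naive induction on the equality would have to address; it also makes the loop case clean, since your inner induction on the lower approximations and the passage to the limit via \cref{lem:lim:distr} and \cref{lem:limitProb} match exactly how the semantics of $\kwhile$ is defined. Two minor remarks: the per-memory observations in your assignment and sampling cases are precisely the support-preservation fact above, so you could streamline by proving that once and reusing it in every case; and the grammar also contains adversary calls $\call{x}{\AProc}{e}$, which your unfolding argument handles verbatim for any fixed interpretation of the adversary body (the paper's own mechanization likewise omits this case).
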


\begin{proof}
  The proof is done by a direct induction of $s$.
\end{proof}

\begin{lem}
  The rules of \SYSTEM are sound.
\end{lem}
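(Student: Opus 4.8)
The plan is to prove soundness rule by rule: for each rule of \SYSTEM we assume its premises are valid judgments (or, for side conditions, that they hold semantically) and show the conclusion is valid. Most of the supporting machinery is already in place—\cref{lem:semsound} handles the semantic assignment/sampling rules, \cref{lem:semanti} gives the weight bound needed for \rname{Frame} and the loop rules, \cref{lem:lin} handles the linearity arguments behind \rname{Cond} and \rname{Split}, \cref{lem:semnotmod} handles \rname{Frame}, and \cref{lem:close} packages the two hard limit arguments for loops. So the body of the proof is mostly a dispatch over cases, and I would organize it exactly that way.

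First I would dispose of the easy structural and basic rules. \rname{Conseq} is immediate from the definition of validity; \rname{Exists} is immediate since validity quantifies universally over states; \rname{Skip} and \rname{Seq} follow directly from the denotational clauses for $\skp$ and sequencing; \rname{Abort} holds because $\dsem{\state}{\abort}=\mathbf 0$ whose support is empty, so $\detm{\bot}$ holds vacuously. \rname{Assgn} and \rname{Sample} are exactly \cref{lem:semsound} instantiated at $x\asn e$ and $x\rnd g$, and \rname{Call} is \cref{lem:semsound} composed with the unfolding of the call in \cref{fig:semantics}. For \rname{Cond} and \rname{Split}, I would use the fact that the semantics is additive: writing $\state=\drestr{\state}{e}+\drestr{\state}{\neg e}$ and applying \cref{lem:lin} (with coefficients $1$), the output distribution splits as $\dsem{\drestr{\state}{e}}{s_1}+\dsem{\drestr{\state}{\neg e}}{s_2}$, which witnesses the $\oplus$ in the post-condition; the $\detm{e}$/$\detm{\neg e}$ conjuncts track which piece each branch runs on. \rname{Frame} follows from \cref{lem:semnotmod} (the modified variables are the only ones that can change) together with \cref{lem:semanti} specialized to losslessness ($\wt{\dsem{\state}{s}}=\wt{\state}$), so $\state$ and $\dsem{\state}{s}$ agree outside $\MV(s)$ and have equal weight, hence $\form$ is preserved by $\aindep{\form}{\MV(s)}$.

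The substantive cases are the four loop rules and the adversary rule. For \rname{While-D} and \rname{While-AST}, the work is already done: given the per-iteration premise $\hoare{\form_n}{\ift{e}{s}}{\form_{n+1}}$, I would first reduce the family of invariants to the constant case or, more honestly, restate \cref{lem:close} for families, and then invoke part (1) (resp. part (2), using the losslessness side condition) to conclude that $\form_\infty$ holds on $\dsem{\state}{\while{e}{s}}$; the extra conjunct $\detm{\neg e}$ holds because the loop's output distribution is supported on states where the guard is false. For \rname{While}, the argument is the one used in the completeness lemma: iterate the first premise to get $\form_n$ on the $n$-th lower approximation with the guarded body, use the second premise (with $\ift{e}{\abort}$) to pass to $\form'_n$ on $\dsem{\state}{(\ift e s)^n;\ift e \abort}$, and then use $u$-closedness of $(\form'_n)_n$ together with \cref{lem:lim:distr} (the lower approximations form an increasing sequence converging to the loop semantics) to get $\form'_\infty$ in the limit. \rname{While-CT} is the simplest: the side condition says the loop semantics equals the $k$-th finite unrolling on states satisfying $\form_0$, so iterating the premise $k$ times suffices and no closedness is needed. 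The adversary rule \rname{Adv} mirrors \rname{While-D}: an adversary call unfolds (semantically, for any interpretation of $\fbody a$) into some sequence of oracle calls and local assignments; the separation conditions $\aindep{\form_n}{\{x,\AVar\}}$ ensure the adversary's own writes don't affect $\form_n$, the per-oracle-call premise plays the role of the per-iteration premise, and $d$-closedness lets the invariant survive the (possibly infinite, possibly lossy) limit exactly as in \cref{lem:close}(1).

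The main obstacle is the \rname{Adv} case. Unlike loops, there is no single clean fixpoint/limit presentation of an arbitrary adversary's behavior in the text, so I would need to make precise the claim that "the adversary may make an arbitrary sequence of calls to the oracles"—presumably by an induction on the structure of $\fbody a$ showing that $\form_n$-preservation by every oracle call and every local assignment propagates through the whole adversary body, handling the internal loops and conditionals of the adversary using the already-established loop rules (which is why $d$-closedness is exactly the right hypothesis to carry through), and then taking a limit for the outermost iteration bound. Getting the bookkeeping of local versus global variables right in that induction—so that the separation side conditions genuinely neutralize everything the adversary writes—is the delicate part; the loop cases, by contrast, are essentially immediate given \cref{lem:close,lem:lim:distr}.
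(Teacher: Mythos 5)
Your proposal is correct and follows essentially the same route as the paper's proof: a rule-by-rule case analysis discharging the basic and structural rules via \cref{lem:semsound}, \cref{lem:lin}, and \cref{lem:semnotmod}, and handling the loop rules by iterating the per-iteration premise and invoking the closedness lemmas (\cref{lem:close}, \cref{lem:lim:distr}) in the limit. The only divergence is cosmetic: for \rname{Adv} the paper simply decomposes the adversary body into call-free segments (handled by \rname{Frame}) interleaved with oracle calls and applies \rname{Seq}, so the case is lighter than the induction-plus-limit argument you anticipate.
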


\begin{proof}
  We use the notation of the rules.

  \begin{itemize}
  \item \rname{Skip} --- immediate since $\dsem{\mu}{\skp} = \mu$.

  \item \rname{Abort} --- immediate by definition of $\detm{\form}$
    and since
    \begin{gather*}
    \supp(\dsem{\mu}{\abort}) = \supp(\dnull) = \emptyset.
    \end{gather*}

  \item \rname{Assgn} \& \rname{Sample} --- immediate consequence of
    \cref{lem:semsound}.

  \item \rname{Seq} --- let $\mu \models \form_1$. Then, by the first
    premise, $\dsem{\mu}{s_1} \models \form_2$. Hence, by the second
    premise, $\dsem{\dsem{\mu}{s_1}}{s_2} \models \form_3$.

  \item \rname{Conseq} --- if $\mu \models \form_0$, then
    $\mu \models \form_1$ by the first premise. Hence,
    $\dsem{\mu}{s} \models \form_2$ by the second premise and
    $\dsem{\mu}{s} \models \form_3$ by the third premise.

  \item \rname{Split} --- let $\mu \models \form_1 \oplus \form_2$.
    Then, there exists $\mu_1, \mu_2$ s.t. $\mu = \mu_1 + \mu_2$ and
    $\mu_i \models \form_i$ for $i \in \{1, 2\}$. By the two premises
    of the rule, we have $\dsem{\mu_i}{s} \models \form'_i$ for
    $i \in \{1, 2\}$.
    Now, by \cref{lem:lin}, we have
    $\dsem{\mu}{s} = \dsem{\mu_1}{s} + \dsem{\mu_2}{s}$.
    By taking resp.\ $\dsem{\mu_1}{s}$ and $\dsem{\mu_2}{s}$ for the
    witnesses of $\form'_1 \oplus \form'_2$, we obtain that
    $\dsem{\mu}{s} \models \form'_1 \oplus \form'_2$.

  \item \rname{Cond} --- we first prove that
  \begin{gather}
    \hoare {\form_1 \land \detm{e}} {\ifstmt e {s_1} {s_2}} {\form'_1}
    \label{eq:condT} .
  \end{gather}

  Let $\mu \models \form_1 \land \detm{e}$. Then, for $m \in \supp(\mu)$,
  we know that $\dsem{m}{e} = \True$. It follows that
  \begin{align*}
    \dsem{\mu}{\ifstmt e {s_1} {s_2}}
      &= \dlet m \mu
           {\underbrace{\dsem{m}{\ifstmt e {s_1} {s_2}}}_{= \dsem{m}{s_1}}} \\
      &= \dsem{\mu}{s_1}.
  \end{align*}

  However, by the first premise,
  $\dsem{\mu}{s_1} \models \form'_1$.
  Hence,
  \begin{gather*}
    \dsem{\mu}{\ifstmt e {s_1} {s_2}} \models \form'_1,
  \end{gather*}
  concluding the proof of \cref{eq:condT}.
  By a similar reasoning, we have:
  \begin{gather}
    \hoare {\form_2 \land \detm{\neg e}} {\ifstmt e {s_1} {s_2}} {\form'_2}
    \label{eq:condF} .
  \end{gather}

  We conclude the proof of soundness of~\rname{Cond} from the one
  of~\rname{Split} applied to \cref{eq:condT} and \cref{eq:condF}.

  \item \rname{Call} --- immediate consequence of~\rname{Seq}
    and~\rname{Assgn}.

  \item \rname{Frame} --- let $\mu \models \form$.
    Let $\mu \models \form$. To show that
    $\dsem{\mu}{s} \models \form$, from the definition of
    $\aindep {\form} {\MV(s)}$, it is sufficient to show that
    $\wt{\mu} = \wt{\dsem{\mu}{s}}$ and
    ${\mu}_{| \overline{\MV(s)}} = {\dsem{\mu}{s}}_{| \overline{\MV(s)}}$.
    The first one is a consequence of the losslessness of $s$, the
    second one is a direct application of \cref{lem:semnotmod}.

  \item \rname{While} --- from the first premise, by induction on $n$
    and using \rname{Seq}, we know that for any $n$, the following
    holds:
    \begin{gather}
      \hoare{\form}{(\ift{b}{s})^n}{\form}. \label{eq:witer}
    \end{gather}
    Now, let $\mu \models \form$. First, form the definition of
    $\dsem{\mu}{\while b s}$, we know that
    \begin{gather*}
      \dsem{\mu}{\while b s} \models \detm{\neg b}.
    \end{gather*}
    It remains to prove $\dsem{\mu}{\while b s} \models \form$.
    In the case of certain termination, we know the existence of a $k$
    s.t.
    \begin{gather*}
      \dsem{}{\while{b}{s}} = \dsem{}{(\ifstmt{b}{s})^k}
    \end{gather*}
    and we conclude by \cref{eq:witer}.
    For the cases of almost surely termination and almost termination,
    we conclude for \cref{eq:witer} and \cref{lem:close} of the paper.

  \item \rname{Adv} --- the proof is done by induction on the body of
    the external procedure which is of the form
    $s_1; c^?_1; \cdots; s_n; c^?_n$
    where the $s_i$'s do not contain calls and the $c^?_i$'s are
    potential calls to the oracles.
    From the \rname{Frame} rule, we know that the $s_i$'s preserve the
    invariant $\form$ --- noting that the losslessness of the
    adversary implies the losslessness of the $s_i$'s.
    Likewise, from the last premise of the \rname{Adv} rule, we know
    that calls to the oracles also preserve the invariant.
    Hence, by multiple application of the \rname{Seq} rule, we obtain
    that the adversary body maintains $\form$.
    \qedhere
  \end{itemize}
\end{proof}

\section{Semantics of Assertions}\label{app:semass}
The semantics of assertions is given \cref{fig:asr-semantics}.

\begin{figure}
\label{semantics_assert}
  \begin{equation*}
    \begin{array}{lllllll}
      \mdenot{v}  &\eqdef & \rho(v) \\
      \mdenot{\ind{\lasr}}  &\eqdef & \ind{\mdenot{\lasr}} \\
      \mdenot{\dlet{v}{g}{\lexp}} & \eqdef &
       \dlet{w}{\mdenot{g}}{\gdenot{\lexp}{\rho\subst{v}{w}}{m}} \\

      \mdenot{o(\lexp)} &\eqdef & o(\mdenot{\lexp}) \\
      \\ \hline \\
      \mdenot{\lexp_1 \bowtie \lexp_2}  &\eqdef & \mdenot{\lexp_1} \bowtie \mdenot{\lexp_2} \\
      \mdenot{FO(\lasr)} &\eqdef & FO(\mdenot{\lasr}) \\
      \\ \hline \\
      \pdenot{\pr{\lexp}{}}  &\eqdef & \dlet{m}{\mu}{\mdenot{\lexp}} \\
      \pdenot{o(\pexp)} &\eqdef & o(\pdenot{\pexp}) \\
      \\ \hline \\
      \pdenot{\pexp_1 \bowtie \pexp_2}  &\eqdef & \pdenot{\pexp_1} \bowtie \pdenot{\pexp_2} \\
      \pdenot{\form_1 \oplus \form_2} & \eqdef & 
      \exists \mu_1, \mu_2.\,  \mu = \mu_1 + \mu_2 \wedge 
        \denot{\form_1}^\rho_{\mu_1} \wedge \denot{\form_2}^\rho_{\mu_2}\\
      \pdenot{FO(\pasr)} &\eqdef & FO(\pdenot{\pasr}) \\
    \end{array}
  \end{equation*}
  \caption{Semantics of assertions}
  \label{fig:asr-semantics}
\end{figure}

\section{Precondition Calculus}

\Cref{prem_wp_full} contains the full definition of the precondition
calculus.

\begin{figure}
  \begin{equation*}
    \begin{array}{l@{\,}l}
      \prem(s, o({\pexp}))
      &\eqdef
      o({\prem(s, \pexp)})
      \quad \text{where } o \in \Ops 
      \\
      \prem(\skp, \pr{\lexp}{})
      &\eqdef
      \pr{\lexp}{}
      \\
      \prem(s_1; s_2, \pr{\lexp}{})
      &\eqdef
      \prem(s_1,\prem(s_2, \pr{\lexp}{}))
      \\
      \prem(x \asn e, \pr{\lexp}{})
      &\eqdef
      \pr{\lexp}{} \subst{x}{e}
      \\
      \prem(x \rnd g, \pr{\lexp}{})
      & \eqdef
      \Samp{x}{g}(\pr{\lexp}{})
      \\
      \prem(\ifstmt{e}{s_1}{s_2}, \pr{\lexp}{})
      & \eqdef
      \cond{\prem({s_1}, \pr{\lexp}{})}{e} + 
      \cond{\prem({s_2}, \pr{\lexp}{})}{\neg e}
      \\
      \prem(\abort, \pr{\lexp}{})
      &\eqdef
      0
      \\[1em]
      \wpre(s, \pexp_1 \bowtie \pexp_2)
      &\eqdef
      \prem(s, \pexp_1) \bowtie \prem(s, \pexp_2)
      \\
      \wpre(s,FO(\form))
      &\eqdef
      FO(\wpre(s,\form))
    \end{array}
  \end{equation*} 
\caption{Precondition calculus}
\label{prem_wp_full}
\end{figure}

\section{Soundness of the Syntactic Rules}\label{sec:soundness-syntactic}

\subsection{Certain Termination}

\begin{prop}[Soundness of rule \rname{While-$\sidecond{CTerm}$}]
   \label{while_DP1}
  Let $\mu$ be a sub-distributions such that $\sidecond{CTerm}$ is
  valid. Then,
   \[
     \denot{\while{b}{s}}{\mu} \models \eta \land \detm{\neg b}.
   \]
 \end{prop}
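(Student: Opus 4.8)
The plan is to prove the statement in two intertwined stages: a \emph{termination} stage, where $\sidecond{CTerm}$ forces the loop to reach a fixed point after a bounded number of guarded iterations, and an \emph{invariant} stage, where $\eta$ is propagated through that finite unrolling. Write $\mu_j \eqdef \dsem{\mu}{(\ift{b}{s})^j}$ for the $j$-th approximation of the loop, and let $\dot{y}$ be a value witnessing the premise $\eta \Rightarrow \exists \dot{y}.\,\detm{(\tilde{e} \le \dot{y})}$ evaluated on $\mu$, so that $\mu_0 = \mu \models \detm{(\tilde{e} \le \dot{y})}$. It suffices to find a natural number $k \ge \dot{y}$ with $\mu_k \models \eta \land \detm{\neg b}$: indeed, on a sub-distribution whose support satisfies $\neg b$ both $\ift{b}{s}$ and $\ift{b}{\abort}$ act as the identity, so $\dsem{\mu}{(\ift{b}{s})^n;\ift{b}{\abort}} = \mu_k$ for all $n \ge k$, whence $\dsem{\mu}{\while{b}{s}} = \lim_{n}\dsem{\mu}{(\ift{b}{s})^n;\ift{b}{\abort}} = \mu_k = \dsem{\mu}{(\ift{b}{s})^k}$.

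First I would observe that $\mu_j \models \eta$ for every $j$: one guarded iteration preserves $\eta$ by the invariant premise of the rule ($\hoare{\eta}{\ift{b}{s}}{\eta}$), and \rname{Seq} iterates this. Hence, by the second part of $\sidecond{CTerm}$, every $\mu_j$ also satisfies $\detm{(\tilde{e}=0\Rightarrow\neg b)}$ and --- using that the variant is bounded below --- $\detm{(0 \le \tilde{e})}$. The core is then a second induction on $j$ proving $\wt{\mu_j} = \wt{\mu}$ and $\mu_j \models \detm{(b \Rightarrow \tilde{e} \le \dot{y}-j)}$. For the step, decompose $\mu_j = \drestr{\mu_j}{\neg b} + \sum_{0 < i \le \dot{y}-j}\drestr{\mu_j}{(b \land \tilde{e}=i)}$, an exact finite sum because $\tilde{e}$ is integer-valued and the slice $\tilde{e}=0$ cannot carry $b$. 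By linearity of the denotational semantics in its input sub-distribution, $\dsem{\mu_j}{\ift{b}{s}}$ is the corresponding sum: the $\neg b$-slice is returned unchanged ($\dsem{\drestr{\mu_j}{\neg b}}{\ift{b}{s}} = \drestr{\mu_j}{\neg b}$), and each slice $\drestr{\mu_j}{(b \land \tilde{e}=i)}$ is mapped by $s$. Normalising such a slice to a proper distribution --- legitimate because the $\llass$ pre- and post-conditions in the first Hoare triple of $\sidecond{CTerm}$ certify that $s$ loses no mass on inputs satisfying $\detm{(\tilde{e}=i \land 0<i \land b)}$ --- and instantiating that triple with its rigid parameter equal to $i$ shows each image keeps its weight and satisfies $\detm{(\tilde{e}<i)}$, hence $\detm{(\tilde{e} \le \dot{y}-j-1)}$. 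Reassembling by linearity, $\mu_{j+1}$ has weight $\wt{\mu}$ and satisfies $\detm{(b \Rightarrow \tilde{e} \le \dot{y}-(j+1))}$.

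Taking $j = k$ for any $k \ge \dot{y}$, we get $\mu_k \models \detm{(b \Rightarrow \tilde{e} \le 0)}$, which combined with $\mu_k \models \detm{(0 \le \tilde{e})} \land \detm{(\tilde{e}=0\Rightarrow\neg b)}$ forces $\mu_k \models \detm{\neg b}$; since also $\mu_k \models \eta$, the reduction in the first paragraph yields $\dsem{\mu}{\while{b}{s}} = \mu_k \models \eta \land \detm{\neg b}$, as required. The main obstacle is the bookkeeping in the termination induction: the variant-decrease hypothesis speaks only about sub-distributions conditioned on a single value $i$ of $\tilde{e}$ (and on $b$), so one must carefully slice $\mu_j$ by the value of $\tilde{e}$ and the truth of $b$, recombine using linearity, and --- crucially --- invoke the $\llass$ side-conditions so that no probability mass silently leaks away; this mass-preservation is exactly what makes the lower-approximation semantics of the loop agree with the finite unrolling. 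A secondary subtlety is that, since the decrease premise requires $0 < i$, one needs integrality of $\tilde{e}$ together with the clause $\tilde{e}=0\Rightarrow\neg b$ for the descent to bottom out rather than run to $-\infty$.
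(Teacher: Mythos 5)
Your proposal is correct and takes essentially the same route as the paper: the paper's proof likewise fixes the bound $k$ from the premise $\eta \Rightarrow \exists \dot{y}.\,\detm{(\tilde{e}\leq\dot{y})}$, proves by induction on unrollings the decreasing domination $\detm{(\tilde{e}\leq k-n \vee \neg b)}$ (your $\detm{(b\Rightarrow\tilde{e}\leq\dot{y}-j)}$), and concludes termination from the exit clause $\detm{(\tilde{e}=0\Rightarrow\neg b)}$. Your slicing of $\mu_j$ by the value of $\tilde{e}$ and the guard, the normalization/linearity step needed to apply the per-value triple of $\sidecond{CTerm}$, the propagation of $\eta$, and the identification of the loop semantics with its $k$-th unrolling simply make explicit what the paper's sketch compresses into ``unrolling the loop and using the semantical \rname{Seq} rule''---including the same tacit reliance on the variant being integer-valued and bounded below.
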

 \begin{proof}
  

  Given $\mu$ satisfying 
   $\sidecond{CTerm}$, 
   we first claim that there exists a decreasing function 
   $f: \NN \rightarrow \NN$ such that 
   $\detm{(\lexp \leq f(n) \vee \neg b)}$ holds at each iteration of
   the loop.
   Indeed, we remark that
   the  statement $\exists k \;
   \detm{(\lexp \leq k)}$ holds at first iteration  
   by the precondition hypothesis. Let:
   \[
     f(n) = \left\{
       \begin{array}{ll} 
         k-n & \text{if } n \leq k \\ 
         0 & \text{otherwise} .
       \end{array} \right.
     \]
  Then unrolling the loop and using the semantical \rname{Seq} rule 
  ensure by induction the claimed domination. The termination of the loop 
  arises then naturally from the exit condition $\detm{(\tilde{e}=0 \Rightarrow \neg b)}$.
 \end{proof}

 \subsection{Almost-Sure Termination}

 The main challenge for proving the soundness of the  is proving termination; from there, we can
 conclude by rule \rname{While-$\sidecond{ASTerm}$}. Our arguments use basic notations and
 theorems from the theory of Markov processes.

 \begin{prop}[Soundness of rule \rname{While-$\sidecond{ASTerm}$}]
   Let $\form$ be a $t$-closed assertion. For any sub-distributions
   $\mu$, such that $\sidecond{ASTerm}$ is valid.
   Then 
   \[
     \denot{\while{b}{s}}{\mu} \models \form .
   \]
 \end{prop}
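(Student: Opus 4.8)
The plan is to follow the strategy announced just before the statement: reduce the claim to termination, and then discharge it with the abstract rule \rname{While-AST} of \cref{fig:looping:rules}. So the first step is to argue that it suffices to prove that $\while{b}{s}$ is weight-preserving on every sub-distribution $\mu$ with $\mu\models\form$. Granting this, I instantiate \rname{While-AST} with the constant family $\form_n\eqdef\form$: its premise $\tclosed{\form}$ is the last conjunct of $\sidecond{ASTerm}$; its invariance premise $\hoare{\form}{\ift{b}{s}}{\form}$ is the main premise of the loop rule (the guarded body preserves the invariant); and its losslessness premise $\forall\state.\,\form(\state)\implies\wt{\dsem{\state}{\while{b}{s}}}=1$ is exactly the termination fact to be proved (here, as in the intended rule, $\form$ carries $\llass$, so $\wt{\state}=1$). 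The conclusion $\hoare{\form}{\while{b}{s}}{\form\wedge\detm{\neg b}}$ then gives $\dsem{\mu}{\while{b}{s}}\models\form$ for every $\mu\models\form$; equivalently one can argue by \cref{lem:close} once the loop is weight-preserving along the runs started from $\form$-states, since then the genuine approximations converge to the loop semantics and $t$-closedness of $\form$ applies.

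For the termination step, fix $\mu\models\form$ and write $\nu_n\eqdef\dsem{\mu}{(\ift{b}{s})^n}$, so that $\dsem{\mu}{\while{b}{s}}=\lim_n(\nu_n)_{|\neg b}$ by the loop semantics. Since $\hoare{\form}{\ift{b}{s}}{\form}$, every $\nu_n$ satisfies $\form$, hence by the second conjunct of $\sidecond{ASTerm}$ the variant $\tilde{e}$ ranges over $\{1,\dots,K\}$ on the support of the still-running part $(\nu_n)_{|b}$, and $\tilde{e}=0$ already forces $\neg b$. Using the $\llass$ components of $\sidecond{ASTerm}$, together with a restriction-and-rescaling argument and linearity (\cref{lem:lin}), the body $\ift{b}{s}$ is weight-preserving along the run, so the residual running mass $r_n\eqdef\wt{(\nu_n)_{|b}}$ satisfies $\wt{(\nu_n)_{|\neg b}}=\wt{\mu}-r_n$; the sequence $(r_n)_n$ is non-increasing, so it will be enough to prove $r_n\to 0$, which by \cref{lem:limitProb} yields $\wt{\dsem{\mu}{\while{b}{s}}}=\wt{\mu}$.

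The core estimate I would establish is that every block of $K$ guarded iterations sheds at least an $\epsilon^K$-fraction of the running mass, i.e.\ $r_{n+K}\leq(1-\epsilon^K)\,r_n$. This is the Hart--Sharir--Pnueli argument for probabilistic transition systems \citep{HartSP83}, transported to the sub-distribution-transformer semantics: decompose $(\nu_n)_{|b}=\sum_{k=1}^{K}(\nu_n)_{|b\wedge\tilde{e}=k}$, rescale each nonzero piece to weight $1$ so that it satisfies the precondition of the single-step triple in $\sidecond{ASTerm}$, and read off from its postcondition (recombining by \cref{lem:lin}) that after one further iteration at least an $\epsilon$-fraction of that piece has a strictly smaller $\tilde{e}$, while $0\leq\tilde{e}\leq K$ is preserved. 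Chaining these conditional lower bounds across $K$ steps, any trajectory that strictly decreases $\tilde{e}$ at each of $K$ steps reaches $\tilde{e}=0$—hence $\neg b$, hence exits—within those $K$ steps, and such trajectories carry at least an $\epsilon^K$-fraction of the mass. Iterating the contraction gives $r_{mK}\leq(1-\epsilon^K)^m\wt{\mu}\to 0$, and monotonicity of $(r_n)_n$ gives $r_n\to 0$, completing the termination step.

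The main obstacle is this last estimate. The single-step premise of $\sidecond{ASTerm}$ only constrains distributions on which $\tilde{e}$ is deterministically equal to some fixed $k$, so each of the $K$ iterations of a block forces one to re-split the current sub-distribution by the value of $\tilde{e}$, rescale, apply the premise, and recombine by linearity; the genuinely probabilistic content lies in keeping an accurate account, over a whole $K$-block, of which mass has strictly decreased $\tilde{e}$ at every step so far. Everything else—the reduction to termination, the use of \rname{While-AST}, the weight bookkeeping, and the passage to the limit via \cref{lem:limitProb}—is routine manipulation of the denotational semantics and the limit lemmas.
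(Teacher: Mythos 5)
Your proposal is correct, and its overall skeleton matches the paper's: both reduce the statement to proving that the loop preserves total weight on $\form$-states (using the invariance premise $\hoare{\form}{\ift{b}{s}}{\form}$ to propagate $\llass$, $\detm{(0\leq\tilde{e}\leq K)}$ and $\detm{(\tilde{e}=0\Rightarrow\neg b)}$ to every iterate), and both then discharge the claim through the abstract \rname{While-AST} rule with the $t$-closedness hypothesis of $\sidecond{ASTerm}$. Where you genuinely diverge is in how the termination fact itself is established. The paper treats the variant $\tilde{e}$ as a stochastic process, introduces an explicit comparison Markov chain on $\{0,\dots,K\}$ that decreases by one with probability $\epsilon$ and otherwise jumps back to $K$, quotes the standard absorption result for finite chains, and transfers it to the variant by a coupling/stochastic-domination argument. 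You instead prove the quantitative core directly in the sub-distribution-transformer semantics: split the running mass by the value of $\tilde{e}$, rescale, apply the single-step triple of $\sidecond{ASTerm}$, recombine by linearity, and chain over a block of $K$ iterations to get the contraction $r_{n+K}\leq(1-\epsilon^K)\,r_n$ on the residual running mass, whence $r_n\to 0$ and weight preservation in the limit. Your route is more elementary and self-contained (no appeal to Markov-chain theory or to the construction of a coupling), and it stays entirely inside the semantic toolkit the paper has set up (linearity, restriction, the limit lemmas); its price is exactly the history-dependent bookkeeping you flag—tracking, across a $K$-block, the sub-distribution of mass that has strictly decreased $\tilde{e}$ at every step—which must be defined inductively via restrictions of the one-step outputs. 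The paper's route hides that bookkeeping inside the coupling and the cited Markov-chain result, at the cost of importing machinery external to the development. Both write-ups leave their respective combinatorial cores at sketch level, so your proposal is at least as complete as the paper's own argument; the one point to keep explicit in a full write-up is that exited mass (where $\neg b$ holds) is left untouched by further guarded iterations, so mass counted as terminated during a block indeed contributes to $(\nu_{n+K})_{|\neg b}$.
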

 \begin{proof}
   As indicated, by soundness of the semantic while rule for almost-sure
   termination, and the premises, it suffices to prove almost-sure termination. 
      The sketch of the proof is the following:
      \begin{enumerate} 
        \item We follow the behavior of the variant by seeing it 
          as a random variable on the space of state.
        \item We first introduce a 
          Markov chain that reaches a particular state (the state zero) 
          with probability $1$.
        \item We then stochasticaly dominates 
          the variant by the latter chain.  
        \item In particular, this shows that the probability of the variant 
          to eventually reach $0$ is $1$, ensuring almost-sure termination.
      \end{enumerate}

   \textbf{Step one: Variant as a Random variable}. \\
   We consider the integer-valued variant $\lexp$ as a random variable 
   over the space of states. Let the family ${(\lexp_i)}_i$ represents the value taken by 
   $\lexp$ after the $i$-th iteration of the loop. 
   Then we have, using the semantical $\rname{Seq}$ rule and the 
   \begin{itemize}
     \item The sequence is uniformly bounded by $K$.
     \item The probability of decreasing is bounded below by $\epsilon$:
       \[
         \forall i \in \NN.\; \Pr[\lexp_i > \lexp_{i+1} \mid \lexp_i, \ldots, \lexp_0] = 
         \epsilon_i \geq \epsilon > 0 .
       \]
   \end{itemize}

   \textbf{Step two: Modelization with a Markov chain}. \\
   First, we can assume that $K > 0$ since if $K = 0$, the loop terminates
   immediately.
   Consider the following finite Markov chain $(X_i)_i$, over the state 
   space $S=\{0,\ldots, K\}$, following the transitions:
  
   \begin{center}
     \includegraphics[scale=1]{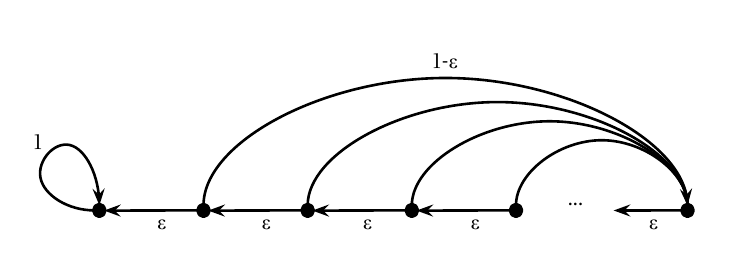}
   \end{center}

   This chain models the following behavior: with probability $\epsilon$ the value
  decrease by one, while with probability $1-\epsilon$ it jumps to $K$. Since zero is the only 
   connected component of the underlying graph, the 
   probability of terminating in the state zero is $1$ by a standard result on
   Markov chains.
   
   \textbf{Step three: Stochastic domination}. \\
   By construction, there exists a natural coupling between $(\tilde{X}_i)$ and
   $(\lexp_i)_i$, since the probability of the event \emph{$\lexp_i$ decrease}
   is greater than the probability of the event $\tilde{X}_i$ decrease.
  Since we impose that $X_0 = \lexp_0$ 
   (initial position), a simple induction over $i$ ensure that we have, for this coupling: 
   $\Pr[\lexp_i > 0] \leq \Pr[X_i > 0]$, so
   $\lim_{i \to \infty} \Pr[\lexp_i = 0] \geq 1 - \lim_{i \to \infty} \Pr[X_i > 0] = 1$ as desired.

 \end{proof}

\section{Further Examples}\label{sec:further-examples}

\subsection{Randomization for Approximation: Vertex Cover}

We begin with a classical application of randomization: \emph{approximation
  algorithms} for computationally hard problems. For problems that take long
time to solve in the worst case, we can sometimes devise efficient algorithms
that find a solution that is ``nearly'' as good as the true solution.

Our first example illustrates a famous approximation algorithm for the
\emph{vertex cover} problem. The input is a graph described by vertices $V$ and
edges $E$. The goal is to output a vertex cover: a subset $C \subseteq V$ such
that each edge has at least one endpoint in $C$, and such that $C$ is as small
as possible.

It is known that this problem is NP-complete, but there is simple
randomized algorithm that returns a vertex cover that is at on average
at most twice the size of the optimal vertex cover.  The algorithm
proceeds by maintaining a current cover (initially empty) and
considering each edge in order. If neither endpoint is in current
cover, the algorithm adds one of the two endpoints uniformly at
random. The \SYSTEM program is:

\begin{lstlisting}
proc VC (E : set<edge>) :
var   set<node> C  :=  $\emptyset$;
for (e$\mbox{}_1$,e$\mbox{}_2$) in E do
  if (e$\mbox{}_1$ $\notin$ C) $\land$ (e$\mbox{}_2$ $\notin$ C) then
    b ~~ $\{$0,1$\}$;
    C  :=  (b ? e$_1$ : e$_2$) $\cup$ C;
  fi
end
\end{lstlisting}

Here, we represent edges as a finite set of pairs of nodes. We loop
through the edges, adding one point of each uncovered edge to the
cover $\lstt{C}$ uniformly at random. The operator
$\lstt{b}\; ? \; \lstt{e}_1 : \lstt{e}_2$ returns $\lstt{e}_1$ if
$\lstt{b}$ is true, and $\lstt{e}_2$ if not.

To prove the approximation guarantee, we first assume that we have a
set of nodes $\lstt{C}^*$. We only assume that $\lstt{C}^*$ is a valid
vertex cover; i.e., each edge has at least one endpoint in
$\lstt{C}^*$. Then, we use the following loop invariant:

\begin{equation} \label{eq:vc-inv}
  \ex{\lstt{size} (\lstt{C} \setminus \lstt{C}^*)}
  \leq \ex{\lstt{size} (\lstt{C} \cap \lstt{C}^*)} .
\end{equation}

Given the loop invariant, we can prove the conclusion by letting
$\lstt{C}^*$ be the cover $OPT$ of minimal size, and reasoning about
intersections and differences of sets.

Clearly the invariant is initially true. To see why the invariant is
preserved, let $\lstt{e}$ be the current edge, with both endpoints out
of $\lstt{C}$. Since $\lstt{C}^*$ is a vertex cover, it has at least
one endpoint of $\lstt{e}$. Since our algorithm includes an endpoint
of $\lstt{e}$ uniformly at random, the probability we choose a vertex
not in $\lstt{C}^*$ is at most $1/2$, so the expectation on the left
in \cref{eq:vc-inv} increases by at most $1/2$. If $\lstt{e}$ is not
covered in $\lstt{C}$ but is covered by $\lstt{C}^*$, there is at
least a $1/2$ probability that we increase the intersection
$\lstt{C} \cap \lstt{C}^*$, so the right side in \cref{eq:vc-inv}
increases by at least $1/2$.  Thus, the invariant is preserved, and we
can prove
\begin{gather*}
  \hoare
  {\mathrm{isVC}(\lstt{C}^*, \lstt{E})}
  {\lstt{VC(E)}}
  {\ex{\lstt{size} (\lstt{C} \setminus \lstt{C}^*)}
  \leq \ex{\lstt{size} (\lstt{C} \cap \lstt{C}^*)}}
\end{gather*}
and by reasoning on intersection and difference of sets,
\begin{gather*}
  \hoare
  {\mathrm{isVC}(\lstt{C}^*, \lstt{E}) }
  {\lstt{VC(E)}}
  {\ex{\lstt{size} (\lstt{C})}
  \leq 2 \cdot \ex{\lstt{size} (\lstt{C}^*)}} .
\end{gather*}

\subsection{Random Walks: Termination and Reachability}

A canonical example of a random process is a \emph{random walk}. There
are many variations, but the basic scenario describes a numeric
position changing over time, where the position depends on the
position at the previous timestep, influenced by random noise drawn
from some distribution.

To demonstrate how to verify interesting facts about random processes,
we will model a one-dimensional random walk on the natural numbers.
We start at position $0$, and repeatedly update our position according
to the following rules. From $0$, we always make a step to $1$. From
non-zero positions, we flip a fair coin that is biased to come up
heads with probability $p > 0$. If heads, we increase our position by
$1$; if tails, we decrease by $1$.

We will prove two facts about this random walk. First, for any natural
number $T$, the probability of eventually reaching $T$ is $1$. Second,
when we reach $T$, we must first pass through all intermediate points
$1, \dots, T - 1$.  In \SYSTEM, we can express the random walk with
the following code.

\begin{lstlisting}
var bool visited[T];
var int  pos = 0;
visited[0]  :=  true;
while pos <> T do:
  c   ~~ Ber(p);
  pos  :=  pos + ((pos = 0) $\lor$ c) ? 1 : -1;
  visited[pos]  :=  true;
end
\end{lstlisting}

In order to verify this example, we will use the probabilistic while
rule \rname{While-ASTerm}.  First we establish almost-sure termination by
finding an appropriate termination measure: the distance between our
current position, and $\lstt{T}$. Indeed, this measure is bounded by
$\lstt{T}$, and has a non-negative (at least $1/2$) probability of
decreasing each iteration. Therefore, the loop terminates
almost-surely, and thus our random walk eventually reaches any point
$\lstt{T}$ with probability $1$.

To prove our second assertion---that we visit every point from $0$ to $\lstt{T}$---we
use the following loop invariant for the while command:
\begin{gather*}
  \detm{(\forall i.\; 0 \leq i \leq \lstt{pos} \Rightarrow \lstt{visited[i]} =
  \lstt{true})} .
\end{gather*}

In other words, if we have reached position $\lstt{pos}$, then we must
have already reached every position in $[0, \lstt{pos}]$. Since this
invariant is $t$-closed, we may apply rule \rname{While-ASTerm} and the
invariant holds at the end of the loop.  With the assertion
$\lstt{pos} = \lstt{T}$ at termination, we have enough to prove that
each position is visited:
\begin{gather*}
  \hoare{\llass}
  {s}
  {\llass
    \land
    \detm{(\forall i \in [\lstt{T}].\; \lstt{visited[i]} =
    \lstt{true})}} .
\end{gather*}

The losslessness post-condition indicates that the walk terminates
almost-surely.

\subsection{Amplification: Polynomial Identity Testing}

A second use of randomness is in running independent trials of the
same algorithm. This technique, known as \emph{probability
  amplification}, runs a randomized algorithm several times in order
to reduce the error probability.  Roughly speaking, a single trial may
have high error with some probability, but by repeating the trial it
is unlikely that all of the trials yield poor results.  By combining
the results appropriately---e.g., with a majority vote for algorithms
with binary outputs, or by selecting the best answer when we can check
the quality of the solution---we can produce an output that is more
accurate than a single run of the original algorithm.

An example in this vein is probabilistic \emph{polynomial identity
  testing}.  Given two multivariate polynomials $P(x_1,\ldots , x_n)$
and $Q(x_1,\ldots,x_n)$ over the finite field $\FF_q$ of $q$ elements,
we want to check whether $P = Q$, or equivalently, whether the
polynomial $P-Q$ is zero or not.  We will take $n$ uniformly random
samples $(v_i)_i$ from $\FF_q$ and check whether
$(P - Q)(v_1, \dots, v_n) = 0$. We repeat the trial $q$ times,
rejecting if we see a sample where the difference is non-zero.  In our
system, this corresponds to the following program:

\begin{lstlisting}
var bool res = true;
for i = 1 to q do:
  v   ~~ Unif($\FF^n_q$);
  res  :=  res $\wedge$ ((P - Q)(v) = 0);
end
\end{lstlisting}

The proof uses an instance of the Schwartz-Zippel lemma due to
{\O}ystein Ore for finite fields, which upper bounds the probability
of randomly picking a root of a polynomial over a finite field.

\begin{lem}
  Let $P$ a non-zero polynomial function over $\FF_q$. If we sample the
  variables $v_1, \dots, v_n$ uniformly at random from $\FF_q$,
  then
 \[ \Pr[ P(v_1, \dots, v_n) = 0 ] \leq 1 - 1/q .  \]
\end{lem}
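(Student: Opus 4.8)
The plan is to prove the lemma by induction on the number of variables $n$, via the standard Schwartz--Zippel argument (the version for finite fields being Ore's contribution). It is cleanest to establish the sharper, degree-parametrized statement: for a nonzero polynomial $P$ of total degree $d$ over $\FF_q$, $\Pr[P(v_1,\dots,v_n)=0]\le d/q$. The lemma as stated is then the instance $d\le q-1$, for which $d/q\le (q-1)/q = 1-1/q$.

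For the base case $n=1$, a nonzero univariate polynomial of degree $d$ over the field $\FF_q$ has at most $d$ roots, so a uniformly random element of $\FF_q$ is a root with probability at most $d/q$. This is the only point at which the field structure of $\FF_q$ (absence of zero divisors) is used.

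For the inductive step, write $P$ as a polynomial in the last variable with coefficients in $\FF_q[x_1,\dots,x_{n-1}]$, namely $P=\sum_{i=0}^{k} x_n^{\,i}\,P_i(x_1,\dots,x_{n-1})$ with $P_k$ the nonzero leading coefficient; note $\deg P_k\le d-k$. Condition on the draw of $\bar v=(v_1,\dots,v_{n-1})$ and split on whether $P_k(\bar v)=0$. By the induction hypothesis applied to $P_k$ on $n-1$ variables, $\Pr[P_k(\bar v)=0]\le (d-k)/q$; and whenever $P_k(\bar v)\ne 0$, the univariate polynomial $x_n\mapsto P(\bar v,x_n)$ is nonzero of degree $k$, so by the base case $\Pr[P(\bar v,v_n)=0\mid P_k(\bar v)\ne 0]\le k/q$. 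A union bound over the two cases gives $\Pr[P(v)=0]\le (d-k)/q + k/q = d/q$, closing the induction.

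The step I expect to be the main obstacle is not the counting but the finite-field bookkeeping: over $\FF_q$ a nonzero polynomial need not induce a nonzero function, so the hypothesis ``$P$ is a nonzero polynomial function'' must be handled through a representative whose degree the argument can control, and at each stage one must check that the leading coefficient $P_k$ is nonzero \emph{as a function}, not merely as a formal expression. This is exactly what Ore's refinement of Schwartz--Zippel takes care of, and it is also where one must be careful to extract the precise bound $1-1/q$ rather than something weaker; the remaining conditioning-and-union-bound steps are routine.
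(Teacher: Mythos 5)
The paper itself gives no proof of this lemma---in the polynomial identity testing example it is explicitly ``encoded as an axiom'' in the system---so there is no paper proof to compare against, and your proposal has to stand on its own. Its core is fine: the induction you sketch (split off the leading coefficient $P_k$ in $x_n$, apply the inductive hypothesis to $P_k$ and the univariate root bound when $P_k(\bar v)\neq 0$, union bound to get $(d-k)/q + k/q$) is the standard Schwartz--Zippel argument and correctly yields $\Pr[P(v)=0]\le d/q$ for a nonzero \emph{formal} polynomial of total degree $d$.

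The genuine gap is the very first reduction: ``the lemma as stated is then the instance $d\le q-1$.'' The hypothesis is only that $P$ is a nonzero polynomial \emph{function} over $\FF_q$; its reduced representative has degree at most $q-1$ in \emph{each} variable, hence total degree up to $n(q-1)$, and nothing caps the total degree at $q-1$. This is not mere bookkeeping that ``Ore's refinement takes care of'': without a degree hypothesis the claimed bound is simply false for $n\ge 2$. Take $P(x_1,\dots,x_n)=\prod_{i=1}^n\bigl(1-x_i^{q-1}\bigr)$, the indicator of the origin; it is a nonzero polynomial function, yet $\Pr[P(v)=0]=1-q^{-n}>1-1/q$. (The classical Ore-type bound for polynomials with per-variable degrees $d_i\le q-1$ is $1-\prod_i(1-d_i/q)$, which degrades to $1-q^{-n}$, not $1-1/q$.) So your argument is complete only under an additional hypothesis such as total degree at most $q-1$ (presumably the intended reading in the paper's identity-testing application, where $P-Q$ is meant to have low degree relative to $q$), or in the case $n=1$, where the reduced representative does have degree at most $q-1$. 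As written, the final step from your (correct) $d/q$ bound to the stated $1-1/q$ bound does not follow, and the deferral to ``Ore's refinement'' leaves exactly the point at issue unproved.
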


We encode this lemma as an axiom in our system:
\begin{gather*}
  P \neq 0 \land
  v \sim \unifD(\FF^n_q)
  \Rightarrow
  \Pr [ P(v) = 0 ] \leq 1 - 1/q .
\end{gather*}

With this fact, we can prove the following loop invariant:
\begin{gather*}
  \lstt{P} \neq \lstt{Q} \Rightarrow \Pr[\lstt{res} = \lstt{true}]
  \leq {(1-1/\lstt{q})}^{i} ,
\end{gather*}
finally proving that
\begin{gather*}
  \hoare{\lstt{P} \neq \lstt{Q} }
  {s}
  {\Pr[\lstt{res} = \lstt{true}]
  \leq {(1-1/\lstt{q})}^{\lstts{q}} \leq 1/e} .
\end{gather*}

We have also verified Freivald's algorithm, an amplification-based
algorithm for checking matrix multiplication.

\subsection{Concentration Bounds: Private Running Sums}

Now, we turn to examples involving independence of random
variables. Our first example is drawn from the \emph{differential
  privacy} literature. Given a list of $N$ integers, we add noise from
a two-sided geometric distribution to each entry in order to protect
privacy, and we calculate the \emph{partial sums} of the noisy values:
$x_1, x_1 + x_2, x_1 + x_2 + x_3$, and so on. We wish to measure how
far the noisy sums deviate from the true sums.

In \SYSTEM, we can express this algorithm as follows:
\begin{lstlisting}
  var int X[N], noise[N], out[N];
  var int acc = 0;
  for i = 1 to N do
    noise[i] ~~ twogeom($\epsilon$);
    out[i]   :=  acc + X[i] + noise[i];
    acc      :=  out[i];
  end
\end{lstlisting}
The parameter $\epsilon$ to the noise distribution $\lstt{twogeom}$ is
a numeric parameter controlling the strength of the privacy guarantee,
by changing the magnitude of the noise.

Our loop invariant tracks three pieces of information: (i)
$\lstt{noise[i]}$ is distributed according to
$\lstt{twogeom}(\epsilon)$; (ii) the array $\lstt{noise}$ remains
independent; and (iii) $\lstt{out[i]}$ stores the noisy running sum:
\begin{gather*}
  \forall q \in [\lstt{i}].\; \lstt{out[q]} - \sum_{p \in [q]} \lstt{X}[p] =
  \sum_{p \in [q]} \lstt{noise}[p] .
\end{gather*}

To bound the error introduced by the noise, we need to bound
$\left| \sum_{p \in [q]} \lstt{noise}[p] \right|$. Since we know that
the elements in $\lstt{noise}$ are all independent, we can apply a
concentration bound to bound the probability of a large error in the
$p$-th running sum, concluding:
\begin{gather*}
  \{ \llass \} \;
  s \;
  \left\{ \forall p \in [\lstt{N}].\; \Pr \left[ \left| \textstyle\sum_{i \in [p]}
        \lstt{X}_i \right| \geq T \right] \leq Q(\epsilon, T)/\sqrt{\lstt{N}}
  \right\}
\end{gather*}
for a particular function $Q$ derived from the Berry-Esseen theorem.

\subsection{Hypercube Routing in More Details}

We will begin with the \emph{hypercube routing} algorithm
\citep{valiant1982scheme,Valiant:1981:USP:800076.802479}.  To set the stage,
consider a network where each node is labeled by a bitstring of length $D$, and
two nodes are connected by an edge if and only if the two corresponding labels
differ in exactly one bit position. This network topology is known as a
\emph{hypercube}, a $D$-dimensional version of the standard cube; a simple
example with $D = 3$ is in \cref{fig:hypercube}.  
\begin{figure}
  \begin{center}
  \includegraphics[width=0.6\textwidth]{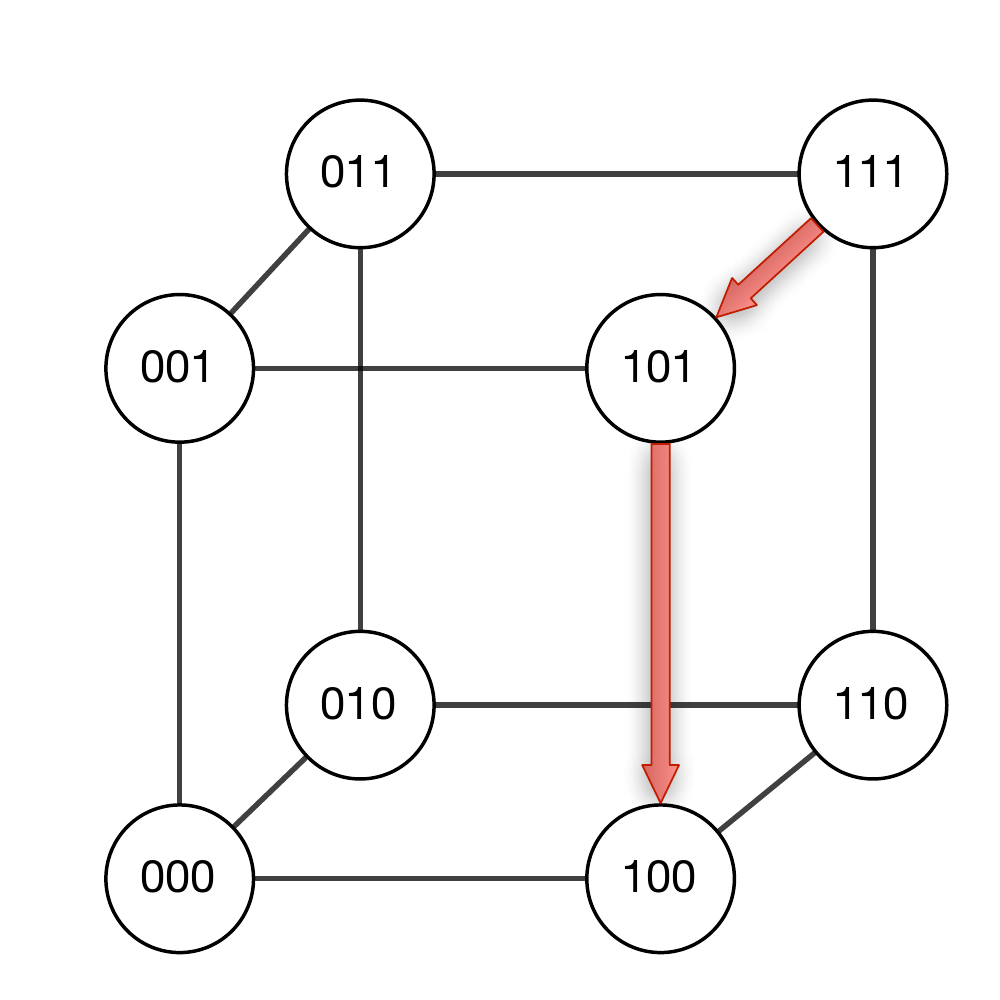}
  \caption{Hypercube path from $111$ to $100$ ($D = 3$)}
  \label{fig:hypercube}
  \end{center}
\end{figure}
In the network, there is
initially one packet at each node, and each packet has a unique destination. Our
goal is to design a routing strategy that will move the packets from node to
node, following the edges, until all packets reach their destination.
Furthermore, the routing should be \emph{oblivious}: to avoid coordination
overhead, each packet must select a path without considering the behavior of
the other packets.

To model the flow of packets, each packet's current position is a node in the
graph. Time proceeds in a series of steps, and at each step at most one packet
can traverse any single edge. If several packets try to use the same edge
simultaneously, one packet will be selected to move (for any selection strategy
that selects \emph{some} packet). The other packets wait at their current
position; these packets are \emph{delayed} and make no progress this step.

The routing strategy is based on \emph{bit fixing}: if the current position has
bitstring $i$, and the target node has bitstring $j$, we compare the bits in $i$
and $j$ from left to right, moving along the edge that corrects the first
differing bit. For instance, if we are at $111$ and we wish to reach $100$,
we will move along the edge corresponding to the second position: from $111$ to
$101$, and then along the edge corresponding to the third position: from $101$
to $100$. See \cref{fig:hypercube} for a picture.

While this strategy is simple and oblivious, there are permutations $\pi$ that
require a total number of steps growing linearly in the number of packets to
route all packets. Valiant proposes a simple modification, so that the total
number of steps grows \emph{logarithmically} in the number of packets.  In the
first phase, each packet $i$ will first select an intermediate destination
$\rho(i)$ uniformly at random from all nodes, and use bit fixing to reach
$\rho(i)$. In the second phase, each packet will use bit fixing to go from
$\rho(i)$ to the destination $\pi(i)$.  We will focus on the first phase,
since the reasoning for the second phase is nearly identical.  We can model the
strategy with the following code, using some syntactic sugar for the $\kfor$
loops; recall that the number of node in a hypercube of dimension $D$ is
$2^D$ so each node can be identified by a number in $[1,2^D]$.
\begin{lstlisting}
proc route ($D$ $T$ : int) :
  var $\rho$, pos, usedBy : node map;
  var nextE : edge;
  pos  :=  Map.init id $2^D$; $\rho$  :=  Map.empty;
  for $i$  :=  1 to $2^D$ do $\rho$[i] ~~ $[1,2^D]$
  for $t$  :=  1 to $T$ do
    usedBy  :=  Map.empty;
    for $i$  :=  1 to $2^D$ do
      if pos$[i] \neq \rho[i]$ then
        nextE  :=  getEdge pos[$i$] $\rho[i]$;
        if usedBy[nextE] = $\bot$ then
          usedBy[nextE]  :=  $i$; // Mark edge used
          pos[$i$]  :=  dest nextE // Move packet
  return (pos, $\rho$)
\end{lstlisting}
We assume that initially the position of the packet $i$ is at node $i$
(see \lstt{Map.init}). Then, we initialize the random intermediate 
destinations $\rho$. The remaining loop encodes the
evaluation of the routing strategy iterated $T$ time.
The variable \lstt{usedBy} is a map that logs if an edge is already
used by a packet, it is empty at the beginning of each iteration.
For each packet, we try to move it across one edge along the path to its
intermediate destination.
The function \lstt{getEdge} returns the next edge to follow, following the
bit-fixing scheme. 
If the packet can progress (its edge is not used), then its current 
position is updated and the edge is marked as used.

Our goal is to show that if the number of timesteps $T$ is $4 D + 1$, then all
packets reach their intermediate destination in at most $T$ steps, except
with a small probability $2^{-2D}$ of failure. That is, the number of
timesteps grows linearly in $D$, logarithmic in the number of packets.
At a high level, the analysis involves three steps.


The first step is to consider a single packet traveling from $i$ to $\rho(i)$,
and to calculate the average number of other packets that share at least one
edge with $i$'s path $P$.  Let $H^\rho(i, j)$ be $1$ if the paths of packets $i$ and
$j$ share at least one edge and $0$ otherwise. Then, the load $R^\rho(i)$ on 
$i$'s path is the sum of $H^\rho(i,j)$ over all the other packets $j$.
%
By using the fact that each intermediate destination in $\rho$ is
uniformly distributed, and by analyzing the number of packets beside $i$'s
that use each edge on $i$'s path, we can upper bound the expected value of
$R^\rho(i)$ by $D/2$.
%
In the second step, we move from a bound on the expectation of $R^\rho(i)$ to a
\emph{high-probability bound}: we want to show that $R^\rho(i) < 3D$ holds except
with a small probability of failure. The key tool is the \emph{Chernoff bound},
which gives high probability bounds of sums of independent samples.
The libraries in \SYSTEM include a mechanized proof of a generalized Chernoff
bound, with the following statement (leaving off some of the parameters):

\begin{lstlisting}
lemma Chernoff $n$($d$:mem distr)($X$:int    ->   mem -> bool):
  let $X m = \Sigma_{i\in[1, n]}~X~i~m$ in
  E$[X | d] \leq \mu \land \lstt{indep}~[0, n]~(X~i)~d$
  => Pr$[X > (1 + \delta)\mu | d] \leq (e^\delta / (1 + \delta)^{1 + \delta})^\mu$
\end{lstlisting}

Returning to the proof, we bounded the expectation of $R^\rho$ in the previous step.
To apply the Chernoff bound, we need to show that for any packet $i$, the
expressions $H^\rho(i, j)$ are independent for all $j$.
This is not exactly true, since $H^\rho(i, j_1)$ and $H^\rho(i, j_2)$ 
both depend on the (random) destination
$\rho(i)$ of packet $i$. However, it suffices to show that these variables are
independent if we fix the value of $\rho(i)$; then we can apply the Chernoff
bound to upper bound $R^\rho$ with high probability.


Finally, we can bound the total delay of any packet. This portion of the proof
rests on an intricate loop invariant assigning an imaginary coin for each delay
step to some packet that crosses $P$.  By showing that each packet holds at most
one coin, we can conclude that the $i$'s delay is at most the number $R^\rho(i)$ of
crossing packets. With our high probability bound from the previous step, we
can show that $T = 4D + 1$ timesteps is sufficient to route all packets $i$ to
$\rho(i)$, except with some small probability:
\[
  \hoare{T = 4D + 1}{\lstt{route}}{\Pr[ \exists i.\; \lstt{pos}[i] \neq
    \lstt{$\rho$}[i] ] \leq 2^{-2D}]}
\]

\subsection{Coupon Collector in More Details}
Our second example is the \emph{coupon collector} process. The
algorithm draws a uniformly random coupon (we have $N$ coupon) on each day, 
terminating when it has drawn at least one of each kind of coupon. 
The code of the algorithm is as follows.
\begin{lstlisting}
proc coupon ($N$ : int) :
  var int cp[$N$], $t$[$N$];
  var int $X$  :=  0;
  for $p$  :=  1 to $N$ do
    ct  :=  0;
    cur ~~ $[1,N]$;
    while cp[cur] = 1 do
      ct  :=  ct + 1;
      cur ~~ $[1,N]$;
    $t$[$p$]  :=  ct;
    cp[cur] :=  1;
    $X$  :=  $X$ + $t$[$p$];
  return $X$
\end{lstlisting}
The array \lstt{cp} keeps track of the coupons seen so far; initially
$\lstt{cp}[i]=0$.  We divide the loop into a sequence of phases (the outer loop)
where in each phase we repeatedly sample coupons and wait until we see a new
coupon (the inner loop). We keep track of the number of steps we spend in each
phase $p$ in $\lstt{t}[p]$, and the total number of steps in $X$.

Our goal is to bound the average number of iterations. The code involves two
nested loops, so we have two loop invariants. The inner loop has a probabilistic
guard: at every iteration, there is a finite probability that the loop
terminates (i.e., if we draw a new coupon), but there is no finite bound on the
number of iterations we need to run.
Since our desired loop invariant is not downwards closed, we must apply the rule
for almost sure termination. We use a variant that is $1$ if we have not seen a
new coupon, and $0$ if we have seen a new coupon. Note that each iteration, we
have a strictly positive probability $\rho(p)$ of seeing a new coupon and
decreasing the variant. Furthermore, the variant is bounded by $1$, and the loop
exits when the variant reaches $0$. So, the side-condition $\sidecond{ASTerm}$
holds.
For the inner loop, we prove that forall $c$
the invariant $\form_{i}$ is preserved:
\[
  \bigvee
  \begin{cases}
    (\detm{(\lstt{cp[cur]} = 1 \Rightarrow c \leq \lstt{ct})}
    \mathrel{\land} \Pr[ \lstt{cp[cur]} = 0 \land c = \lstt{ct} ] = (1 - \rho(p))^c \rho(p)) \\
    \exists k \in [1, c).\;
    \begin{cases}
      \detm{(\lstt{cp[cur]} = 1 \Rightarrow \lstt{ct} = k)
      \land (\lstt{cp[cur]} = 0 \Rightarrow \lstt{ct} \leq k)} \\
      \Pr[ \lstt{cp[cur]} = 1 \land \lstt{ct} = k] = (1 - \rho(p))^k) .
    \end{cases}
  \end{cases}
\]
Note that this is a $t$-closed formula; there is an existential in the
second disjunction, but it has finite domain (for fixed $c$).

For intuition, for every natural number $c$ there are two cases:
Either we have already unrolled more than $c$ iterations, or not. The
first disjunction corresponds to the first case, since loops where the
guard is true all have $\lstt{ct} \geq c$, and the probability of
stopping at $c$ iterations is $(1 - \rho(p))^c
\rho(p)$---we see $c$ old coupons, and then a new one.
Otherwise, we have the second disjunction. The integer $k$ represents
the current number of unfoldings of the loop. If the loop is
continuing then $k = \lstt{ct}$. If the loop is terminated, it terminated
before the current iteration: $\lstt{ct} < k$. Furthermore, the
probability of continuing at iteration $k$ is $(1 -
\rho(p))^k$.
At the end of the loop we have $\detm{\lstt{cp[cur]} = 0}$.
So, by the first conjunct and some manipulations,
\[
\forall c \in \NN.\; \Pr[ c = \lstt{ct} ] =
(1 - \rho(p))^c \rho(p)
\]
holds when the inner loop exits, precisely describing the distribution
of iterations $\lstt{ct}$ as $\geomD(\rho(p))$ by definition.

The outer loop is easier to handle, since the loop has a fixed bound
$N$ on the number of iterations so we can use the rule for
certain termination.  For the loop invariant, we take:
\[
  \form_{out} \eqdef
  \left\{
  \begin{array}{l}
    \forall i \in [p - 1].\; t[i] \sim \lstt{Geom}(\rho(i))
    \mathrel{\land} \detm{\left(X = \sum_{i \in [p - 1]}
        t[i]\right)} \\
    \mathrel{\land} \detm{\left(\sum_{i \in [1,N]} \lstt{cp}[i] = p
        - 1\right)}\\
    \mathrel{\land} \forall i \in [1,N].\;
    \detm{(\lstt{cp}[i] \in \{ 0, 1 \})} .
  \end{array}
  \right.
\]

The first conjunct states that the previous waiting times follow a geometric
distribution with parameter $\rho(i)$; this assertion follows from the previous
reasoning on the inner loop.  The second conjunct asserts that $X$ holds
the total waiting time so far. The final two conjuncts state that there are at
most $p - 1$ flags set in \lstt{cp}.  Thus,
\[
  \{ \llass \}\;
  \lstt{coupon} \;
  \left\{
    \begin{array}{l}
      \forall i \in [1,N] . \; t[i] \sim \lstt{Geom}(\rho(i)) \\
      \mathrel{\land} \detm{X = \sum_{i \in [1,N]} t[i]}
    \end{array}
  \right\}
\]
at the end of the outer loop.  By applying linearity of expectations and a fact
about the expectation of the geometric distribution, we can bound the expected
running time:
\[
  \{ \llass \}\;
  \lstt{coupon} \;
  \left\{
      \ex{X} = \textstyle\sum_{i \in [1,N]} \left(
        \frac{N}{N - i + 1} \right)
  \right\} .
\]

\fi

\end{document}
